\documentclass[aps, prl, floatfix,
  reprint, 
  superscriptaddress, 
  longbibliography
]{revtex4-2} 
\usepackage{graphicx, xcolor, adjustbox} 
\graphicspath{ {./Figure/} }
\usepackage{physics, amsmath, amssymb, amsthm, mathtools, bm} 
\usepackage{booktabs} 
\usepackage{enumitem}
\usepackage{hyperref}
\hypersetup{colorlinks=true, citecolor=cyan, linkcolor=blue, urlcolor=blue}

\theoremstyle{Theorem}
\newtheorem{dfn}{Definition}
\newtheorem{prop}[dfn]{Proposition}
\newtheorem{lem}[dfn]{Lemma}

\newtheorem{rem}[dfn]{Remark}

\renewcommand\paragraph[1]{%
\par\emph{#1---}\kern2pt\relax\ignorespaces} 

\begin{document}
\title{
  Tunable many-body burst in isolated quantum systems
}
\author{Shozo Yamada}
  \email{yamada@cat.phys.s.u-tokyo.ac.jp}
  \affiliation{Department of Physics, University of Tokyo, 7-3-1 Hongo, Bunkyo-ku, Tokyo 113-0033, Japan}
\author{Akihiro Hokkyo}
  \email{hokkyo@cat.phys.s.u-tokyo.ac.jp}
  \affiliation{Department of Physics, University of Tokyo, 7-3-1 Hongo, Bunkyo-ku, Tokyo 113-0033, Japan}
\author{Masahito Ueda}
  \email{ueda@phys.s.u-tokyo.ac.jp}
  \affiliation{Department of Physics, University of Tokyo, 7-3-1 Hongo, Bunkyo-ku, Tokyo 113-0033, Japan}
  \affiliation{RIKEN Center for Emergent Matter Science (CEMS), Wako, Saitama 351-0198, Japan}
  \affiliation{Institute for Physics of Intelligence, University of Tokyo, 7-3-1 Hongo, Bunkyo-ku, Tokyo 113-0033, Japan}

\begin{abstract}

Thermalization in isolated quantum many-body systems can be nonmonotonic, with its process dependent on an initial state.
We propose a numerical method to construct a low-entangled initial state that creates a ``burst''---a  transient deviation of an observable from its thermal equilibrium value---at a designated time.
We apply this method to demonstrate that a burst of magnetization can be realized for a nonintegrable mixed-field Ising chain on a timescale comparable to the onset of quantum scrambling.
Contrary to the typical spreading of information in this regime, the created burst is accompanied by a slow or even negative entanglement growth.
Analytically, we show that a burst becomes probabilistically rare after a long time.
Our results suggest that a nonequilibrium state is maintained for an appropriately chosen initial state until scrambling becomes dominant.
These predictions can be tested with programmable quantum simulators.

\end{abstract}
\maketitle

\paragraph{Introduction}
Ever since von Neumann's seminal work~\cite{Neumann1929-eth}, how irreversible macroscopic thermalization emerges from reversible microscopic laws of quantum mechanics has been actively debated. 
This fundamental problem of quantum thermalization has attracted wide interest~\cite{Polkovnikov2011-review,Eisert2015-review, Gogolin2016-review, D-Alessio2016-review, Mori2018-review} due in large part to experimental advances in ultracold atoms~\cite{Kinoshita2006-ultracold, Trotzky2012-ultracold, Gring2012-ultracold, Langen2013-ultracold, Schreiber2015-ultracoldmbl, Kaufman2016-ultracold, Bernien2017-ultracoldscar,Scherg2021-ultracoldhsf}, trapped ions~\cite{Richerme2014-ion,Clos2016-ion, Smith2016-ionmbl,Neyenhuis2017-ion}, and superconducting qubits~\cite{Neill2016-scq,Blok2021-scq,Chen2021-scq, Zhu2022-scq,Zhang2022-scqscar, Wang2025-scqhsf}.
To resolve this issue, the eigenstate thermalization hypothesis (ETH)~\cite{Deutsch1991-eth, Srednicki1994-eth, Rigol2008-eth} was proposed.
The ETH dictates that observables in the energy eigenbasis have thermal diagonal elements with exponentially small off-diagonal fluctuations.
While several exceptions such as many-body localization~\cite{Basko2006-mbl, Schreiber2015-ultracoldmbl, Imbrie2016-mbl, Smith2016-ionmbl}, quantum many-body scars~\cite{Bernien2017-ultracoldscar, Shiraishi2017-scar, Moudgalya2018-scar, Zhang2022-scqscar}, and Hilbert space fragmentation~\cite{Sala2020-hsf, Khemani2020-hsf, Scherg2021-ultracoldhsf, Wang2025-scqhsf} have been discovered, the validity of the ETH in nonintegrable systems has widely been confirmed numerically~\cite{Steinigeweg2013-eth,Kim2014-eth, Beugeling2014-eth, Beugeling2015-eth,Mondaini2017-eth, Garrison2018-eth,Sugimoto2021-eth}.

While the ETH guarantees thermalization in the infinite-time limit, it leaves the finite-time dynamics~\cite{Berges2004-finite,Bartsch2009-finite,Banuls2011-finite,Short2011-finite, Goldstein2013-finite, Dymarsky2019-finite} largely unconstrained.
In fact, the ETH does not rule out a ``burst''---a transient deviation of an expectation value of an observable from its thermal equilibrium value---which is in sharp contrast to usual monotonic thermalization observed in experiments~\cite{Trotzky2012-ultracold, Kaufman2016-ultracold, Chen2021-scq, Zhu2022-scq,Clos2016-ion}.\looseness=1


Indeed, the authors of Ref.~\cite{Knipschild2020-srd} showed that anomalous relaxation dynamics including a burst is consistent with the ETH by constructing specially designed observables, which are generally nonlocal.
Even for local Hamiltonians and observables, a burst can occur via quantum recurrence~\cite{Bocchieri1957-recurrence, Percival1961-recurrence}; however, the recurrence time is typically double-exponential in the system size~\cite{Peres1982-recurrence}.
A burst can also arise from initial states that exploit time-reversed evolution~\cite{Peres1984-tr} or tailored quantum superpositions incorporating such dynamics~\cite{Dymarsky2019-finite, Ermakov2021-acr}; however, these states generally have high entanglement, and therefore such a burst does not provide insights into macroscopic irreversibility from low-complexity initial conditions.
The experimental realization of the above three possibilities remains challenging due to nonlocality of observables, extremely long timescales, or high-entangled initial states.

In this Letter, we demonstrate that a burst phenomenon of average magnetization can be created at a designated time from a low-entangled initial state for a nonintegrable mixed-field Ising chain (see Fig.~\ref{fig:burst_plot}).
We focus on matrix product states (MPSs)~\cite{Fannes1992-mps, Perez-Garcia2007-mps} in one-dimensional quantum systems as an operationally accessible candidate for such low-entangled states, and utilize the density matrix renormalization group (DMRG) algorithm~\cite{White1992-dmrg,White1993-dmrg,Schollwock2005-dmrg,Schollwock2011-dmrg} to search for an MPS that creates a burst.
This method enables us to create a tailor-made burst for a given Hamiltonian and observable at a designated time.
We also find a slow or even negative entanglement growth before the burst time.

We numerically show that for a fixed and short burst time, a bond dimension independent of the system size is sufficient to create a large burst even for a large system.
This implies that the burst is experimentally realizable via quantum quenches from ground states~\cite{Verstraete2006-mps,Perez-Garcia2007-mps, Hastings2007-mps, Sompet2022-mps} or shallow quantum circuits~\cite{Ran2020-mps, Shirakawa2024-mps, Rudolph2024-mps, Iaconis2024-mps}.
Conversely, our analytical argument based on a local random quantum circuit~\cite{Brandao2016-design} gives a probabilistic no-go result in the long-time regime: as the burst time grows faster than the system size, a burst becomes exponentially rare.
Our method allows us to quantitatively evaluate the maximal transient deviation from equilibrium accessible via low-entangled initial states.
Such a deviation persists until being overwhelmed by quantum scrambling, even in generic systems that eventually thermalize---a prediction testable with programmable quantum simulators.

\begin{figure}
    \centering
    \begin{minipage}[b]{\linewidth}
        \centering
        \includegraphics[width=\linewidth]{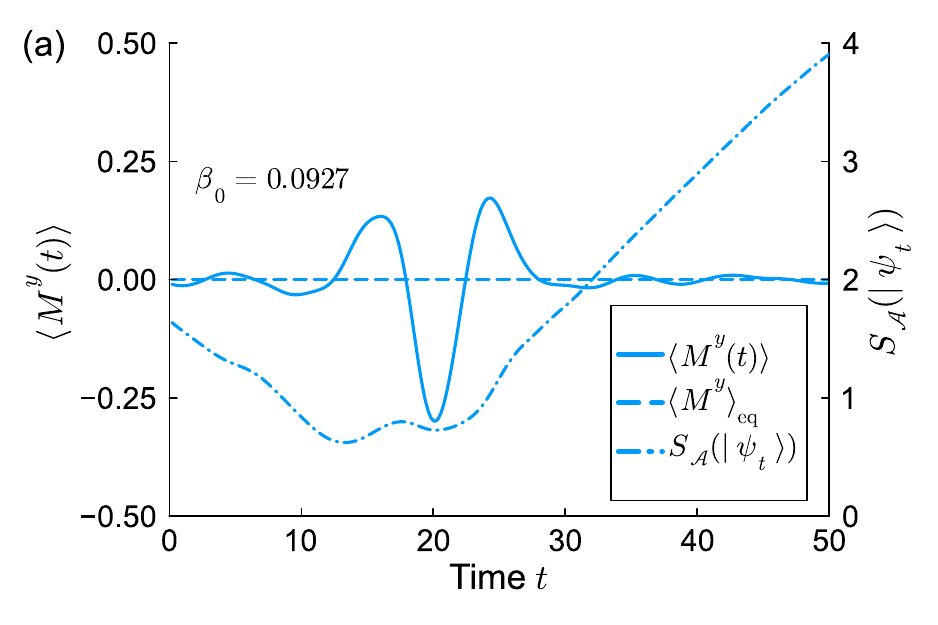}
    \end{minipage}
    \begin{minipage}[b]{\linewidth}
        \centering
        \includegraphics[width=\linewidth]{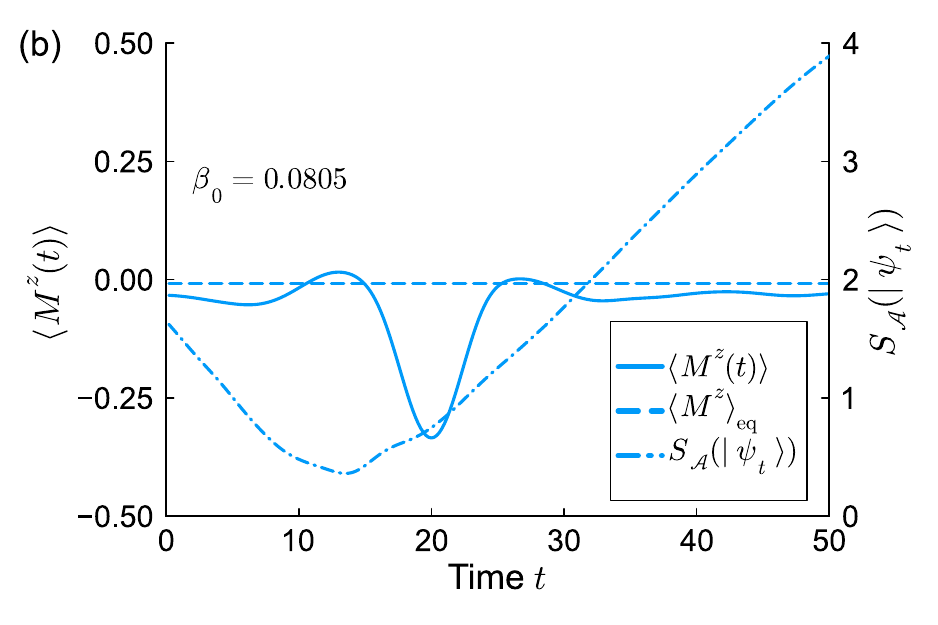}
    \end{minipage}
    \caption{Time evolution of the expectation value $\langle O(t)\rangle$ (solid) and that of entanglement entropy $S_\mathcal{A}(t)$ (dash-dotted, calculated for the half system $\mathcal{A}=\{1,2,\ldots,\lfloor L/2 \rfloor\}$) with $L=40$ for the mixed-field Ising chain~\eqref{eq:Ising}. A dashed line indicates the thermal equilibrium value of $O$. Initial states are obtained by Method 2 with $\tau=20$, $\chi=10$, $\beta=0.1$, and $\lambda_L=72/L^2$. (a) Case of $O=M^y$. (b) Case of $O=M^z$. Note that $\langle M^z\rangle_\mathrm{eq}$ is slightly below zero.}
    \label{fig:burst_plot}
\end{figure}

\paragraph{Methods for creating a burst}
We consider $L$ sites $\{1,2,\ldots,L\}$ on a one-dimensional chain with the open boundary condition.
An MPS can be expressed as
$
    \ket{\psi} = \sum_{\{\sigma_i\}} A_1^{\sigma_1}A_2^{\sigma_2}\cdots A_L^{\sigma_L}\ket{\sigma_1\sigma_2\cdots\sigma_L}
$,
where $\sigma_i$ denotes a state of site $i$ and $A_i^{\sigma_i}$ is a $D_i\times D_{i+1}$ matrix with $D_1=D_{L+1}=1$.
Entanglement entropy for a subsystem $\mathcal{A}$ is defined as $S_\mathcal{A}(\ket{\psi})\coloneqq-\Tr[\rho_\mathcal{A} \ln\rho_\mathcal{A}]$ with $\rho_\mathcal{A}=\Tr_{\mathcal{A}^c} [\ket{\psi}\bra{\psi}]$, where $\mathcal{A}^c$ denotes the complement of $\mathcal{A}$.
Then, the bond dimension $\chi\coloneqq\max_i D_i$ gives an upper bound on entanglement entropy: $S_\mathcal{A}(\ket{\psi})\le \ln\chi$ holds for any subsystem $\mathcal{A}=\{1,2,\ldots,i\}$.

We investigate a burst phenomenon starting from an MPS with a bond dimension independent of the system size $L$.
This choice ensures physical relevance: such an MPS is the exact ground state of a local Hamiltonian~\cite{Perez-Garcia2007-mps}, and can be approximately prepared by a shallow quantum circuit~\cite{Ran2020-mps,Shirakawa2024-mps,Rudolph2024-mps} 
(see End Matter for details).

Now, let $H$, $O$, $\tau$ be a Hamiltonian, a local observable, and a burst time, respectively.
We denote $O(t)\coloneqq e^{iHt}Oe^{-iHt}$ as a time-evolved observable in the Heisenberg picture.
We use the Trotter decomposition~\cite{Trotter1959-trotter,Suzuki1976-trotter} to compute the time evolution and the thermal equilibrium value.
To secure computational feasibility, the bond dimension of an MPS or a matrix product operator (MPO)~\cite{Verstraete2004-mpo,Pirvu2010-mpo} is truncated at each Trotter step.
Truncating the MPO of $O(t)$ at each Trotter step is justified for short times due to the locality of $O$~\cite{Alhambra2021-mpo}.
We perform numerical calculations using the ITensor library~\cite{Fishman2022-itensor, Fishman2022-code}, and confirm that the truncation errors in our setup are indeed small~\cite{MPO-truncation}.

The simplest method (Method 1) for creating a burst of $O$ at $t=\tau$ is as follows: we compute the state $e^{iH\tau}\ket{\Psi_\mathrm{GS}}$ via the time-reversed evolution starting from (one of) the ground state(s) $\ket{\Psi_\mathrm{GS}}$ of $O$, and obtain an initial MPS by truncating this state to a bond dimension $\chi$.
Method 1 is computationally efficient and applied to infinite systems (see End Matter for details).
However, it lacks control over the energy and its fluctuation.
Furthermore, since Method 1 relies on a heuristic truncation, it generally yields a smaller burst compared with the direct optimization of the expectation value $\langle O(\tau)\rangle$.
Taking these into account, we introduce the second method (Method 2), which employs the DMRG algorithm.

In Method 2, we consider the following cost function:
\begin{equation}
    H_\text{DMRG} = O(\tau)+\lambda_L(H-\langle H\rangle_\beta)^2.\label{eq:DMRG}
\end{equation}
Here, $\lambda_L~(\ge0)$ is a system-size dependent penalty weight and $\langle X\rangle_\beta\coloneqq \Tr [X e^{-\beta H}]/\Tr[e^{-\beta H}]$ denotes the thermal equilibrium value of $X$ at the inverse temperature $\beta$.
The cost function $H_\mathrm{DMRG}$ is designed to minimize the expectation value $\langle O(\tau) \rangle$ while suppressing the energy fluctuation around a target energy $\langle H \rangle_\beta$.

The algorithm of Method 2 is as follows:
\begin{enumerate}[nolistsep]
    \item Compute the cost function~\eqref{eq:DMRG} as an MPO.\label{m2:heisenberg}
    \item Perform the DMRG optimization on the cost function~\eqref{eq:DMRG} with a bond dimension of at most $\chi$, and obtain an MPS $\ket{\psi_0}$.
    \item Determine the inverse temperature $\beta_0$ from the condition $\langle H \rangle_{\beta_0}=\mel{\psi_0}{H}{\psi_0}$ and obtain the thermal equilibrium value $\langle O \rangle_\mathrm{eq}\coloneqq\langle O \rangle_{\beta_0}$.
    \item Compute the time evolution $\ket{\psi_t}=e^{-iHt}\ket{\psi_0}$ and obtain a burst amplitude  $\langle O \rangle_\mathrm{eq}-\langle O(\tau) \rangle$.
\end{enumerate}
The DMRG algorithm is usually used for finding the ground state of a local Hamiltonian; however, we utilize it to find the MPS that minimizes the expectation value of a nonlocal cost function with a restriction of a bond dimension.

Method 2 enables us to systematically obtain a tailor-made initial MPS that creates a burst under given conditions, and is used for the finite-size scaling analysis performed in the main text of this Letter. 
This approach allows for explicit control over the target energy and its fluctuation via $\beta$ and $\lambda_L$ including the case of $\lambda_L=0$, which generally yields a larger burst amplitude than Method 1.
Here, we judiciously choose a positive $\lambda_L$ to keep $\beta_0$ close to $\beta$ while simultaneously creating a burst comparable to or even larger than that obtained by Method 1~\cite{method-comparison}.

\paragraph{Burst phenomenon}
We demonstrate a burst phenomenon for a mixed-field Ising chain:
\begin{equation}
    H = \sum_{i=1}^{L-1} J_z S_i^z S_{i+1}^z + \sum_{i=1}^L(h_x S^x_i + h_z S^z_i)\label{eq:Ising}.
\end{equation}
Here, $S_i^{x,y,z}$ is a spin-1/2 operator at site $i$, and we set the parameters as $(J_z, 2h_x, 2h_z) = (1.0, 0.9045, 0.8090)$.
The nonintegrability, or chaotic nature, of this model has been verified both numerically~\cite{Kim2013-ising, Kim2014-eth,Zhou2017-ising} and analytically~\cite{Cao2021-ising, Chiba2024-ising}.
We follow Method 2 and set $\beta=0.1$ for a finite system size.

As an observable $O$ we focus on the average magnetization $M^\alpha=\sum_{i=1}^L S_i^\alpha/L~(\alpha=y,z)$.
For the case of $O=M^y$, diagonal elements with respect to the energy eigenbasis vanish due to time-reversal symmetry.
Therefore, its thermal equilibrium value is always zero and the diagonal ETH trivially holds.

For the above Hamiltonian and observables, burst phenomena with $L=40$ are illustrated in Fig.~\ref{fig:burst_plot}, with $O=M^y$ and $O=M^z$ for (a) and (b), respectively.
The expectation value of each observable shows a strong peak at the designated time $\tau=20$.
This sudden burst arises from an MPS of bond dimension $\chi=10$, which is much smaller than typical volume-law states with $\chi\simeq2^{L/2}\simeq 10^6$.
This indicates that the initial state is a simple state with low entanglement, thereby supporting the accessibility of the burst by a quantum quench.

Crucially, bipartite entanglement entropy shows a slow or even negative increase before the burst time in contrast to a typical linear growth~\cite{Calabrese2005-ee,Kim2013-ising}.
This suggests that local information of the system is temporarily retained in the sense that the reduced density operator of the quantum state remains far from the Gibbs state.

\begin{figure}
    \centering
    \begin{minipage}[b]{\linewidth}
        \centering
        \includegraphics[width=\linewidth]{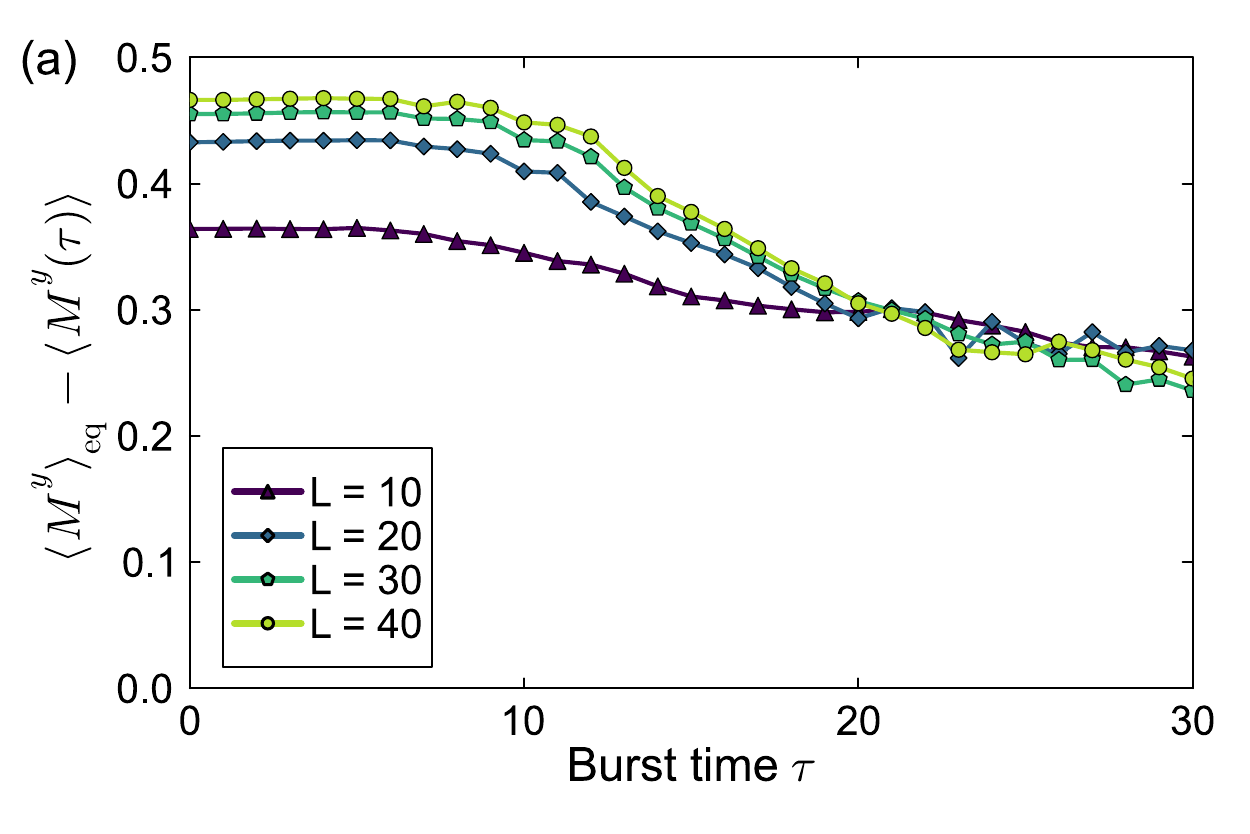}
    \end{minipage}
    \begin{minipage}[b]{\linewidth}
        \centering
        \includegraphics[width=\linewidth]{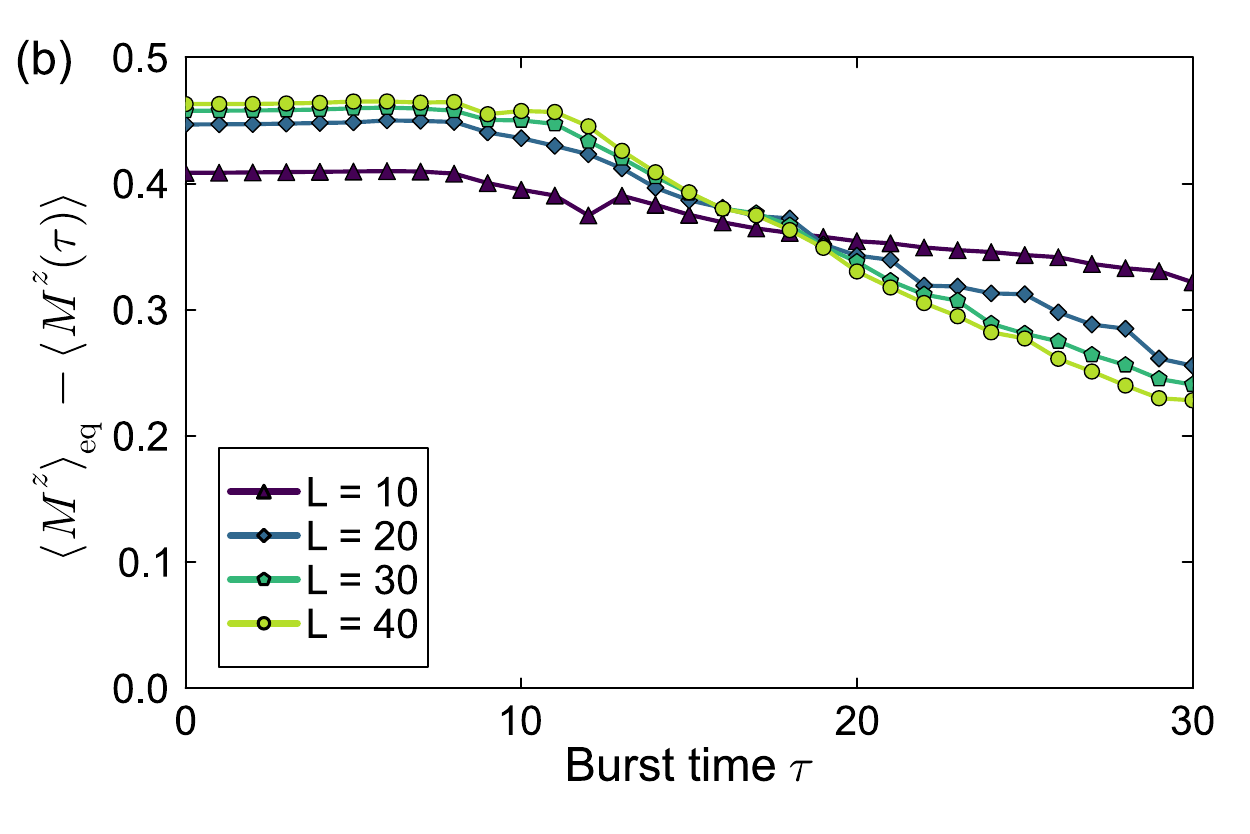}
    \end{minipage}
    \caption{Burst amplitude $\langle O\rangle_\mathrm{eq}-\langle O(\tau) \rangle$ versus the burst time $\tau$ with respect to different system sizes $L$ for the mixed-field Ising chain~\eqref{eq:Ising}. Initial states are obtained by Method 2 with $\chi=10$, $\beta=0.1$, and $\lambda_L=72/L^2$. (a) Case of $O=M^y$. (b) Case of $O=M^z$.}
    \label{fig:Ltau}
\end{figure}

\paragraph{Numerical analysis of a burst amplitude}
We investigate how the burst amplitude $\langle O \rangle_\mathrm{eq}-\langle O(\tau)\rangle$ depends on the burst time $\tau$ under the Hamiltonian~\eqref{eq:Ising} with a fixed bond dimension $\chi=10$.
We vary the system size $L$ and follow Method 2 for each $L$ and $\tau$.
We set $\lambda_L \propto L^{-2}$ since the spectral widths $\Delta_O~(O=M^y,M^z)$ do not depend on $L$ while $\Delta_{H^2}$ scales approximately as $L^2$.

The result in the above setting is shown in Fig.~\ref{fig:Ltau}.
It shows that the burst amplitude for a fixed $\tau$ remains large with increasing $L$, especially for a small $\tau$ ($\tau\lesssim 20$ for $M^y$ and $\tau\lesssim 18$ for $M^z$) where a larger $L$ leads to a larger burst.
This suggests that a large burst can arise on a short-time scale from an MPS with a fixed bond dimension even when the system size is large.
In End Matter, we demonstrate that a burst survives in the thermodynamic limit $L\to \infty$.

The entire time window analyzed here falls within the regime where operator entanglement entropy of the unitary $e^{-iH\tau}$ grows linearly~\cite{Zhou2017-ising}. 
Despite this continuous growth of complexity, the burst amplitude remains nearly constant for $\tau\lesssim 11$. 
This suggests that an appropriately chosen initial state can retain local information even amidst ballistic quantum scrambling. 
The subsequent gradual decay of the burst amplitude suggests that the complexity of the eigenstates of $O(\tau)$ with low eigenvalues eventually exceeds the representational capacity of an MPS as $\tau$ increases, marking the crossover region beyond which scrambling becomes dominant.

\paragraph{Evaluation by a local random circuit}
We return to a more general case with a one-dimensional chain of $L$ sites, each of which has a local dimension $d$.
Here, we consider a random unitary evolution $U$ which follows a measure $\nu^{*s}_{d,L}$, where in each of the $s$ steps an index $i$ is chosen at random from $\{1,2,\ldots,L-1\}$, and a unitary $U_{i,i+1}$ drawn from the Haar measure on $\mathbb{U}(d^2)$ acts upon qudits $i$ and $i+1$~\cite{Brandao2016-design}.
This setting is to approximate the time evolution by a random time-dependent, 2-local (nearest-neighbor interaction) Hamiltonian~\cite{Poulin2011-design}.
Since generic random circuits do not conserve energy, the system thermalizes toward the infinite-temperature state.
Therefore, we quantify the burst amplitude as the deviation from the infinite-temperature average $\langle O\rangle_{\beta=0}=\Tr[O]/d^L$.

First, we consider a burst starting from a fixed initial state $\ket{\psi}$.
In this local random circuit, the probability that the burst is larger than or equal to $\Delta_O a$ is bounded from above as follows~\cite{Brandao2016-design,Low2009-design,Low2010-design,Aubrun2024-design}:
\begin{align}
    \text{Pr}_{U\sim\nu^{*s}_{d,L}}&\qty[\abs{\mel{\psi}{U^\dag O U}{\psi}-\frac{\Tr[O]}{d^L}}\ge \Delta_O a]\le2\qty(\frac{m}{d^L a^2})^m,\notag\\
    &\mathbb{Z}_{\ge 0}\ni m\le \frac{k}{2},~k=\left\lfloor \qty(\frac{s}{CL^2 d^2\ln d})^{1/11}\right\rfloor,\label{eq:LRProb}
\end{align}
where $C$ is a positive numerical constant. (The proofs for Eqs.~\eqref{eq:LRProb} and \eqref{eq:LRMPS} are given in Supplemental Material.)
This inequality reflects the concentration of measure induced by approximate unitary $k$-designs, which improves as the number of gates $s$ increases~\cite{Brandao2016-design}.

We extend this bound in Eq.~\eqref{eq:LRProb} to the entire manifold $\mathcal{M}_\chi$ of normalized MPSs whose bond dimensions are less than or equal to $\chi$.
We denote 
\begin{equation}
    P_{\chi,a}\coloneqq \mathrm{Pr}_{U\sim \nu_{d,L}^{*s}}\sup_{\ket{\psi}\in\mathcal{M}_\chi}\qty[\abs{\mel{\psi}{U^\dagger O U}{\psi}-\frac{\Tr[O]}{d^L}}\ge \Delta_O a]
\end{equation}
as the probability of finding a burst from a state in $\mathcal{M}_\chi$.
For a bond dimension $\chi$ which is independent of $L$, the logarithm of $P_{\chi,a}$ is approximately bounded as
\begin{equation}
    \ln P_{\chi,a}\lesssim -m\ln(md^L)+2(dL\chi^2+m)\ln\frac{dL\chi^2+m}{a}.\label{eq:LRMPS}
\end{equation}

In the local random circuit, the total number of gates $s$ should be proportional to the system size $L$ and the time $\tau$ to imitate the time evolution by a 2-local Hamiltonian.
Then, $k\simeq C'(\tau/L)^{1/11}$ holds for $\tau\gg L$, where $C'$ is a constant which depends only on $d$.

The asymptotic behavior of the right-hand side of Eq.~\eqref{eq:LRMPS} is summarized in Table~\ref{tab:LRMPS}.
This quantity does not decay with $L$ for $\tau\ll L(\ln L)^{11}$, which is interpreted as an upper bound on the time scale of a stable burst.
For $L(\ln L)^{11} \ll \tau \ll Ld^{11L}$, the upper bound of $P_{\chi,a}$ decays exponentially with $L^{10/11}\tau^{1/11}$.
For $Ld^{11L}\ll \tau$, $m\simeq d^L a^2/e$ gives a tighter bound than $m\simeq k/2$, and the upper bound decays double-exponentially with $L$.
These results altogether suggest that in the long-time regime, a large burst cannot be created from any MPS for most time-dependent local Hamiltonians.

We have employed local random circuits to leverage rigorous results of approximate $k$-designs, noting that this setting corresponds to time-dependent Hamiltonians.
However, our burst-creating methods are still applicable even in the time-dependent setting.
Provided that similar rigorous bounds can be established for circuits with symmetries~\cite{Li2024-design,Hearth2025-design,Mitsuhashi2025-design}---and ultimately extended to energy conservation---the probabilistic evaluation of a burst would follow analogously.
We could also establish a similar bound on entanglement entropy by using the concentration of measure for approximate $k$-designs~\cite{Brandao2016-design,Low2009-design}.

\begin{table}
\centering
    \caption{Choice of $m$ and the leading term (LT) on the right-hand side of Eq.~\eqref{eq:LRMPS} for different regions of the burst time $\tau$. The integer $m$ is chosen to satisfy $m\le k/2$, where $k\simeq C'(\tau/L)^{1/11}$ holds for some constant $C'$ when $\tau\gg L$.}
    \begin{tabular}{ccc}
        \toprule
        $\tau$ & $m$ & LT in Eq.~\eqref{eq:LRMPS}  \\
        \midrule\addlinespace
        $\tau\ll L(\ln L)^{11}$ & $~\lesssim\dfrac{C'}{2}\qty(\dfrac{\tau}{L})^{1/11}~$ & $2dL\chi^2 \ln L$\\ \addlinespace
        $L(\ln L)^{11}\ll\tau\ll Ld^{11L}$ & $~\simeq\dfrac{C'}{2}\qty(\dfrac{\tau}{L})^{1/11}~$ & $-\dfrac{C'}{2}L^{10/11}\tau^{1/11}\ln d$\\ \addlinespace
        $Ld^{11L} \ll \tau$ & $~\simeq\dfrac{d^L a^2}{e}$~ & $-\dfrac{d^L a^2}{e}$\\\addlinespace
        \bottomrule
    \end{tabular}
    \label{tab:LRMPS}
\end{table}

\paragraph{Discussion}
We have demonstrated a burst phenomenon of average magnetization starting from a low-entangled state for a nonintegrable mixed-field Ising chain.
By utilizing the DMRG algorithm, we have identified an MPS that creates a large burst at a designated time.
We have found that in the short-time regime when quantum scrambling sets in, a large burst accompanied by a slow or even negative entanglement growth arises from an MPS with a fixed bond dimension even for a large system.
This means that the burst from a low-entangled initial state can be realized by a quantum quench.
Conversely, at long times, the burst amplitude becomes small, which is consistent with our analytical bounds derived from local random circuits.
Our results suggest that starting from an appropriately chosen low-entangled state---rather than a typical one---allows the system to stay out of equilibrium for a prolonged time.
However, the state is eventually overwhelmed by growing quantum scrambling, which drives the system into equilibrium.

A burst can be regarded as a transient defiance against thermalization: even if a local observable satisfies the ETH and initially takes its expectation value near the thermal equilibrium value, it can subsequently exhibit a deviation accompanied by an anomalous suppression of entanglement entropy.
We have shown that such an exotic phenomenon can arise from low-entangled states that can be systematically constructed for generic local Hamiltonians and observables.
Moreover, we expect that our variational method can be adapted to target other nonequilibrium trajectories---such as revival and oscillation---by modifying the cost function.
Such studies may offer insights into quantum thermalization beyond the infinite-time limit.

An interesting question regarding a burst is whether it happens when a system includes long-range interactions.
We expect that a burst decays faster due to stronger quantum scrambling; however, it has yet to be confirmed for large systems due to computational cost.
Such a study may shed new light on how locality of interactions affects thermalization processes.

A burst phenomenon offers a potential avenue for quantum metrology.
The burst amplitude of an intensive observable remains $\order{L^0}$ unlike thermal fluctuations, which are suppressed as $\order{L^{-1/2}}$.
This high signal-to-noise ratio could be exploited in quantum sensors and simulators: by comparing  theoretical predictions with experimental results, one could estimate system parameters or benchmark the performance of quantum simulators.
While such applications were proposed in Ref.~\cite{Ermakov2021-acr}, we have improved their feasibility through simplified preparation of initial states.

\paragraph{Acknowledgments}
We thank Masaya Nakagawa for fruitful discussions and giving the ``burst'' its name.
We also thank Shion Yamashika for encouraging us to use iMPSs for the analysis in the thermodynamics limit.
S.~Y. and A.~H. were supported by Forefront Physics and Mathematics Program to Drive Transformation (FoPM), a World-leading Innovative Graduate Study (WINGS) Program, the University of Tokyo. 
A.~H. was also supported by KAKENHI Grant No. JP25KJ0833 from the Japan Society for the Promotion of Science (JSPS) and JSR Fellowship, the University of Tokyo.
M.~U. was supported by KAKENHI Grant No. JP22H01152 from the JSPS. 
We gratefully acknowledge the support from the CREST program ``Quantum Frontiers'' (Grant No. JPMJCR23I1) by the Japan Science and Technology Agency (JST).
This work was supported by JST as part of Adopting Sustainable Partnerships for Innovative Research Ecosystem (ASPIRE), Grant No. JPMJAP25A1.

\paragraph{Data availability}
The data that support the findings of this article are available upon reasonable request.

\bibliography{main}

\begin{thebibliography}{87}%
\makeatletter
\providecommand \@ifxundefined [1]{%
 \@ifx{#1\undefined}
}%
\providecommand \@ifnum [1]{%
 \ifnum #1\expandafter \@firstoftwo
 \else \expandafter \@secondoftwo
 \fi
}%
\providecommand \@ifx [1]{%
 \ifx #1\expandafter \@firstoftwo
 \else \expandafter \@secondoftwo
 \fi
}%
\providecommand \natexlab [1]{#1}%
\providecommand \enquote  [1]{``#1''}%
\providecommand \bibnamefont  [1]{#1}%
\providecommand \bibfnamefont [1]{#1}%
\providecommand \citenamefont [1]{#1}%
\providecommand \href@noop [0]{\@secondoftwo}%
\providecommand \href [0]{\begingroup \@sanitize@url \@href}%
\providecommand \@href[1]{\@@startlink{#1}\@@href}%
\providecommand \@@href[1]{\endgroup#1\@@endlink}%
\providecommand \@sanitize@url [0]{\catcode `\\12\catcode `\$12\catcode `\&12\catcode `\#12\catcode `\^12\catcode `\_12\catcode `\%12\relax}%
\providecommand \@@startlink[1]{}%
\providecommand \@@endlink[0]{}%
\providecommand \url  [0]{\begingroup\@sanitize@url \@url }%
\providecommand \@url [1]{\endgroup\@href {#1}{\urlprefix }}%
\providecommand \urlprefix  [0]{URL }%
\providecommand \Eprint [0]{\href }%
\providecommand \doibase [0]{https://doi.org/}%
\providecommand \selectlanguage [0]{\@gobble}%
\providecommand \bibinfo  [0]{\@secondoftwo}%
\providecommand \bibfield  [0]{\@secondoftwo}%
\providecommand \translation [1]{[#1]}%
\providecommand \BibitemOpen [0]{}%
\providecommand \bibitemStop [0]{}%
\providecommand \bibitemNoStop [0]{.\EOS\space}%
\providecommand \EOS [0]{\spacefactor3000\relax}%
\providecommand \BibitemShut  [1]{\csname bibitem#1\endcsname}%
\let\auto@bib@innerbib\@empty
\bibitem [{\citenamefont {Neumann}(1929)}]{Neumann1929-eth}%
  \BibitemOpen
  \bibfield  {author} {\bibinfo {author} {\bibfnamefont {J.~v.}\ \bibnamefont {Neumann}},\ }\bibfield  {title} {\bibinfo {title} {{Beweis des Ergodensatzes und des $H$-Theorems in der neuen Mechanik}},\ }\href {https://doi.org/10.1007/bf01339852} {\bibfield  {journal} {\bibinfo  {journal} {Zeitschrift f{\"{u}}r Physik}\ }\textbf {\bibinfo {volume} {57}},\ \bibinfo {pages} {30} (\bibinfo {year} {1929})}\BibitemShut {NoStop}%
\bibitem [{\citenamefont {Polkovnikov}\ \emph {et~al.}(2011)\citenamefont {Polkovnikov}, \citenamefont {Sengupta}, \citenamefont {Silva},\ and\ \citenamefont {Vengalattore}}]{Polkovnikov2011-review}%
  \BibitemOpen
  \bibfield  {author} {\bibinfo {author} {\bibfnamefont {A.}~\bibnamefont {Polkovnikov}}, \bibinfo {author} {\bibfnamefont {K.}~\bibnamefont {Sengupta}}, \bibinfo {author} {\bibfnamefont {A.}~\bibnamefont {Silva}},\ and\ \bibinfo {author} {\bibfnamefont {M.}~\bibnamefont {Vengalattore}},\ }\bibfield  {title} {\bibinfo {title} {{\textit{Colloquium}: Nonequilibrium dynamics of closed interacting quantum systems}},\ }\href {https://doi.org/10.1103/revmodphys.83.863} {\bibfield  {journal} {\bibinfo  {journal} {Reviews of Modern Physics}\ }\textbf {\bibinfo {volume} {83}},\ \bibinfo {pages} {863} (\bibinfo {year} {2011})}\BibitemShut {NoStop}%
\bibitem [{\citenamefont {Eisert}\ \emph {et~al.}(2015)\citenamefont {Eisert}, \citenamefont {Friesdorf},\ and\ \citenamefont {Gogolin}}]{Eisert2015-review}%
  \BibitemOpen
  \bibfield  {author} {\bibinfo {author} {\bibfnamefont {J.}~\bibnamefont {Eisert}}, \bibinfo {author} {\bibfnamefont {M.}~\bibnamefont {Friesdorf}},\ and\ \bibinfo {author} {\bibfnamefont {C.}~\bibnamefont {Gogolin}},\ }\bibfield  {title} {\bibinfo {title} {{Quantum many-body systems out of equilibrium}},\ }\href {https://doi.org/10.1038/nphys3215} {\bibfield  {journal} {\bibinfo  {journal} {Nature Physics}\ }\textbf {\bibinfo {volume} {11}},\ \bibinfo {pages} {124} (\bibinfo {year} {2015})}\BibitemShut {NoStop}%
\bibitem [{\citenamefont {Gogolin}\ and\ \citenamefont {Eisert}(2016)}]{Gogolin2016-review}%
  \BibitemOpen
  \bibfield  {author} {\bibinfo {author} {\bibfnamefont {C.}~\bibnamefont {Gogolin}}\ and\ \bibinfo {author} {\bibfnamefont {J.}~\bibnamefont {Eisert}},\ }\bibfield  {title} {\bibinfo {title} {{Equilibration, thermalisation, and the emergence of statistical mechanics in closed quantum systems}},\ }\href {https://doi.org/10.1088/0034-4885/79/5/056001} {\bibfield  {journal} {\bibinfo  {journal} {Reports on Progress in Physics}\ }\textbf {\bibinfo {volume} {79}},\ \bibinfo {pages} {056001} (\bibinfo {year} {2016})}\BibitemShut {NoStop}%
\bibitem [{\citenamefont {D'Alessio}\ \emph {et~al.}(2016)\citenamefont {D'Alessio}, \citenamefont {Kafri}, \citenamefont {Polkovnikov},\ and\ \citenamefont {Rigol}}]{D-Alessio2016-review}%
  \BibitemOpen
  \bibfield  {author} {\bibinfo {author} {\bibfnamefont {L.}~\bibnamefont {D'Alessio}}, \bibinfo {author} {\bibfnamefont {Y.}~\bibnamefont {Kafri}}, \bibinfo {author} {\bibfnamefont {A.}~\bibnamefont {Polkovnikov}},\ and\ \bibinfo {author} {\bibfnamefont {M.}~\bibnamefont {Rigol}},\ }\bibfield  {title} {\bibinfo {title} {{From quantum chaos and eigenstate thermalization to statistical mechanics and thermodynamics}},\ }\href {https://doi.org/10.1080/00018732.2016.1198134} {\bibfield  {journal} {\bibinfo  {journal} {Advances in Physics}\ }\textbf {\bibinfo {volume} {65}},\ \bibinfo {pages} {239} (\bibinfo {year} {2016})}\BibitemShut {NoStop}%
\bibitem [{\citenamefont {Mori}\ \emph {et~al.}(2018)\citenamefont {Mori}, \citenamefont {Ikeda}, \citenamefont {Kaminishi},\ and\ \citenamefont {Ueda}}]{Mori2018-review}%
  \BibitemOpen
  \bibfield  {author} {\bibinfo {author} {\bibfnamefont {T.}~\bibnamefont {Mori}}, \bibinfo {author} {\bibfnamefont {T.~N.}\ \bibnamefont {Ikeda}}, \bibinfo {author} {\bibfnamefont {E.}~\bibnamefont {Kaminishi}},\ and\ \bibinfo {author} {\bibfnamefont {M.}~\bibnamefont {Ueda}},\ }\bibfield  {title} {\bibinfo {title} {{Thermalization and prethermalization in isolated quantum systems: a theoretical overview}},\ }\href {https://doi.org/10.1088/1361-6455/aabcdf} {\bibfield  {journal} {\bibinfo  {journal} {Journal of physics B: Atomic, Molecular, and Optical Physics}\ }\textbf {\bibinfo {volume} {51}},\ \bibinfo {pages} {112001} (\bibinfo {year} {2018})}\BibitemShut {NoStop}%
\bibitem [{\citenamefont {Kinoshita}\ \emph {et~al.}(2006)\citenamefont {Kinoshita}, \citenamefont {Wenger},\ and\ \citenamefont {Weiss}}]{Kinoshita2006-ultracold}%
  \BibitemOpen
  \bibfield  {author} {\bibinfo {author} {\bibfnamefont {T.}~\bibnamefont {Kinoshita}}, \bibinfo {author} {\bibfnamefont {T.}~\bibnamefont {Wenger}},\ and\ \bibinfo {author} {\bibfnamefont {D.~S.}\ \bibnamefont {Weiss}},\ }\bibfield  {title} {\bibinfo {title} {{A quantum Newton's cradle}},\ }\href {https://doi.org/10.1038/nature04693} {\bibfield  {journal} {\bibinfo  {journal} {Nature}\ }\textbf {\bibinfo {volume} {440}},\ \bibinfo {pages} {900} (\bibinfo {year} {2006})}\BibitemShut {NoStop}%
\bibitem [{\citenamefont {Trotzky}\ \emph {et~al.}(2012)\citenamefont {Trotzky}, \citenamefont {Chen}, \citenamefont {Flesch}, \citenamefont {McCulloch},\ and\ \citenamefont {{others}}}]{Trotzky2012-ultracold}%
  \BibitemOpen
  \bibfield  {author} {\bibinfo {author} {\bibfnamefont {S.}~\bibnamefont {Trotzky}}, \bibinfo {author} {\bibfnamefont {Y.~A.}\ \bibnamefont {Chen}}, \bibinfo {author} {\bibfnamefont {A.}~\bibnamefont {Flesch}}, \bibinfo {author} {\bibfnamefont {I.~P.}\ \bibnamefont {McCulloch}},\ and\ \bibinfo {author} {\bibnamefont {{others}}},\ }\bibfield  {title} {\bibinfo {title} {{Probing the relaxation towards equilibrium in an isolated strongly correlated one-dimensional Bose gas}},\ }\href {https://www.nature.com/articles/nphys2232} {\bibfield  {journal} {\bibinfo  {journal} {Nature Physics}\ }\textbf {\bibinfo {volume} {8}},\ \bibinfo {pages} {325} (\bibinfo {year} {2012})}\BibitemShut {NoStop}%
\bibitem [{\citenamefont {Gring}\ \emph {et~al.}(2012)\citenamefont {Gring}, \citenamefont {Kuhnert}, \citenamefont {Langen}, \citenamefont {Kitagawa}, \citenamefont {Rauer}, \citenamefont {Schreitl}, \citenamefont {Mazets}, \citenamefont {Smith}, \citenamefont {Demler},\ and\ \citenamefont {Schmiedmayer}}]{Gring2012-ultracold}%
  \BibitemOpen
  \bibfield  {author} {\bibinfo {author} {\bibfnamefont {M.}~\bibnamefont {Gring}}, \bibinfo {author} {\bibfnamefont {M.}~\bibnamefont {Kuhnert}}, \bibinfo {author} {\bibfnamefont {T.}~\bibnamefont {Langen}}, \bibinfo {author} {\bibfnamefont {T.}~\bibnamefont {Kitagawa}}, \bibinfo {author} {\bibfnamefont {B.}~\bibnamefont {Rauer}}, \bibinfo {author} {\bibfnamefont {M.}~\bibnamefont {Schreitl}}, \bibinfo {author} {\bibfnamefont {I.}~\bibnamefont {Mazets}}, \bibinfo {author} {\bibfnamefont {D.~A.}\ \bibnamefont {Smith}}, \bibinfo {author} {\bibfnamefont {E.}~\bibnamefont {Demler}},\ and\ \bibinfo {author} {\bibfnamefont {J.}~\bibnamefont {Schmiedmayer}},\ }\bibfield  {title} {\bibinfo {title} {{Relaxation and prethermalization in an isolated quantum system}},\ }\href {https://doi.org/10.1126/science.1224953} {\bibfield  {journal} {\bibinfo  {journal} {Science}\ }\textbf {\bibinfo {volume} {337}},\ \bibinfo {pages} {1318} (\bibinfo {year} {2012})}\BibitemShut {NoStop}%
\bibitem [{\citenamefont {Langen}\ \emph {et~al.}(2013)\citenamefont {Langen}, \citenamefont {Geiger}, \citenamefont {Kuhnert}, \citenamefont {Rauer},\ and\ \citenamefont {Schmiedmayer}}]{Langen2013-ultracold}%
  \BibitemOpen
  \bibfield  {author} {\bibinfo {author} {\bibfnamefont {T.}~\bibnamefont {Langen}}, \bibinfo {author} {\bibfnamefont {R.}~\bibnamefont {Geiger}}, \bibinfo {author} {\bibfnamefont {M.}~\bibnamefont {Kuhnert}}, \bibinfo {author} {\bibfnamefont {B.}~\bibnamefont {Rauer}},\ and\ \bibinfo {author} {\bibfnamefont {J.}~\bibnamefont {Schmiedmayer}},\ }\bibfield  {title} {\bibinfo {title} {{Local emergence of thermal correlations in an isolated quantum many-body system}},\ }\href {https://doi.org/10.1038/nphys2739} {\bibfield  {journal} {\bibinfo  {journal} {Nature Physics}\ }\textbf {\bibinfo {volume} {9}},\ \bibinfo {pages} {640} (\bibinfo {year} {2013})}\BibitemShut {NoStop}%
\bibitem [{\citenamefont {Schreiber}\ \emph {et~al.}(2015)\citenamefont {Schreiber}, \citenamefont {Hodgman}, \citenamefont {Bordia}, \citenamefont {L{\"{u}}schen}, \citenamefont {Fischer}, \citenamefont {Vosk}, \citenamefont {Altman}, \citenamefont {Schneider},\ and\ \citenamefont {Bloch}}]{Schreiber2015-ultracoldmbl}%
  \BibitemOpen
  \bibfield  {author} {\bibinfo {author} {\bibfnamefont {M.}~\bibnamefont {Schreiber}}, \bibinfo {author} {\bibfnamefont {S.~S.}\ \bibnamefont {Hodgman}}, \bibinfo {author} {\bibfnamefont {P.}~\bibnamefont {Bordia}}, \bibinfo {author} {\bibfnamefont {H.~P.}\ \bibnamefont {L{\"{u}}schen}}, \bibinfo {author} {\bibfnamefont {M.~H.}\ \bibnamefont {Fischer}}, \bibinfo {author} {\bibfnamefont {R.}~\bibnamefont {Vosk}}, \bibinfo {author} {\bibfnamefont {E.}~\bibnamefont {Altman}}, \bibinfo {author} {\bibfnamefont {U.}~\bibnamefont {Schneider}},\ and\ \bibinfo {author} {\bibfnamefont {I.}~\bibnamefont {Bloch}},\ }\bibfield  {title} {\bibinfo {title} {{Observation of many-body localization of interacting fermions in a quasirandom optical lattice}},\ }\href {https://doi.org/10.1126/science.aaa7432} {\bibfield  {journal} {\bibinfo  {journal} {Science}\ }\textbf {\bibinfo {volume} {349}},\ \bibinfo {pages} {842} (\bibinfo {year} {2015})}\BibitemShut {NoStop}%
\bibitem [{\citenamefont {Kaufman}\ \emph {et~al.}(2016)\citenamefont {Kaufman}, \citenamefont {Tai}, \citenamefont {Lukin}, \citenamefont {Rispoli}, \citenamefont {Schittko}, \citenamefont {Preiss},\ and\ \citenamefont {Greiner}}]{Kaufman2016-ultracold}%
  \BibitemOpen
  \bibfield  {author} {\bibinfo {author} {\bibfnamefont {A.~M.}\ \bibnamefont {Kaufman}}, \bibinfo {author} {\bibfnamefont {M.~E.}\ \bibnamefont {Tai}}, \bibinfo {author} {\bibfnamefont {A.}~\bibnamefont {Lukin}}, \bibinfo {author} {\bibfnamefont {M.}~\bibnamefont {Rispoli}}, \bibinfo {author} {\bibfnamefont {R.}~\bibnamefont {Schittko}}, \bibinfo {author} {\bibfnamefont {P.~M.}\ \bibnamefont {Preiss}},\ and\ \bibinfo {author} {\bibfnamefont {M.}~\bibnamefont {Greiner}},\ }\bibfield  {title} {\bibinfo {title} {{Quantum thermalization through entanglement in an isolated many-body system}},\ }\href {https://doi.org/10.1126/science.aaf6725} {\bibfield  {journal} {\bibinfo  {journal} {Science}\ }\textbf {\bibinfo {volume} {353}},\ \bibinfo {pages} {794} (\bibinfo {year} {2016})}\BibitemShut {NoStop}%
\bibitem [{\citenamefont {Bernien}\ \emph {et~al.}(2017)\citenamefont {Bernien}, \citenamefont {Schwartz}, \citenamefont {Keesling}, \citenamefont {Levine}, \citenamefont {Omran}, \citenamefont {Pichler}, \citenamefont {Choi}, \citenamefont {Zibrov}, \citenamefont {Endres}, \citenamefont {Greiner}, \citenamefont {Vuleti\'{c}},\ and\ \citenamefont {Lukin}}]{Bernien2017-ultracoldscar}%
  \BibitemOpen
  \bibfield  {author} {\bibinfo {author} {\bibfnamefont {H.}~\bibnamefont {Bernien}}, \bibinfo {author} {\bibfnamefont {S.}~\bibnamefont {Schwartz}}, \bibinfo {author} {\bibfnamefont {A.}~\bibnamefont {Keesling}}, \bibinfo {author} {\bibfnamefont {H.}~\bibnamefont {Levine}}, \bibinfo {author} {\bibfnamefont {A.}~\bibnamefont {Omran}}, \bibinfo {author} {\bibfnamefont {H.}~\bibnamefont {Pichler}}, \bibinfo {author} {\bibfnamefont {S.}~\bibnamefont {Choi}}, \bibinfo {author} {\bibfnamefont {A.}~\bibnamefont {Zibrov}}, \bibinfo {author} {\bibfnamefont {M.}~\bibnamefont {Endres}}, \bibinfo {author} {\bibfnamefont {M.}~\bibnamefont {Greiner}}, \bibinfo {author} {\bibfnamefont {V.}~\bibnamefont {Vuleti\'{c}}},\ and\ \bibinfo {author} {\bibfnamefont {M.}~\bibnamefont {Lukin}},\ }\bibfield  {title} {\bibinfo {title} {{Probing many-body dynamics on a 51-atom quantum simulator}},\ }\href {https://doi.org/10.1038/nature24622} {\bibfield  {journal} {\bibinfo  {journal} {Nature}\ }\textbf {\bibinfo {volume} {551}},\
  \bibinfo {pages} {579} (\bibinfo {year} {2017})}\BibitemShut {NoStop}%
\bibitem [{\citenamefont {Scherg}\ \emph {et~al.}(2021)\citenamefont {Scherg}, \citenamefont {Kohlert}, \citenamefont {Sala}, \citenamefont {Pollmann}, \citenamefont {Hebbe~Madhusudhana}, \citenamefont {Bloch},\ and\ \citenamefont {Aidelsburger}}]{Scherg2021-ultracoldhsf}%
  \BibitemOpen
  \bibfield  {author} {\bibinfo {author} {\bibfnamefont {S.}~\bibnamefont {Scherg}}, \bibinfo {author} {\bibfnamefont {T.}~\bibnamefont {Kohlert}}, \bibinfo {author} {\bibfnamefont {P.}~\bibnamefont {Sala}}, \bibinfo {author} {\bibfnamefont {F.}~\bibnamefont {Pollmann}}, \bibinfo {author} {\bibfnamefont {B.}~\bibnamefont {Hebbe~Madhusudhana}}, \bibinfo {author} {\bibfnamefont {I.}~\bibnamefont {Bloch}},\ and\ \bibinfo {author} {\bibfnamefont {M.}~\bibnamefont {Aidelsburger}},\ }\bibfield  {title} {\bibinfo {title} {{Observing non-ergodicity due to kinetic constraints in tilted Fermi-Hubbard chains}},\ }\href {https://doi.org/10.1038/s41467-021-24726-0} {\bibfield  {journal} {\bibinfo  {journal} {Nature Communications}\ }\textbf {\bibinfo {volume} {12}},\ \bibinfo {pages} {4490} (\bibinfo {year} {2021})}\BibitemShut {NoStop}%
\bibitem [{\citenamefont {Richerme}\ \emph {et~al.}(2014)\citenamefont {Richerme}, \citenamefont {Gong}, \citenamefont {Lee}, \citenamefont {Senko}, \citenamefont {Smith}, \citenamefont {Foss-Feig}, \citenamefont {Michalakis}, \citenamefont {Gorshkov},\ and\ \citenamefont {Monroe}}]{Richerme2014-ion}%
  \BibitemOpen
  \bibfield  {author} {\bibinfo {author} {\bibfnamefont {P.}~\bibnamefont {Richerme}}, \bibinfo {author} {\bibfnamefont {Z.-X.}\ \bibnamefont {Gong}}, \bibinfo {author} {\bibfnamefont {A.}~\bibnamefont {Lee}}, \bibinfo {author} {\bibfnamefont {C.}~\bibnamefont {Senko}}, \bibinfo {author} {\bibfnamefont {J.}~\bibnamefont {Smith}}, \bibinfo {author} {\bibfnamefont {M.}~\bibnamefont {Foss-Feig}}, \bibinfo {author} {\bibfnamefont {S.}~\bibnamefont {Michalakis}}, \bibinfo {author} {\bibfnamefont {A.~V.}\ \bibnamefont {Gorshkov}},\ and\ \bibinfo {author} {\bibfnamefont {C.}~\bibnamefont {Monroe}},\ }\bibfield  {title} {\bibinfo {title} {{Non-local propagation of correlations in quantum systems with long-range interactions}},\ }\href {https://doi.org/10.1038/nature13450} {\bibfield  {journal} {\bibinfo  {journal} {Nature}\ }\textbf {\bibinfo {volume} {511}},\ \bibinfo {pages} {198} (\bibinfo {year} {2014})}\BibitemShut {NoStop}%
\bibitem [{\citenamefont {Clos}\ \emph {et~al.}(2016)\citenamefont {Clos}, \citenamefont {Porras}, \citenamefont {Warring},\ and\ \citenamefont {Schaetz}}]{Clos2016-ion}%
  \BibitemOpen
  \bibfield  {author} {\bibinfo {author} {\bibfnamefont {G.}~\bibnamefont {Clos}}, \bibinfo {author} {\bibfnamefont {D.}~\bibnamefont {Porras}}, \bibinfo {author} {\bibfnamefont {U.}~\bibnamefont {Warring}},\ and\ \bibinfo {author} {\bibfnamefont {T.}~\bibnamefont {Schaetz}},\ }\bibfield  {title} {\bibinfo {title} {{Time-resolved observation of thermalization in an isolated quantum system}},\ }\href {https://doi.org/10.1103/PhysRevLett.117.170401} {\bibfield  {journal} {\bibinfo  {journal} {Physical Review Letters}\ }\textbf {\bibinfo {volume} {117}},\ \bibinfo {pages} {170401} (\bibinfo {year} {2016})}\BibitemShut {NoStop}%
\bibitem [{\citenamefont {Smith}\ \emph {et~al.}(2016)\citenamefont {Smith}, \citenamefont {Lee}, \citenamefont {Richerme}, \citenamefont {Neyenhuis}, \citenamefont {Hess}, \citenamefont {Hauke}, \citenamefont {Heyl}, \citenamefont {Huse},\ and\ \citenamefont {Monroe}}]{Smith2016-ionmbl}%
  \BibitemOpen
  \bibfield  {author} {\bibinfo {author} {\bibfnamefont {J.}~\bibnamefont {Smith}}, \bibinfo {author} {\bibfnamefont {A.}~\bibnamefont {Lee}}, \bibinfo {author} {\bibfnamefont {P.}~\bibnamefont {Richerme}}, \bibinfo {author} {\bibfnamefont {B.}~\bibnamefont {Neyenhuis}}, \bibinfo {author} {\bibfnamefont {P.~W.}\ \bibnamefont {Hess}}, \bibinfo {author} {\bibfnamefont {P.}~\bibnamefont {Hauke}}, \bibinfo {author} {\bibfnamefont {M.}~\bibnamefont {Heyl}}, \bibinfo {author} {\bibfnamefont {D.~A.}\ \bibnamefont {Huse}},\ and\ \bibinfo {author} {\bibfnamefont {C.}~\bibnamefont {Monroe}},\ }\bibfield  {title} {\bibinfo {title} {{Many-body localization in a quantum simulator with programmable random disorder}},\ }\href {https://doi.org/10.1038/nphys3783} {\bibfield  {journal} {\bibinfo  {journal} {Nature Physics}\ }\textbf {\bibinfo {volume} {12}},\ \bibinfo {pages} {907} (\bibinfo {year} {2016})}\BibitemShut {NoStop}%
\bibitem [{\citenamefont {Neyenhuis}\ \emph {et~al.}(2017)\citenamefont {Neyenhuis}, \citenamefont {Zhang}, \citenamefont {Hess}, \citenamefont {Smith}, \citenamefont {Lee}, \citenamefont {Richerme}, \citenamefont {Gong}, \citenamefont {Gorshkov},\ and\ \citenamefont {Monroe}}]{Neyenhuis2017-ion}%
  \BibitemOpen
  \bibfield  {author} {\bibinfo {author} {\bibfnamefont {B.}~\bibnamefont {Neyenhuis}}, \bibinfo {author} {\bibfnamefont {J.}~\bibnamefont {Zhang}}, \bibinfo {author} {\bibfnamefont {P.~W.}\ \bibnamefont {Hess}}, \bibinfo {author} {\bibfnamefont {J.}~\bibnamefont {Smith}}, \bibinfo {author} {\bibfnamefont {A.~C.}\ \bibnamefont {Lee}}, \bibinfo {author} {\bibfnamefont {P.}~\bibnamefont {Richerme}}, \bibinfo {author} {\bibfnamefont {Z.-X.}\ \bibnamefont {Gong}}, \bibinfo {author} {\bibfnamefont {A.~V.}\ \bibnamefont {Gorshkov}},\ and\ \bibinfo {author} {\bibfnamefont {C.}~\bibnamefont {Monroe}},\ }\bibfield  {title} {\bibinfo {title} {{Observation of prethermalization in long-range interacting spin chains}},\ }\href {https://doi.org/10.1126/sciadv.1700672} {\bibfield  {journal} {\bibinfo  {journal} {Science Advances}\ }\textbf {\bibinfo {volume} {3}},\ \bibinfo {pages} {e1700672} (\bibinfo {year} {2017})}\BibitemShut {NoStop}%
\bibitem [{\citenamefont {Neill}\ \emph {et~al.}(2016)\citenamefont {Neill}, \citenamefont {Roushan}, \citenamefont {Fang}, \citenamefont {Chen}, \citenamefont {Kolodrubetz}, \citenamefont {Chen}, \citenamefont {Megrant}, \citenamefont {Barends}, \citenamefont {Campbell}, \citenamefont {Chiaro}, \citenamefont {Dunsworth}, \citenamefont {Jeffrey}, \citenamefont {Kelly}, \citenamefont {Mutus}, \citenamefont {O'Malley}, \citenamefont {Quintana}, \citenamefont {Sank}, \citenamefont {Vainsencher}, \citenamefont {Wenner}, \citenamefont {White}, \citenamefont {Polkovnikov},\ and\ \citenamefont {Martinis}}]{Neill2016-scq}%
  \BibitemOpen
  \bibfield  {author} {\bibinfo {author} {\bibfnamefont {C.}~\bibnamefont {Neill}}, \bibinfo {author} {\bibfnamefont {P.}~\bibnamefont {Roushan}}, \bibinfo {author} {\bibfnamefont {M.}~\bibnamefont {Fang}}, \bibinfo {author} {\bibfnamefont {Y.}~\bibnamefont {Chen}}, \bibinfo {author} {\bibfnamefont {M.}~\bibnamefont {Kolodrubetz}}, \bibinfo {author} {\bibfnamefont {Z.}~\bibnamefont {Chen}}, \bibinfo {author} {\bibfnamefont {A.}~\bibnamefont {Megrant}}, \bibinfo {author} {\bibfnamefont {R.}~\bibnamefont {Barends}}, \bibinfo {author} {\bibfnamefont {B.}~\bibnamefont {Campbell}}, \bibinfo {author} {\bibfnamefont {B.}~\bibnamefont {Chiaro}}, \bibinfo {author} {\bibfnamefont {A.}~\bibnamefont {Dunsworth}}, \bibinfo {author} {\bibfnamefont {E.}~\bibnamefont {Jeffrey}}, \bibinfo {author} {\bibfnamefont {J.}~\bibnamefont {Kelly}}, \bibinfo {author} {\bibfnamefont {J.}~\bibnamefont {Mutus}}, \bibinfo {author} {\bibfnamefont {P.~J.~J.}\ \bibnamefont {O'Malley}}, \bibinfo {author} {\bibfnamefont {C.}~\bibnamefont
  {Quintana}}, \bibinfo {author} {\bibfnamefont {D.}~\bibnamefont {Sank}}, \bibinfo {author} {\bibfnamefont {A.}~\bibnamefont {Vainsencher}}, \bibinfo {author} {\bibfnamefont {J.}~\bibnamefont {Wenner}}, \bibinfo {author} {\bibfnamefont {T.~C.}\ \bibnamefont {White}}, \bibinfo {author} {\bibfnamefont {A.}~\bibnamefont {Polkovnikov}},\ and\ \bibinfo {author} {\bibfnamefont {J.~M.}\ \bibnamefont {Martinis}},\ }\bibfield  {title} {\bibinfo {title} {{Ergodic dynamics and thermalization in an isolated quantum system}},\ }\href {https://doi.org/10.1038/nphys3830} {\bibfield  {journal} {\bibinfo  {journal} {Nature Physics}\ }\textbf {\bibinfo {volume} {12}},\ \bibinfo {pages} {1037} (\bibinfo {year} {2016})}\BibitemShut {NoStop}%
\bibitem [{\citenamefont {Blok}\ \emph {et~al.}(2021)\citenamefont {Blok}, \citenamefont {Ramasesh}, \citenamefont {Schuster}, \citenamefont {O'Brien}, \citenamefont {Kreikebaum}, \citenamefont {Dahlen}, \citenamefont {Morvan}, \citenamefont {Yoshida}, \citenamefont {Yao},\ and\ \citenamefont {Siddiqi}}]{Blok2021-scq}%
  \BibitemOpen
  \bibfield  {author} {\bibinfo {author} {\bibfnamefont {M.~S.}\ \bibnamefont {Blok}}, \bibinfo {author} {\bibfnamefont {V.~V.}\ \bibnamefont {Ramasesh}}, \bibinfo {author} {\bibfnamefont {T.}~\bibnamefont {Schuster}}, \bibinfo {author} {\bibfnamefont {K.}~\bibnamefont {O'Brien}}, \bibinfo {author} {\bibfnamefont {J.~M.}\ \bibnamefont {Kreikebaum}}, \bibinfo {author} {\bibfnamefont {D.}~\bibnamefont {Dahlen}}, \bibinfo {author} {\bibfnamefont {A.}~\bibnamefont {Morvan}}, \bibinfo {author} {\bibfnamefont {B.}~\bibnamefont {Yoshida}}, \bibinfo {author} {\bibfnamefont {N.~Y.}\ \bibnamefont {Yao}},\ and\ \bibinfo {author} {\bibfnamefont {I.}~\bibnamefont {Siddiqi}},\ }\bibfield  {title} {\bibinfo {title} {{Quantum information scrambling on a superconducting qutrit processor}},\ }\href {https://doi.org/10.1103/physrevx.11.021010} {\bibfield  {journal} {\bibinfo  {journal} {Physical Review X}\ }\textbf {\bibinfo {volume} {11}},\ \bibinfo {pages} {021010} (\bibinfo {year} {2021})}\BibitemShut {NoStop}%
\bibitem [{\citenamefont {Chen}\ \emph {et~al.}(2021)\citenamefont {Chen}, \citenamefont {Sun}, \citenamefont {Gong}, \citenamefont {Zhu}, \citenamefont {Zhang}, \citenamefont {Wu}, \citenamefont {Ye}, \citenamefont {Zha}, \citenamefont {Li}, \citenamefont {Guo}, \citenamefont {Qian}, \citenamefont {Huang}, \citenamefont {Yu}, \citenamefont {Deng}, \citenamefont {Rong}, \citenamefont {Lin}, \citenamefont {Xu}, \citenamefont {Sun}, \citenamefont {Guo}, \citenamefont {Li}, \citenamefont {Liang}, \citenamefont {Peng}, \citenamefont {Fan}, \citenamefont {Zhu},\ and\ \citenamefont {Pan}}]{Chen2021-scq}%
  \BibitemOpen
  \bibfield  {author} {\bibinfo {author} {\bibfnamefont {F.}~\bibnamefont {Chen}}, \bibinfo {author} {\bibfnamefont {Z.-H.}\ \bibnamefont {Sun}}, \bibinfo {author} {\bibfnamefont {M.}~\bibnamefont {Gong}}, \bibinfo {author} {\bibfnamefont {Q.}~\bibnamefont {Zhu}}, \bibinfo {author} {\bibfnamefont {Y.-R.}\ \bibnamefont {Zhang}}, \bibinfo {author} {\bibfnamefont {Y.}~\bibnamefont {Wu}}, \bibinfo {author} {\bibfnamefont {Y.}~\bibnamefont {Ye}}, \bibinfo {author} {\bibfnamefont {C.}~\bibnamefont {Zha}}, \bibinfo {author} {\bibfnamefont {S.}~\bibnamefont {Li}}, \bibinfo {author} {\bibfnamefont {S.}~\bibnamefont {Guo}}, \bibinfo {author} {\bibfnamefont {H.}~\bibnamefont {Qian}}, \bibinfo {author} {\bibfnamefont {H.-L.}\ \bibnamefont {Huang}}, \bibinfo {author} {\bibfnamefont {J.}~\bibnamefont {Yu}}, \bibinfo {author} {\bibfnamefont {H.}~\bibnamefont {Deng}}, \bibinfo {author} {\bibfnamefont {H.}~\bibnamefont {Rong}}, \bibinfo {author} {\bibfnamefont {J.}~\bibnamefont {Lin}}, \bibinfo {author} {\bibfnamefont
  {Y.}~\bibnamefont {Xu}}, \bibinfo {author} {\bibfnamefont {L.}~\bibnamefont {Sun}}, \bibinfo {author} {\bibfnamefont {C.}~\bibnamefont {Guo}}, \bibinfo {author} {\bibfnamefont {N.}~\bibnamefont {Li}}, \bibinfo {author} {\bibfnamefont {F.}~\bibnamefont {Liang}}, \bibinfo {author} {\bibfnamefont {C.-Z.}\ \bibnamefont {Peng}}, \bibinfo {author} {\bibfnamefont {H.}~\bibnamefont {Fan}}, \bibinfo {author} {\bibfnamefont {X.}~\bibnamefont {Zhu}},\ and\ \bibinfo {author} {\bibfnamefont {J.-W.}\ \bibnamefont {Pan}},\ }\bibfield  {title} {\bibinfo {title} {{Observation of strong and weak thermalization in a superconducting quantum processor}},\ }\href {https://doi.org/10.1103/PhysRevLett.127.020602} {\bibfield  {journal} {\bibinfo  {journal} {Physical Review Letters}\ }\textbf {\bibinfo {volume} {127}},\ \bibinfo {pages} {020602} (\bibinfo {year} {2021})}\BibitemShut {NoStop}%
\bibitem [{\citenamefont {Zhu}\ \emph {et~al.}(2022)\citenamefont {Zhu}, \citenamefont {Sun}, \citenamefont {Gong}, \citenamefont {Chen}, \citenamefont {Zhang}, \citenamefont {Wu}, \citenamefont {Ye}, \citenamefont {Zha}, \citenamefont {Li}, \citenamefont {Guo}, \citenamefont {Qian}, \citenamefont {Huang}, \citenamefont {Yu}, \citenamefont {Deng}, \citenamefont {Rong}, \citenamefont {Lin}, \citenamefont {Xu}, \citenamefont {Sun}, \citenamefont {Guo}, \citenamefont {Li}, \citenamefont {Liang}, \citenamefont {Peng}, \citenamefont {Fan}, \citenamefont {Zhu},\ and\ \citenamefont {Pan}}]{Zhu2022-scq}%
  \BibitemOpen
  \bibfield  {author} {\bibinfo {author} {\bibfnamefont {Q.}~\bibnamefont {Zhu}}, \bibinfo {author} {\bibfnamefont {Z.-H.}\ \bibnamefont {Sun}}, \bibinfo {author} {\bibfnamefont {M.}~\bibnamefont {Gong}}, \bibinfo {author} {\bibfnamefont {F.}~\bibnamefont {Chen}}, \bibinfo {author} {\bibfnamefont {Y.-R.}\ \bibnamefont {Zhang}}, \bibinfo {author} {\bibfnamefont {Y.}~\bibnamefont {Wu}}, \bibinfo {author} {\bibfnamefont {Y.}~\bibnamefont {Ye}}, \bibinfo {author} {\bibfnamefont {C.}~\bibnamefont {Zha}}, \bibinfo {author} {\bibfnamefont {S.}~\bibnamefont {Li}}, \bibinfo {author} {\bibfnamefont {S.}~\bibnamefont {Guo}}, \bibinfo {author} {\bibfnamefont {H.}~\bibnamefont {Qian}}, \bibinfo {author} {\bibfnamefont {H.-L.}\ \bibnamefont {Huang}}, \bibinfo {author} {\bibfnamefont {J.}~\bibnamefont {Yu}}, \bibinfo {author} {\bibfnamefont {H.}~\bibnamefont {Deng}}, \bibinfo {author} {\bibfnamefont {H.}~\bibnamefont {Rong}}, \bibinfo {author} {\bibfnamefont {J.}~\bibnamefont {Lin}}, \bibinfo {author} {\bibfnamefont
  {Y.}~\bibnamefont {Xu}}, \bibinfo {author} {\bibfnamefont {L.}~\bibnamefont {Sun}}, \bibinfo {author} {\bibfnamefont {C.}~\bibnamefont {Guo}}, \bibinfo {author} {\bibfnamefont {N.}~\bibnamefont {Li}}, \bibinfo {author} {\bibfnamefont {F.}~\bibnamefont {Liang}}, \bibinfo {author} {\bibfnamefont {C.-Z.}\ \bibnamefont {Peng}}, \bibinfo {author} {\bibfnamefont {H.}~\bibnamefont {Fan}}, \bibinfo {author} {\bibfnamefont {X.}~\bibnamefont {Zhu}},\ and\ \bibinfo {author} {\bibfnamefont {J.-W.}\ \bibnamefont {Pan}},\ }\bibfield  {title} {\bibinfo {title} {{Observation of thermalization and information scrambling in a superconducting quantum processor}},\ }\href {https://doi.org/10.1103/PhysRevLett.128.160502} {\bibfield  {journal} {\bibinfo  {journal} {Physical Review Letters}\ }\textbf {\bibinfo {volume} {128}},\ \bibinfo {pages} {160502} (\bibinfo {year} {2022})}\BibitemShut {NoStop}%
\bibitem [{\citenamefont {Zhang}\ \emph {et~al.}(2022)\citenamefont {Zhang}, \citenamefont {Dong}, \citenamefont {Gao}, \citenamefont {Zhao}, \citenamefont {Hao}, \citenamefont {Desaules}, \citenamefont {Guo}, \citenamefont {Chen}, \citenamefont {Deng}, \citenamefont {Liu}, \citenamefont {Ren}, \citenamefont {Yao}, \citenamefont {Zhang}, \citenamefont {Xu}, \citenamefont {Wang}, \citenamefont {Jin}, \citenamefont {Zhu}, \citenamefont {Zhang}, \citenamefont {Li}, \citenamefont {Song}, \citenamefont {Wang}, \citenamefont {Liu}, \citenamefont {Papi\'{c}}, \citenamefont {Ying}, \citenamefont {Wang},\ and\ \citenamefont {Lai}}]{Zhang2022-scqscar}%
  \BibitemOpen
  \bibfield  {author} {\bibinfo {author} {\bibfnamefont {P.}~\bibnamefont {Zhang}}, \bibinfo {author} {\bibfnamefont {H.}~\bibnamefont {Dong}}, \bibinfo {author} {\bibfnamefont {Y.}~\bibnamefont {Gao}}, \bibinfo {author} {\bibfnamefont {L.}~\bibnamefont {Zhao}}, \bibinfo {author} {\bibfnamefont {J.}~\bibnamefont {Hao}}, \bibinfo {author} {\bibfnamefont {J.-Y.}\ \bibnamefont {Desaules}}, \bibinfo {author} {\bibfnamefont {Q.}~\bibnamefont {Guo}}, \bibinfo {author} {\bibfnamefont {J.}~\bibnamefont {Chen}}, \bibinfo {author} {\bibfnamefont {J.}~\bibnamefont {Deng}}, \bibinfo {author} {\bibfnamefont {B.}~\bibnamefont {Liu}}, \bibinfo {author} {\bibfnamefont {W.}~\bibnamefont {Ren}}, \bibinfo {author} {\bibfnamefont {Y.}~\bibnamefont {Yao}}, \bibinfo {author} {\bibfnamefont {X.}~\bibnamefont {Zhang}}, \bibinfo {author} {\bibfnamefont {S.}~\bibnamefont {Xu}}, \bibinfo {author} {\bibfnamefont {K.}~\bibnamefont {Wang}}, \bibinfo {author} {\bibfnamefont {F.}~\bibnamefont {Jin}}, \bibinfo {author} {\bibfnamefont
  {X.}~\bibnamefont {Zhu}}, \bibinfo {author} {\bibfnamefont {B.}~\bibnamefont {Zhang}}, \bibinfo {author} {\bibfnamefont {H.}~\bibnamefont {Li}}, \bibinfo {author} {\bibfnamefont {C.}~\bibnamefont {Song}}, \bibinfo {author} {\bibfnamefont {Z.}~\bibnamefont {Wang}}, \bibinfo {author} {\bibfnamefont {F.}~\bibnamefont {Liu}}, \bibinfo {author} {\bibfnamefont {Z.}~\bibnamefont {Papi\'{c}}}, \bibinfo {author} {\bibfnamefont {L.}~\bibnamefont {Ying}}, \bibinfo {author} {\bibfnamefont {H.}~\bibnamefont {Wang}},\ and\ \bibinfo {author} {\bibfnamefont {Y.-C.}\ \bibnamefont {Lai}},\ }\bibfield  {title} {\bibinfo {title} {{Many-body Hilbert space scarring on a superconducting processor}},\ }\href {https://doi.org/10.1038/s41567-022-01784-9} {\bibfield  {journal} {\bibinfo  {journal} {Nature Physics}\ }\textbf {\bibinfo {volume} {19}},\ \bibinfo {pages} {120} (\bibinfo {year} {2022})}\BibitemShut {NoStop}%
\bibitem [{\citenamefont {Wang}\ \emph {et~al.}(2025)\citenamefont {Wang}, \citenamefont {Shi}, \citenamefont {Sun}, \citenamefont {Chen}, \citenamefont {Wang}, \citenamefont {Zhao}, \citenamefont {Liu}, \citenamefont {Ma}, \citenamefont {Wang}, \citenamefont {Li}, \citenamefont {Zhang}, \citenamefont {Liu}, \citenamefont {Deng}, \citenamefont {Li}, \citenamefont {He}, \citenamefont {Liu}, \citenamefont {Peng}, \citenamefont {Song}, \citenamefont {Xue}, \citenamefont {Yu}, \citenamefont {Huang}, \citenamefont {Xiang}, \citenamefont {Zheng}, \citenamefont {Xu},\ and\ \citenamefont {Fan}}]{Wang2025-scqhsf}%
  \BibitemOpen
  \bibfield  {author} {\bibinfo {author} {\bibfnamefont {Y.-Y.}\ \bibnamefont {Wang}}, \bibinfo {author} {\bibfnamefont {Y.-H.}\ \bibnamefont {Shi}}, \bibinfo {author} {\bibfnamefont {Z.-H.}\ \bibnamefont {Sun}}, \bibinfo {author} {\bibfnamefont {C.-T.}\ \bibnamefont {Chen}}, \bibinfo {author} {\bibfnamefont {Z.-A.}\ \bibnamefont {Wang}}, \bibinfo {author} {\bibfnamefont {K.}~\bibnamefont {Zhao}}, \bibinfo {author} {\bibfnamefont {H.-T.}\ \bibnamefont {Liu}}, \bibinfo {author} {\bibfnamefont {W.-G.}\ \bibnamefont {Ma}}, \bibinfo {author} {\bibfnamefont {Z.}~\bibnamefont {Wang}}, \bibinfo {author} {\bibfnamefont {H.}~\bibnamefont {Li}}, \bibinfo {author} {\bibfnamefont {J.-C.}\ \bibnamefont {Zhang}}, \bibinfo {author} {\bibfnamefont {Y.}~\bibnamefont {Liu}}, \bibinfo {author} {\bibfnamefont {C.-L.}\ \bibnamefont {Deng}}, \bibinfo {author} {\bibfnamefont {T.-M.}\ \bibnamefont {Li}}, \bibinfo {author} {\bibfnamefont {Y.}~\bibnamefont {He}}, \bibinfo {author} {\bibfnamefont {Z.-H.}\ \bibnamefont {Liu}}, \bibinfo
  {author} {\bibfnamefont {Z.-Y.}\ \bibnamefont {Peng}}, \bibinfo {author} {\bibfnamefont {X.}~\bibnamefont {Song}}, \bibinfo {author} {\bibfnamefont {G.}~\bibnamefont {Xue}}, \bibinfo {author} {\bibfnamefont {H.}~\bibnamefont {Yu}}, \bibinfo {author} {\bibfnamefont {K.}~\bibnamefont {Huang}}, \bibinfo {author} {\bibfnamefont {Z.}~\bibnamefont {Xiang}}, \bibinfo {author} {\bibfnamefont {D.}~\bibnamefont {Zheng}}, \bibinfo {author} {\bibfnamefont {K.}~\bibnamefont {Xu}},\ and\ \bibinfo {author} {\bibfnamefont {H.}~\bibnamefont {Fan}},\ }\bibfield  {title} {\bibinfo {title} {{Exploring Hilbert-space fragmentation on a superconducting processor}},\ }\href {https://doi.org/10.1103/prxquantum.6.010325} {\bibfield  {journal} {\bibinfo  {journal} {PRX Quantum}\ }\textbf {\bibinfo {volume} {6}},\ \bibinfo {pages} {010325} (\bibinfo {year} {2025})}\BibitemShut {NoStop}%
\bibitem [{\citenamefont {Deutsch}(1991)}]{Deutsch1991-eth}%
  \BibitemOpen
  \bibfield  {author} {\bibinfo {author} {\bibfnamefont {J.~M.}\ \bibnamefont {Deutsch}},\ }\bibfield  {title} {\bibinfo {title} {{Quantum statistical mechanics in a closed system}},\ }\href {https://doi.org/10.1103/physreva.43.2046} {\bibfield  {journal} {\bibinfo  {journal} {Physical Review A}\ }\textbf {\bibinfo {volume} {43}},\ \bibinfo {pages} {2046} (\bibinfo {year} {1991})}\BibitemShut {NoStop}%
\bibitem [{\citenamefont {Srednicki}(1994)}]{Srednicki1994-eth}%
  \BibitemOpen
  \bibfield  {author} {\bibinfo {author} {\bibfnamefont {M.}~\bibnamefont {Srednicki}},\ }\bibfield  {title} {\bibinfo {title} {{Chaos and quantum thermalization}},\ }\href {https://doi.org/10.1103/physreve.50.888} {\bibfield  {journal} {\bibinfo  {journal} {Physical Review E}\ }\textbf {\bibinfo {volume} {50}},\ \bibinfo {pages} {888} (\bibinfo {year} {1994})}\BibitemShut {NoStop}%
\bibitem [{\citenamefont {Rigol}\ \emph {et~al.}(2008)\citenamefont {Rigol}, \citenamefont {Dunjko},\ and\ \citenamefont {Olshanii}}]{Rigol2008-eth}%
  \BibitemOpen
  \bibfield  {author} {\bibinfo {author} {\bibfnamefont {M.}~\bibnamefont {Rigol}}, \bibinfo {author} {\bibfnamefont {V.}~\bibnamefont {Dunjko}},\ and\ \bibinfo {author} {\bibfnamefont {M.}~\bibnamefont {Olshanii}},\ }\bibfield  {title} {\bibinfo {title} {{Thermalization and its mechanism for generic isolated quantum systems}},\ }\href {https://doi.org/10.1038/nature06838} {\bibfield  {journal} {\bibinfo  {journal} {Nature}\ }\textbf {\bibinfo {volume} {452}},\ \bibinfo {pages} {854} (\bibinfo {year} {2008})}\BibitemShut {NoStop}%
\bibitem [{\citenamefont {Basko}\ \emph {et~al.}(2006)\citenamefont {Basko}, \citenamefont {Aleiner},\ and\ \citenamefont {Altshuler}}]{Basko2006-mbl}%
  \BibitemOpen
  \bibfield  {author} {\bibinfo {author} {\bibfnamefont {D.~M.}\ \bibnamefont {Basko}}, \bibinfo {author} {\bibfnamefont {I.~L.}\ \bibnamefont {Aleiner}},\ and\ \bibinfo {author} {\bibfnamefont {B.~L.}\ \bibnamefont {Altshuler}},\ }\bibfield  {title} {\bibinfo {title} {{Metal--insulator transition in a weakly interacting many-electron system with localized single-particle states}},\ }\href {https://doi.org/10.1016/j.aop.2005.11.014} {\bibfield  {journal} {\bibinfo  {journal} {Annals of Physics}\ }\textbf {\bibinfo {volume} {321}},\ \bibinfo {pages} {1126} (\bibinfo {year} {2006})}\BibitemShut {NoStop}%
\bibitem [{\citenamefont {Imbrie}(2016)}]{Imbrie2016-mbl}%
  \BibitemOpen
  \bibfield  {author} {\bibinfo {author} {\bibfnamefont {J.~Z.}\ \bibnamefont {Imbrie}},\ }\bibfield  {title} {\bibinfo {title} {{On many-body localization for quantum spin chains}},\ }\href {https://doi.org/10.1007/s10955-016-1508-x} {\bibfield  {journal} {\bibinfo  {journal} {Journal of Statistical Physics}\ }\textbf {\bibinfo {volume} {163}},\ \bibinfo {pages} {998} (\bibinfo {year} {2016})}\BibitemShut {NoStop}%
\bibitem [{\citenamefont {Shiraishi}\ and\ \citenamefont {Mori}(2017)}]{Shiraishi2017-scar}%
  \BibitemOpen
  \bibfield  {author} {\bibinfo {author} {\bibfnamefont {N.}~\bibnamefont {Shiraishi}}\ and\ \bibinfo {author} {\bibfnamefont {T.}~\bibnamefont {Mori}},\ }\bibfield  {title} {\bibinfo {title} {{Systematic construction of counterexamples to the eigenstate thermalization hypothesis}},\ }\href {https://doi.org/10.1103/PhysRevLett.119.030601} {\bibfield  {journal} {\bibinfo  {journal} {Physical Review Letters}\ }\textbf {\bibinfo {volume} {119}},\ \bibinfo {pages} {030601} (\bibinfo {year} {2017})}\BibitemShut {NoStop}%
\bibitem [{\citenamefont {Moudgalya}\ \emph {et~al.}(2018)\citenamefont {Moudgalya}, \citenamefont {Rachel}, \citenamefont {Andrei~Bernevig},\ and\ \citenamefont {Regnault}}]{Moudgalya2018-scar}%
  \BibitemOpen
  \bibfield  {author} {\bibinfo {author} {\bibfnamefont {S.}~\bibnamefont {Moudgalya}}, \bibinfo {author} {\bibfnamefont {S.}~\bibnamefont {Rachel}}, \bibinfo {author} {\bibfnamefont {B.}~\bibnamefont {Andrei~Bernevig}},\ and\ \bibinfo {author} {\bibfnamefont {N.}~\bibnamefont {Regnault}},\ }\bibfield  {title} {\bibinfo {title} {{Exact excited states of nonintegrable models}},\ }\href {https://doi.org/10.1103/PhysRevB.98.235155} {\bibfield  {journal} {\bibinfo  {journal} {Physical Review B}\ }\textbf {\bibinfo {volume} {98}},\ \bibinfo {pages} {235155} (\bibinfo {year} {2018})}\BibitemShut {NoStop}%
\bibitem [{\citenamefont {Sala}\ \emph {et~al.}(2020)\citenamefont {Sala}, \citenamefont {Rakovszky}, \citenamefont {Verresen}, \citenamefont {Knap},\ and\ \citenamefont {Pollmann}}]{Sala2020-hsf}%
  \BibitemOpen
  \bibfield  {author} {\bibinfo {author} {\bibfnamefont {P.}~\bibnamefont {Sala}}, \bibinfo {author} {\bibfnamefont {T.}~\bibnamefont {Rakovszky}}, \bibinfo {author} {\bibfnamefont {R.}~\bibnamefont {Verresen}}, \bibinfo {author} {\bibfnamefont {M.}~\bibnamefont {Knap}},\ and\ \bibinfo {author} {\bibfnamefont {F.}~\bibnamefont {Pollmann}},\ }\bibfield  {title} {\bibinfo {title} {{Ergodicity breaking arising from Hilbert space fragmentation in dipole-conserving Hamiltonians}},\ }\href {https://doi.org/10.1103/physrevx.10.011047} {\bibfield  {journal} {\bibinfo  {journal} {Physical Review X}\ }\textbf {\bibinfo {volume} {10}},\ \bibinfo {pages} {011047} (\bibinfo {year} {2020})}\BibitemShut {NoStop}%
\bibitem [{\citenamefont {Khemani}\ \emph {et~al.}(2020)\citenamefont {Khemani}, \citenamefont {Hermele},\ and\ \citenamefont {Nandkishore}}]{Khemani2020-hsf}%
  \BibitemOpen
  \bibfield  {author} {\bibinfo {author} {\bibfnamefont {V.}~\bibnamefont {Khemani}}, \bibinfo {author} {\bibfnamefont {M.}~\bibnamefont {Hermele}},\ and\ \bibinfo {author} {\bibfnamefont {R.}~\bibnamefont {Nandkishore}},\ }\bibfield  {title} {\bibinfo {title} {{Localization from Hilbert space shattering: From theory to physical realizations}},\ }\href {https://doi.org/10.1103/physrevb.101.174204} {\bibfield  {journal} {\bibinfo  {journal} {Physical Review B}\ }\textbf {\bibinfo {volume} {101}},\ \bibinfo {pages} {174204} (\bibinfo {year} {2020})}\BibitemShut {NoStop}%
\bibitem [{\citenamefont {Steinigeweg}\ \emph {et~al.}(2013)\citenamefont {Steinigeweg}, \citenamefont {Herbrych},\ and\ \citenamefont {Prelov\v{s}ek}}]{Steinigeweg2013-eth}%
  \BibitemOpen
  \bibfield  {author} {\bibinfo {author} {\bibfnamefont {R.}~\bibnamefont {Steinigeweg}}, \bibinfo {author} {\bibfnamefont {J.}~\bibnamefont {Herbrych}},\ and\ \bibinfo {author} {\bibfnamefont {P.}~\bibnamefont {Prelov\v{s}ek}},\ }\bibfield  {title} {\bibinfo {title} {{Eigenstate thermalization within isolated spin-chain systems}},\ }\href {https://doi.org/10.1103/PhysRevE.87.012118} {\bibfield  {journal} {\bibinfo  {journal} {Physical Review E}\ }\textbf {\bibinfo {volume} {87}},\ \bibinfo {pages} {012118} (\bibinfo {year} {2013})}\BibitemShut {NoStop}%
\bibitem [{\citenamefont {Kim}\ \emph {et~al.}(2014)\citenamefont {Kim}, \citenamefont {Ikeda},\ and\ \citenamefont {Huse}}]{Kim2014-eth}%
  \BibitemOpen
  \bibfield  {author} {\bibinfo {author} {\bibfnamefont {H.}~\bibnamefont {Kim}}, \bibinfo {author} {\bibfnamefont {T.~N.}\ \bibnamefont {Ikeda}},\ and\ \bibinfo {author} {\bibfnamefont {D.~A.}\ \bibnamefont {Huse}},\ }\bibfield  {title} {\bibinfo {title} {{Testing whether all eigenstates obey the eigenstate thermalization hypothesis}},\ }\href {https://doi.org/10.1103/PhysRevE.90.052105} {\bibfield  {journal} {\bibinfo  {journal} {Physical Review E}\ }\textbf {\bibinfo {volume} {90}},\ \bibinfo {pages} {052105} (\bibinfo {year} {2014})}\BibitemShut {NoStop}%
\bibitem [{\citenamefont {Beugeling}\ \emph {et~al.}(2014)\citenamefont {Beugeling}, \citenamefont {Moessner},\ and\ \citenamefont {Haque}}]{Beugeling2014-eth}%
  \BibitemOpen
  \bibfield  {author} {\bibinfo {author} {\bibfnamefont {W.}~\bibnamefont {Beugeling}}, \bibinfo {author} {\bibfnamefont {R.}~\bibnamefont {Moessner}},\ and\ \bibinfo {author} {\bibfnamefont {M.}~\bibnamefont {Haque}},\ }\bibfield  {title} {\bibinfo {title} {{Finite-size scaling of eigenstate thermalization}},\ }\href {https://doi.org/10.1103/PhysRevE.89.042112} {\bibfield  {journal} {\bibinfo  {journal} {Physical Review E}\ }\textbf {\bibinfo {volume} {89}},\ \bibinfo {pages} {042112} (\bibinfo {year} {2014})}\BibitemShut {NoStop}%
\bibitem [{\citenamefont {Beugeling}\ \emph {et~al.}(2015)\citenamefont {Beugeling}, \citenamefont {Moessner},\ and\ \citenamefont {Haque}}]{Beugeling2015-eth}%
  \BibitemOpen
  \bibfield  {author} {\bibinfo {author} {\bibfnamefont {W.}~\bibnamefont {Beugeling}}, \bibinfo {author} {\bibfnamefont {R.}~\bibnamefont {Moessner}},\ and\ \bibinfo {author} {\bibfnamefont {M.}~\bibnamefont {Haque}},\ }\bibfield  {title} {\bibinfo {title} {{Off-diagonal matrix elements of local operators in many-body quantum systems}},\ }\href {https://doi.org/10.1103/PhysRevE.91.012144} {\bibfield  {journal} {\bibinfo  {journal} {Physical Review E}\ }\textbf {\bibinfo {volume} {91}},\ \bibinfo {pages} {012144} (\bibinfo {year} {2015})}\BibitemShut {NoStop}%
\bibitem [{\citenamefont {Mondaini}\ and\ \citenamefont {Rigol}(2017)}]{Mondaini2017-eth}%
  \BibitemOpen
  \bibfield  {author} {\bibinfo {author} {\bibfnamefont {R.}~\bibnamefont {Mondaini}}\ and\ \bibinfo {author} {\bibfnamefont {M.}~\bibnamefont {Rigol}},\ }\bibfield  {title} {\bibinfo {title} {{Eigenstate thermalization in the two-dimensional transverse field Ising model. II. Off-diagonal matrix elements of observables}},\ }\href {https://doi.org/10.1103/PhysRevE.96.012157} {\bibfield  {journal} {\bibinfo  {journal} {Physical Review E}\ }\textbf {\bibinfo {volume} {96}},\ \bibinfo {pages} {012157} (\bibinfo {year} {2017})}\BibitemShut {NoStop}%
\bibitem [{\citenamefont {Garrison}\ and\ \citenamefont {Grover}(2018)}]{Garrison2018-eth}%
  \BibitemOpen
  \bibfield  {author} {\bibinfo {author} {\bibfnamefont {J.~R.}\ \bibnamefont {Garrison}}\ and\ \bibinfo {author} {\bibfnamefont {T.}~\bibnamefont {Grover}},\ }\bibfield  {title} {\bibinfo {title} {{Does a single eigenstate encode the full Hamiltonian?}},\ }\href {https://doi.org/10.1103/physrevx.8.021026} {\bibfield  {journal} {\bibinfo  {journal} {Physical Review X}\ }\textbf {\bibinfo {volume} {8}},\ \bibinfo {pages} {021026} (\bibinfo {year} {2018})}\BibitemShut {NoStop}%
\bibitem [{\citenamefont {Sugimoto}\ \emph {et~al.}(2021)\citenamefont {Sugimoto}, \citenamefont {Hamazaki},\ and\ \citenamefont {Ueda}}]{Sugimoto2021-eth}%
  \BibitemOpen
  \bibfield  {author} {\bibinfo {author} {\bibfnamefont {S.}~\bibnamefont {Sugimoto}}, \bibinfo {author} {\bibfnamefont {R.}~\bibnamefont {Hamazaki}},\ and\ \bibinfo {author} {\bibfnamefont {M.}~\bibnamefont {Ueda}},\ }\bibfield  {title} {\bibinfo {title} {{Test of the eigenstate thermalization hypothesis based on local random matrix theory}},\ }\href {https://doi.org/10.1103/PhysRevLett.126.120602} {\bibfield  {journal} {\bibinfo  {journal} {Physical Review Letters}\ }\textbf {\bibinfo {volume} {126}},\ \bibinfo {pages} {120602} (\bibinfo {year} {2021})}\BibitemShut {NoStop}%
\bibitem [{\citenamefont {Berges}\ \emph {et~al.}(2004)\citenamefont {Berges}, \citenamefont {Bors\'{a}nyi},\ and\ \citenamefont {Wetterich}}]{Berges2004-finite}%
  \BibitemOpen
  \bibfield  {author} {\bibinfo {author} {\bibfnamefont {J.}~\bibnamefont {Berges}}, \bibinfo {author} {\bibfnamefont {S.}~\bibnamefont {Bors\'{a}nyi}},\ and\ \bibinfo {author} {\bibfnamefont {C.}~\bibnamefont {Wetterich}},\ }\bibfield  {title} {\bibinfo {title} {{Prethermalization}},\ }\href {https://doi.org/10.1103/PhysRevLett.93.142002} {\bibfield  {journal} {\bibinfo  {journal} {Physical Review Letters}\ }\textbf {\bibinfo {volume} {93}},\ \bibinfo {pages} {142002} (\bibinfo {year} {2004})}\BibitemShut {NoStop}%
\bibitem [{\citenamefont {Bartsch}\ and\ \citenamefont {Gemmer}(2009)}]{Bartsch2009-finite}%
  \BibitemOpen
  \bibfield  {author} {\bibinfo {author} {\bibfnamefont {C.}~\bibnamefont {Bartsch}}\ and\ \bibinfo {author} {\bibfnamefont {J.}~\bibnamefont {Gemmer}},\ }\bibfield  {title} {\bibinfo {title} {{Dynamical typicality of quantum expectation values}},\ }\href {https://doi.org/10.1103/PhysRevLett.102.110403} {\bibfield  {journal} {\bibinfo  {journal} {Physical Review Letters}\ }\textbf {\bibinfo {volume} {102}},\ \bibinfo {pages} {110403} (\bibinfo {year} {2009})}\BibitemShut {NoStop}%
\bibitem [{\citenamefont {Ba\~{n}uls}\ \emph {et~al.}(2011)\citenamefont {Ba\~{n}uls}, \citenamefont {Cirac},\ and\ \citenamefont {Hastings}}]{Banuls2011-finite}%
  \BibitemOpen
  \bibfield  {author} {\bibinfo {author} {\bibfnamefont {M.~C.}\ \bibnamefont {Ba\~{n}uls}}, \bibinfo {author} {\bibfnamefont {J.~I.}\ \bibnamefont {Cirac}},\ and\ \bibinfo {author} {\bibfnamefont {M.~B.}\ \bibnamefont {Hastings}},\ }\bibfield  {title} {\bibinfo {title} {{Strong and weak thermalization of infinite nonintegrable quantum systems}},\ }\href {https://doi.org/10.1103/PhysRevLett.106.050405} {\bibfield  {journal} {\bibinfo  {journal} {Physical Review Letters}\ }\textbf {\bibinfo {volume} {106}},\ \bibinfo {pages} {050405} (\bibinfo {year} {2011})}\BibitemShut {NoStop}%
\bibitem [{\citenamefont {Short}\ and\ \citenamefont {Farrelly}(2011)}]{Short2011-finite}%
  \BibitemOpen
  \bibfield  {author} {\bibinfo {author} {\bibfnamefont {A.~J.}\ \bibnamefont {Short}}\ and\ \bibinfo {author} {\bibfnamefont {T.}~\bibnamefont {Farrelly}},\ }\bibfield  {title} {\bibinfo {title} {{Quantum equilibration in finite time}},\ }\href {https://doi.org/10.1088/1367-2630/14/1/013063} {\bibfield  {journal} {\bibinfo  {journal} {New Journal of Physics}\ }\textbf {\bibinfo {volume} {14}},\ \bibinfo {pages} {013063} (\bibinfo {year} {2011})}\BibitemShut {NoStop}%
\bibitem [{\citenamefont {Goldstein}\ \emph {et~al.}(2013)\citenamefont {Goldstein}, \citenamefont {Hara},\ and\ \citenamefont {Tasaki}}]{Goldstein2013-finite}%
  \BibitemOpen
  \bibfield  {author} {\bibinfo {author} {\bibfnamefont {S.}~\bibnamefont {Goldstein}}, \bibinfo {author} {\bibfnamefont {T.}~\bibnamefont {Hara}},\ and\ \bibinfo {author} {\bibfnamefont {H.}~\bibnamefont {Tasaki}},\ }\bibfield  {title} {\bibinfo {title} {{Time scales in the approach to equilibrium of macroscopic quantum systems}},\ }\href {https://doi.org/10.1103/PhysRevLett.111.140401} {\bibfield  {journal} {\bibinfo  {journal} {Physical Review Letters}\ }\textbf {\bibinfo {volume} {111}},\ \bibinfo {pages} {140401} (\bibinfo {year} {2013})}\BibitemShut {NoStop}%
\bibitem [{\citenamefont {Dymarsky}(2019)}]{Dymarsky2019-finite}%
  \BibitemOpen
  \bibfield  {author} {\bibinfo {author} {\bibfnamefont {A.}~\bibnamefont {Dymarsky}},\ }\bibfield  {title} {\bibinfo {title} {{Mechanism of macroscopic equilibration of isolated quantum systems}},\ }\href {https://doi.org/10.1103/PhysRevB.99.224302} {\bibfield  {journal} {\bibinfo  {journal} {Physical Review B}\ }\textbf {\bibinfo {volume} {99}},\ \bibinfo {pages} {224302} (\bibinfo {year} {2019})}\BibitemShut {NoStop}%
\bibitem [{\citenamefont {Knipschild}\ and\ \citenamefont {Gemmer}(2020)}]{Knipschild2020-srd}%
  \BibitemOpen
  \bibfield  {author} {\bibinfo {author} {\bibfnamefont {L.}~\bibnamefont {Knipschild}}\ and\ \bibinfo {author} {\bibfnamefont {J.}~\bibnamefont {Gemmer}},\ }\bibfield  {title} {\bibinfo {title} {{Modern concepts of quantum equilibration do not rule out strange relaxation dynamics}},\ }\href {https://doi.org/10.1103/physreve.101.062205} {\bibfield  {journal} {\bibinfo  {journal} {Physical Review E}\ }\textbf {\bibinfo {volume} {101}},\ \bibinfo {pages} {062205} (\bibinfo {year} {2020})}\BibitemShut {NoStop}%
\bibitem [{\citenamefont {Bocchieri}\ and\ \citenamefont {Loinger}(1957)}]{Bocchieri1957-recurrence}%
  \BibitemOpen
  \bibfield  {author} {\bibinfo {author} {\bibfnamefont {P.}~\bibnamefont {Bocchieri}}\ and\ \bibinfo {author} {\bibfnamefont {A.}~\bibnamefont {Loinger}},\ }\bibfield  {title} {\bibinfo {title} {{Quantum Recurrence Theorem}},\ }\href {https://doi.org/10.1103/physrev.107.337} {\bibfield  {journal} {\bibinfo  {journal} {Physical Review}\ }\textbf {\bibinfo {volume} {107}},\ \bibinfo {pages} {337} (\bibinfo {year} {1957})}\BibitemShut {NoStop}%
\bibitem [{\citenamefont {Percival}(1961)}]{Percival1961-recurrence}%
  \BibitemOpen
  \bibfield  {author} {\bibinfo {author} {\bibfnamefont {I.~C.}\ \bibnamefont {Percival}},\ }\bibfield  {title} {\bibinfo {title} {{Almost periodicity and the quantal \textit{H} theorem}},\ }\href {https://doi.org/10.1063/1.1703705} {\bibfield  {journal} {\bibinfo  {journal} {Journal of Mathematical Physics}\ }\textbf {\bibinfo {volume} {2}},\ \bibinfo {pages} {235} (\bibinfo {year} {1961})}\BibitemShut {NoStop}%
\bibitem [{\citenamefont {Peres}(1982)}]{Peres1982-recurrence}%
  \BibitemOpen
  \bibfield  {author} {\bibinfo {author} {\bibfnamefont {A.}~\bibnamefont {Peres}},\ }\bibfield  {title} {\bibinfo {title} {{Recurrence phenomena in quantum dynamics}},\ }\href {https://doi.org/10.1103/physrevlett.49.1118} {\bibfield  {journal} {\bibinfo  {journal} {Physical Review Letters}\ }\textbf {\bibinfo {volume} {49}},\ \bibinfo {pages} {1118} (\bibinfo {year} {1982})}\BibitemShut {NoStop}%
\bibitem [{\citenamefont {Peres}(1984)}]{Peres1984-tr}%
  \BibitemOpen
  \bibfield  {author} {\bibinfo {author} {\bibfnamefont {A.}~\bibnamefont {Peres}},\ }\bibfield  {title} {\bibinfo {title} {{Stability of quantum motion in chaotic and regular systems}},\ }\href {https://doi.org/10.1103/PhysRevA.30.1610} {\bibfield  {journal} {\bibinfo  {journal} {Physical Review A}\ }\textbf {\bibinfo {volume} {30}},\ \bibinfo {pages} {1610} (\bibinfo {year} {1984})}\BibitemShut {NoStop}%
\bibitem [{\citenamefont {Ermakov}\ and\ \citenamefont {Fine}(2021)}]{Ermakov2021-acr}%
  \BibitemOpen
  \bibfield  {author} {\bibinfo {author} {\bibfnamefont {I.}~\bibnamefont {Ermakov}}\ and\ \bibinfo {author} {\bibfnamefont {B.~V.}\ \bibnamefont {Fine}},\ }\bibfield  {title} {\bibinfo {title} {{Almost complete revivals in quantum many-body systems}},\ }\href {https://doi.org/10.1103/physreva.104.l050202} {\bibfield  {journal} {\bibinfo  {journal} {Physical Review A}\ }\textbf {\bibinfo {volume} {104}},\ \bibinfo {pages} {L050202} (\bibinfo {year} {2021})}\BibitemShut {NoStop}%
\bibitem [{\citenamefont {Fannes}\ \emph {et~al.}(1992)\citenamefont {Fannes}, \citenamefont {Nachtergaele},\ and\ \citenamefont {Werner}}]{Fannes1992-mps}%
  \BibitemOpen
  \bibfield  {author} {\bibinfo {author} {\bibfnamefont {M.}~\bibnamefont {Fannes}}, \bibinfo {author} {\bibfnamefont {B.}~\bibnamefont {Nachtergaele}},\ and\ \bibinfo {author} {\bibfnamefont {R.~F.}\ \bibnamefont {Werner}},\ }\bibfield  {title} {\bibinfo {title} {{Finitely correlated states on quantum spin chains}},\ }\href {https://doi.org/10.1007/bf02099178} {\bibfield  {journal} {\bibinfo  {journal} {Communications in Mathematical Physics}\ }\textbf {\bibinfo {volume} {144}},\ \bibinfo {pages} {443} (\bibinfo {year} {1992})}\BibitemShut {NoStop}%
\bibitem [{\citenamefont {Perez-Garcia}\ \emph {et~al.}(2007)\citenamefont {Perez-Garcia}, \citenamefont {Verstraete}, \citenamefont {Wolf},\ and\ \citenamefont {Cirac}}]{Perez-Garcia2007-mps}%
  \BibitemOpen
  \bibfield  {author} {\bibinfo {author} {\bibfnamefont {D.}~\bibnamefont {Perez-Garcia}}, \bibinfo {author} {\bibfnamefont {F.}~\bibnamefont {Verstraete}}, \bibinfo {author} {\bibfnamefont {M.~M.}\ \bibnamefont {Wolf}},\ and\ \bibinfo {author} {\bibfnamefont {J.~I.}\ \bibnamefont {Cirac}},\ }\bibfield  {title} {\bibinfo {title} {{Matrix product state representations}},\ }\href {https://doi.org/10.26421/qic7.5-6-1} {\bibfield  {journal} {\bibinfo  {journal} {Quantum Information \& Computation}\ }\textbf {\bibinfo {volume} {7}},\ \bibinfo {pages} {401} (\bibinfo {year} {2007})}\BibitemShut {NoStop}%
\bibitem [{\citenamefont {White}(1992)}]{White1992-dmrg}%
  \BibitemOpen
  \bibfield  {author} {\bibinfo {author} {\bibfnamefont {S.~R.}\ \bibnamefont {White}},\ }\bibfield  {title} {\bibinfo {title} {{Density matrix formulation for quantum renormalization groups}},\ }\href {https://doi.org/10.1103/PhysRevLett.69.2863} {\bibfield  {journal} {\bibinfo  {journal} {Physical Review Letters}\ }\textbf {\bibinfo {volume} {69}},\ \bibinfo {pages} {2863} (\bibinfo {year} {1992})}\BibitemShut {NoStop}%
\bibitem [{\citenamefont {White}(1993)}]{White1993-dmrg}%
  \BibitemOpen
  \bibfield  {author} {\bibinfo {author} {\bibfnamefont {S.~R.}\ \bibnamefont {White}},\ }\bibfield  {title} {\bibinfo {title} {{Density-matrix algorithms for quantum renormalization groups}},\ }\href {https://doi.org/10.1103/physrevb.48.10345} {\bibfield  {journal} {\bibinfo  {journal} {Physical Review B}\ }\textbf {\bibinfo {volume} {48}},\ \bibinfo {pages} {10345} (\bibinfo {year} {1993})}\BibitemShut {NoStop}%
\bibitem [{\citenamefont {Schollw{\"{o}}ck}(2005)}]{Schollwock2005-dmrg}%
  \BibitemOpen
  \bibfield  {author} {\bibinfo {author} {\bibfnamefont {U.}~\bibnamefont {Schollw{\"{o}}ck}},\ }\bibfield  {title} {\bibinfo {title} {{The density-matrix renormalization group}},\ }\href {https://doi.org/10.1103/RevModPhys.77.259} {\bibfield  {journal} {\bibinfo  {journal} {Reviews of Modern Physics}\ }\textbf {\bibinfo {volume} {77}},\ \bibinfo {pages} {259} (\bibinfo {year} {2005})}\BibitemShut {NoStop}%
\bibitem [{\citenamefont {Schollw{\"{o}}ck}(2011)}]{Schollwock2011-dmrg}%
  \BibitemOpen
  \bibfield  {author} {\bibinfo {author} {\bibfnamefont {U.}~\bibnamefont {Schollw{\"{o}}ck}},\ }\bibfield  {title} {\bibinfo {title} {{The density-matrix renormalization group in the age of matrix product states}},\ }\href {https://doi.org/10.1016/j.aop.2010.09.012} {\bibfield  {journal} {\bibinfo  {journal} {Annals of Physics}\ }\textbf {\bibinfo {volume} {326}},\ \bibinfo {pages} {96} (\bibinfo {year} {2011})}\BibitemShut {NoStop}%
\bibitem [{\citenamefont {Verstraete}\ and\ \citenamefont {Cirac}(2006)}]{Verstraete2006-mps}%
  \BibitemOpen
  \bibfield  {author} {\bibinfo {author} {\bibfnamefont {F.}~\bibnamefont {Verstraete}}\ and\ \bibinfo {author} {\bibfnamefont {J.~I.}\ \bibnamefont {Cirac}},\ }\bibfield  {title} {\bibinfo {title} {{Matrix product states represent ground states faithfully}},\ }\href {https://doi.org/10.1103/PhysRevB.73.094423} {\bibfield  {journal} {\bibinfo  {journal} {Physical Review B}\ }\textbf {\bibinfo {volume} {73}},\ \bibinfo {pages} {094423} (\bibinfo {year} {2006})}\BibitemShut {NoStop}%
\bibitem [{\citenamefont {Hastings}(2007)}]{Hastings2007-mps}%
  \BibitemOpen
  \bibfield  {author} {\bibinfo {author} {\bibfnamefont {M.~B.}\ \bibnamefont {Hastings}},\ }\bibfield  {title} {\bibinfo {title} {{An area law for one-dimensional quantum systems}},\ }\href {https://doi.org/10.1088/1742-5468/2007/08/P08024} {\bibfield  {journal} {\bibinfo  {journal} {Journal of Statistical Mechanics: Theory and Experiment}\ }\textbf {\bibinfo {volume} {2007}},\ \bibinfo {pages} {P08024} (\bibinfo {year} {2007})}\BibitemShut {NoStop}%
\bibitem [{\citenamefont {Sompet}\ \emph {et~al.}(2022)\citenamefont {Sompet}, \citenamefont {Hirthe}, \citenamefont {Bourgund}, \citenamefont {Chalopin}, \citenamefont {Bibo}, \citenamefont {Koepsell}, \citenamefont {Bojovi\'{c}}, \citenamefont {Verresen}, \citenamefont {Pollmann}, \citenamefont {Salomon}, \citenamefont {Gross}, \citenamefont {Hilker},\ and\ \citenamefont {Bloch}}]{Sompet2022-mps}%
  \BibitemOpen
  \bibfield  {author} {\bibinfo {author} {\bibfnamefont {P.}~\bibnamefont {Sompet}}, \bibinfo {author} {\bibfnamefont {S.}~\bibnamefont {Hirthe}}, \bibinfo {author} {\bibfnamefont {D.}~\bibnamefont {Bourgund}}, \bibinfo {author} {\bibfnamefont {T.}~\bibnamefont {Chalopin}}, \bibinfo {author} {\bibfnamefont {J.}~\bibnamefont {Bibo}}, \bibinfo {author} {\bibfnamefont {J.}~\bibnamefont {Koepsell}}, \bibinfo {author} {\bibfnamefont {P.}~\bibnamefont {Bojovi\'{c}}}, \bibinfo {author} {\bibfnamefont {R.}~\bibnamefont {Verresen}}, \bibinfo {author} {\bibfnamefont {F.}~\bibnamefont {Pollmann}}, \bibinfo {author} {\bibfnamefont {G.}~\bibnamefont {Salomon}}, \bibinfo {author} {\bibfnamefont {C.}~\bibnamefont {Gross}}, \bibinfo {author} {\bibfnamefont {T.~A.}\ \bibnamefont {Hilker}},\ and\ \bibinfo {author} {\bibfnamefont {I.}~\bibnamefont {Bloch}},\ }\bibfield  {title} {\bibinfo {title} {{Realizing the symmetry-protected Haldane phase in Fermi-Hubbard ladders}},\ }\href {https://doi.org/10.1038/s41586-022-04688-z}
  {\bibfield  {journal} {\bibinfo  {journal} {Nature}\ }\textbf {\bibinfo {volume} {606}},\ \bibinfo {pages} {484} (\bibinfo {year} {2022})}\BibitemShut {NoStop}%
\bibitem [{\citenamefont {Ran}(2020)}]{Ran2020-mps}%
  \BibitemOpen
  \bibfield  {author} {\bibinfo {author} {\bibfnamefont {S.-J.}\ \bibnamefont {Ran}},\ }\bibfield  {title} {\bibinfo {title} {{Encoding of matrix product states into quantum circuits of one- and two-qubit gates}},\ }\href {https://doi.org/10.1103/physreva.101.032310} {\bibfield  {journal} {\bibinfo  {journal} {Physical Review A}\ }\textbf {\bibinfo {volume} {101}},\ \bibinfo {pages} {032310} (\bibinfo {year} {2020})}\BibitemShut {NoStop}%
\bibitem [{\citenamefont {Shirakawa}\ \emph {et~al.}(2024)\citenamefont {Shirakawa}, \citenamefont {Ueda},\ and\ \citenamefont {Yunoki}}]{Shirakawa2024-mps}%
  \BibitemOpen
  \bibfield  {author} {\bibinfo {author} {\bibfnamefont {T.}~\bibnamefont {Shirakawa}}, \bibinfo {author} {\bibfnamefont {H.}~\bibnamefont {Ueda}},\ and\ \bibinfo {author} {\bibfnamefont {S.}~\bibnamefont {Yunoki}},\ }\bibfield  {title} {\bibinfo {title} {{Automatic quantum circuit encoding of a given arbitrary quantum state}},\ }\href {https://doi.org/10.1103/physrevresearch.6.043008} {\bibfield  {journal} {\bibinfo  {journal} {Physical Review Research}\ }\textbf {\bibinfo {volume} {6}},\ \bibinfo {pages} {043008} (\bibinfo {year} {2024})}\BibitemShut {NoStop}%
\bibitem [{\citenamefont {Rudolph}\ \emph {et~al.}(2024)\citenamefont {Rudolph}, \citenamefont {Chen}, \citenamefont {Miller}, \citenamefont {Acharya},\ and\ \citenamefont {Perdomo-Ortiz}}]{Rudolph2024-mps}%
  \BibitemOpen
  \bibfield  {author} {\bibinfo {author} {\bibfnamefont {M.~S.}\ \bibnamefont {Rudolph}}, \bibinfo {author} {\bibfnamefont {J.}~\bibnamefont {Chen}}, \bibinfo {author} {\bibfnamefont {J.}~\bibnamefont {Miller}}, \bibinfo {author} {\bibfnamefont {A.}~\bibnamefont {Acharya}},\ and\ \bibinfo {author} {\bibfnamefont {A.}~\bibnamefont {Perdomo-Ortiz}},\ }\bibfield  {title} {\bibinfo {title} {{Decomposition of matrix product states into shallow quantum circuits}},\ }\href {https://doi.org/10.1088/2058-9565/ad04e6} {\bibfield  {journal} {\bibinfo  {journal} {Quantum Science and Technology}\ }\textbf {\bibinfo {volume} {9}},\ \bibinfo {pages} {015012} (\bibinfo {year} {2024})}\BibitemShut {NoStop}%
\bibitem [{\citenamefont {Iaconis}\ \emph {et~al.}(2024)\citenamefont {Iaconis}, \citenamefont {Johri},\ and\ \citenamefont {Zhu}}]{Iaconis2024-mps}%
  \BibitemOpen
  \bibfield  {author} {\bibinfo {author} {\bibfnamefont {J.}~\bibnamefont {Iaconis}}, \bibinfo {author} {\bibfnamefont {S.}~\bibnamefont {Johri}},\ and\ \bibinfo {author} {\bibfnamefont {E.~Y.}\ \bibnamefont {Zhu}},\ }\bibfield  {title} {\bibinfo {title} {{Quantum state preparation of normal distributions using matrix product states}},\ }\href {https://doi.org/10.1038/s41534-024-00805-0} {\bibfield  {journal} {\bibinfo  {journal} {npj Quantum Information}\ }\textbf {\bibinfo {volume} {10}},\ \bibinfo {pages} {15} (\bibinfo {year} {2024})}\BibitemShut {NoStop}%
\bibitem [{\citenamefont {Brand\~{a}o}\ \emph {et~al.}(2016)\citenamefont {Brand\~{a}o}, \citenamefont {Harrow},\ and\ \citenamefont {Horodecki}}]{Brandao2016-design}%
  \BibitemOpen
  \bibfield  {author} {\bibinfo {author} {\bibfnamefont {F.~G. S.~L.}\ \bibnamefont {Brand\~{a}o}}, \bibinfo {author} {\bibfnamefont {A.~W.}\ \bibnamefont {Harrow}},\ and\ \bibinfo {author} {\bibfnamefont {M.}~\bibnamefont {Horodecki}},\ }\bibfield  {title} {\bibinfo {title} {{Local random quantum circuits are approximate polynomial-designs}},\ }\href {https://doi.org/10.1007/s00220-016-2706-8} {\bibfield  {journal} {\bibinfo  {journal} {Communications in Mathematical Physics}\ }\textbf {\bibinfo {volume} {346}},\ \bibinfo {pages} {397} (\bibinfo {year} {2016})}\BibitemShut {NoStop}%
\bibitem [{\citenamefont {Trotter}(1959)}]{Trotter1959-trotter}%
  \BibitemOpen
  \bibfield  {author} {\bibinfo {author} {\bibfnamefont {H.~F.}\ \bibnamefont {Trotter}},\ }\bibfield  {title} {\bibinfo {title} {{On the product of semi-groups of operators}},\ }\href {https://doi.org/10.2307/2033649} {\bibfield  {journal} {\bibinfo  {journal} {Proceedings of the American Mathematical Society}\ }\textbf {\bibinfo {volume} {10}},\ \bibinfo {pages} {545} (\bibinfo {year} {1959})}\BibitemShut {NoStop}%
\bibitem [{\citenamefont {Suzuki}(1976)}]{Suzuki1976-trotter}%
  \BibitemOpen
  \bibfield  {author} {\bibinfo {author} {\bibfnamefont {M.}~\bibnamefont {Suzuki}},\ }\bibfield  {title} {\bibinfo {title} {{Generalized Trotter's formula and systematic approximants of exponential operators and inner derivations with applications to many-body problems}},\ }\href {https://doi.org/10.1007/bf01609348} {\bibfield  {journal} {\bibinfo  {journal} {Communications in Mathematical Physics}\ }\textbf {\bibinfo {volume} {51}},\ \bibinfo {pages} {183} (\bibinfo {year} {1976})}\BibitemShut {NoStop}%
\bibitem [{\citenamefont {Verstraete}\ \emph {et~al.}(2004)\citenamefont {Verstraete}, \citenamefont {Garc\'{\i}a-Ripoll},\ and\ \citenamefont {Cirac}}]{Verstraete2004-mpo}%
  \BibitemOpen
  \bibfield  {author} {\bibinfo {author} {\bibfnamefont {F.}~\bibnamefont {Verstraete}}, \bibinfo {author} {\bibfnamefont {J.~J.}\ \bibnamefont {Garc\'{\i}a-Ripoll}},\ and\ \bibinfo {author} {\bibfnamefont {J.~I.}\ \bibnamefont {Cirac}},\ }\bibfield  {title} {\bibinfo {title} {{Matrix product density operators: simulation of finite-temperature and dissipative systems}},\ }\href {https://doi.org/10.1103/PhysRevLett.93.207204} {\bibfield  {journal} {\bibinfo  {journal} {Physical Review Letters}\ }\textbf {\bibinfo {volume} {93}},\ \bibinfo {pages} {207204} (\bibinfo {year} {2004})}\BibitemShut {NoStop}%
\bibitem [{\citenamefont {Pirvu}\ \emph {et~al.}(2010)\citenamefont {Pirvu}, \citenamefont {Murg}, \citenamefont {Cirac},\ and\ \citenamefont {Verstraete}}]{Pirvu2010-mpo}%
  \BibitemOpen
  \bibfield  {author} {\bibinfo {author} {\bibfnamefont {B.}~\bibnamefont {Pirvu}}, \bibinfo {author} {\bibfnamefont {V.}~\bibnamefont {Murg}}, \bibinfo {author} {\bibfnamefont {J.~I.}\ \bibnamefont {Cirac}},\ and\ \bibinfo {author} {\bibfnamefont {F.}~\bibnamefont {Verstraete}},\ }\bibfield  {title} {\bibinfo {title} {{Matrix product operator representations}},\ }\href {https://doi.org/10.1088/1367-2630/12/2/025012} {\bibfield  {journal} {\bibinfo  {journal} {New Journal of Physics}\ }\textbf {\bibinfo {volume} {12}},\ \bibinfo {pages} {025012} (\bibinfo {year} {2010})}\BibitemShut {NoStop}%
\bibitem [{\citenamefont {Alhambra}\ and\ \citenamefont {Cirac}(2021)}]{Alhambra2021-mpo}%
  \BibitemOpen
  \bibfield  {author} {\bibinfo {author} {\bibfnamefont {{\'{A}}.~M.}\ \bibnamefont {Alhambra}}\ and\ \bibinfo {author} {\bibfnamefont {J.~I.}\ \bibnamefont {Cirac}},\ }\bibfield  {title} {\bibinfo {title} {{Locally accurate tensor networks for thermal states and time evolution}},\ }\href {https://doi.org/10.1103/prxquantum.2.040331} {\bibfield  {journal} {\bibinfo  {journal} {PRX Quantum}\ }\textbf {\bibinfo {volume} {2}},\ \bibinfo {pages} {040331} (\bibinfo {year} {2021})}\BibitemShut {NoStop}%
\bibitem [{\citenamefont {Fishman}\ \emph {et~al.}(2022{\natexlab{a}})\citenamefont {Fishman}, \citenamefont {White},\ and\ \citenamefont {Stoudenmire}}]{Fishman2022-itensor}%
  \BibitemOpen
  \bibfield  {author} {\bibinfo {author} {\bibfnamefont {M.}~\bibnamefont {Fishman}}, \bibinfo {author} {\bibfnamefont {S.~R.}\ \bibnamefont {White}},\ and\ \bibinfo {author} {\bibfnamefont {E.~M.}\ \bibnamefont {Stoudenmire}},\ }\bibfield  {title} {\bibinfo {title} {{The ITensor software library for tensor network calculations}},\ }\href {https://doi.org/10.21468/scipostphyscodeb.4} {\bibfield  {journal} {\bibinfo  {journal} {SciPost Physics Codebases}\ }\textbf {\bibinfo {volume} {{\normalfont 4}}} (\bibinfo {year} {2022}{\natexlab{a}})}\BibitemShut {NoStop}%
\bibitem [{\citenamefont {Fishman}\ \emph {et~al.}(2022{\natexlab{b}})\citenamefont {Fishman}, \citenamefont {White},\ and\ \citenamefont {Stoudenmire}}]{Fishman2022-code}%
  \BibitemOpen
  \bibfield  {author} {\bibinfo {author} {\bibfnamefont {M.}~\bibnamefont {Fishman}}, \bibinfo {author} {\bibfnamefont {S.~R.}\ \bibnamefont {White}},\ and\ \bibinfo {author} {\bibfnamefont {E.~M.}\ \bibnamefont {Stoudenmire}},\ }\bibfield  {title} {\bibinfo {title} {{Codebase release 0.3 for {ITensor}}},\ }\href {https://doi.org/10.21468/SciPostPhysCodeb.4-r0.3} {\bibfield  {journal} {\bibinfo  {journal} {SciPost Physics Codebases}\ }\textbf {\bibinfo {volume} {{\normalfont 4--r0.3}}} (\bibinfo {year} {2022}{\natexlab{b}})}\BibitemShut {NoStop}%
\bibitem [{MPO()}]{MPO-truncation}%
  \BibitemOpen
  \href@noop {} {}\bibinfo {note} {We perform Step~\ref{m2:heisenberg} of Method 2 with the maximum bond dimension of $2^{11}$. For $L=10$, $20$, $30$, and $40$, the truncation error defined by $\epsilon_O(\tau)\coloneqq 1-(\|O(\tau)\|_F/\|O\|_F)^2$ under the Hamiltonian~\eqref{eq:Ising} satisfies $\epsilon_O(20)<6\times 10^{-4}$ and $\epsilon_O(30)<10^{-2}$ for $O=M^y,M^z$.}\BibitemShut {Stop}%
\bibitem [{met()}]{method-comparison}%
  \BibitemOpen
  \href@noop {} {}\bibinfo {note} {The burst amplitudes obtained by Method 1, Method 2 ($\lambda_L=0$), and Method 2 ($\lambda_L=72/L^2$) with $L=40$, $\tau=20$, and $\chi=10$ under the Hamiltonian~\eqref{eq:Ising} are 0.222, 0.371, and 0.305 for $O=M^y$, and 0.337, 0.396, and 0.330 for $O=M^z$, respectively.}\BibitemShut {Stop}%
\bibitem [{\citenamefont {Kim}\ and\ \citenamefont {Huse}(2013)}]{Kim2013-ising}%
  \BibitemOpen
  \bibfield  {author} {\bibinfo {author} {\bibfnamefont {H.}~\bibnamefont {Kim}}\ and\ \bibinfo {author} {\bibfnamefont {D.~A.}\ \bibnamefont {Huse}},\ }\bibfield  {title} {\bibinfo {title} {{Ballistic spreading of entanglement in a diffusive nonintegrable system}},\ }\href {https://doi.org/10.1103/PhysRevLett.111.127205} {\bibfield  {journal} {\bibinfo  {journal} {Physical Review Letters}\ }\textbf {\bibinfo {volume} {111}},\ \bibinfo {pages} {127205} (\bibinfo {year} {2013})}\BibitemShut {NoStop}%
\bibitem [{\citenamefont {Zhou}\ and\ \citenamefont {Luitz}(2017)}]{Zhou2017-ising}%
  \BibitemOpen
  \bibfield  {author} {\bibinfo {author} {\bibfnamefont {T.}~\bibnamefont {Zhou}}\ and\ \bibinfo {author} {\bibfnamefont {D.~J.}\ \bibnamefont {Luitz}},\ }\bibfield  {title} {\bibinfo {title} {{Operator entanglement entropy of the time evolution operator in chaotic systems}},\ }\href {https://doi.org/10.1103/PhysRevB.95.094206} {\bibfield  {journal} {\bibinfo  {journal} {Physical Review B}\ }\textbf {\bibinfo {volume} {95}},\ \bibinfo {pages} {094206} (\bibinfo {year} {2017})}\BibitemShut {NoStop}%
\bibitem [{\citenamefont {Cao}(2021)}]{Cao2021-ising}%
  \BibitemOpen
  \bibfield  {author} {\bibinfo {author} {\bibfnamefont {X.}~\bibnamefont {Cao}},\ }\bibfield  {title} {\bibinfo {title} {{A statistical mechanism for operator growth}},\ }\href {https://doi.org/10.1088/1751-8121/abe77c} {\bibfield  {journal} {\bibinfo  {journal} {Journal of Physics A: Mathematical and Theoretical}\ }\textbf {\bibinfo {volume} {54}},\ \bibinfo {pages} {144001} (\bibinfo {year} {2021})}\BibitemShut {NoStop}%
\bibitem [{\citenamefont {Chiba}(2024)}]{Chiba2024-ising}%
  \BibitemOpen
  \bibfield  {author} {\bibinfo {author} {\bibfnamefont {Y.}~\bibnamefont {Chiba}},\ }\bibfield  {title} {\bibinfo {title} {{Proof of absence of local conserved quantities in the mixed-field Ising chain}},\ }\href {https://doi.org/10.1103/physrevb.109.035123} {\bibfield  {journal} {\bibinfo  {journal} {Physical Review B}\ }\textbf {\bibinfo {volume} {109}},\ \bibinfo {pages} {035123} (\bibinfo {year} {2024})}\BibitemShut {NoStop}%
\bibitem [{\citenamefont {Calabrese}\ and\ \citenamefont {Cardy}(2005)}]{Calabrese2005-ee}%
  \BibitemOpen
  \bibfield  {author} {\bibinfo {author} {\bibfnamefont {P.}~\bibnamefont {Calabrese}}\ and\ \bibinfo {author} {\bibfnamefont {J.}~\bibnamefont {Cardy}},\ }\bibfield  {title} {\bibinfo {title} {{Evolution of entanglement entropy in one-dimensional systems}},\ }\href {https://doi.org/10.1088/1742-5468/2005/04/P04010} {\bibfield  {journal} {\bibinfo  {journal} {Journal of Statistical Mechanics: Theory and Experiment}\ }\textbf {\bibinfo {volume} {2005}},\ \bibinfo {pages} {P04010} (\bibinfo {year} {2005})}\BibitemShut {NoStop}%
\bibitem [{\citenamefont {Poulin}\ \emph {et~al.}(2011)\citenamefont {Poulin}, \citenamefont {Qarry}, \citenamefont {Somma},\ and\ \citenamefont {Verstraete}}]{Poulin2011-design}%
  \BibitemOpen
  \bibfield  {author} {\bibinfo {author} {\bibfnamefont {D.}~\bibnamefont {Poulin}}, \bibinfo {author} {\bibfnamefont {A.}~\bibnamefont {Qarry}}, \bibinfo {author} {\bibfnamefont {R.}~\bibnamefont {Somma}},\ and\ \bibinfo {author} {\bibfnamefont {F.}~\bibnamefont {Verstraete}},\ }\bibfield  {title} {\bibinfo {title} {{Quantum simulation of time-dependent Hamiltonians and the convenient illusion of Hilbert space}},\ }\href {https://doi.org/10.1103/PhysRevLett.106.170501} {\bibfield  {journal} {\bibinfo  {journal} {Physical Review Letters}\ }\textbf {\bibinfo {volume} {106}},\ \bibinfo {pages} {170501} (\bibinfo {year} {2011})}\BibitemShut {NoStop}%
\bibitem [{\citenamefont {Low}(2009)}]{Low2009-design}%
  \BibitemOpen
  \bibfield  {author} {\bibinfo {author} {\bibfnamefont {R.~A.}\ \bibnamefont {Low}},\ }\bibfield  {title} {\bibinfo {title} {{Large deviation bounds for $k$-designs}},\ }\href {https://doi.org/10.1098/rspa.2009.0232} {\bibfield  {journal} {\bibinfo  {journal} {Proceedings of the Royal Society A}\ }\textbf {\bibinfo {volume} {465}},\ \bibinfo {pages} {3289} (\bibinfo {year} {2009})}\BibitemShut {NoStop}%
\bibitem [{\citenamefont {Low}(2010)}]{Low2010-design}%
  \BibitemOpen
  \bibfield  {author} {\bibinfo {author} {\bibfnamefont {R.~A.}\ \bibnamefont {Low}},\ }\emph {\bibinfo {title} {{Pseudo-randomness and Learning in Quantum Computation}}},\ \href@noop {} {Ph.D. thesis},\ \bibinfo  {school} {University of Bristol, UK}, \bibinfo {address} {University of Bristol} (\bibinfo {year} {2010})\BibitemShut {NoStop}%
\bibitem [{\citenamefont {Aubrun}\ \emph {et~al.}(2024)\citenamefont {Aubrun}, \citenamefont {Jenkinson},\ and\ \citenamefont {Szarek}}]{Aubrun2024-design}%
  \BibitemOpen
  \bibfield  {author} {\bibinfo {author} {\bibfnamefont {G.}~\bibnamefont {Aubrun}}, \bibinfo {author} {\bibfnamefont {J.}~\bibnamefont {Jenkinson}},\ and\ \bibinfo {author} {\bibfnamefont {S.~J.}\ \bibnamefont {Szarek}},\ }\bibfield  {title} {\bibinfo {title} {{Optimal constants in concentration inequalities on the sphere and in the Gauss space}},\ }\href {http://arxiv.org/abs/2406.13581} {\bibfield  {journal} {\bibinfo  {journal} {arXiv:2406.13581 [math.PR]}\ } (\bibinfo {year} {2024})}\BibitemShut {NoStop}%
\bibitem [{\citenamefont {Li}\ \emph {et~al.}(2024)\citenamefont {Li}, \citenamefont {Zheng}, \citenamefont {Liu}, \citenamefont {Jiang},\ and\ \citenamefont {Liu}}]{Li2024-design}%
  \BibitemOpen
  \bibfield  {author} {\bibinfo {author} {\bibfnamefont {Z.}~\bibnamefont {Li}}, \bibinfo {author} {\bibfnamefont {H.}~\bibnamefont {Zheng}}, \bibinfo {author} {\bibfnamefont {J.}~\bibnamefont {Liu}}, \bibinfo {author} {\bibfnamefont {L.}~\bibnamefont {Jiang}},\ and\ \bibinfo {author} {\bibfnamefont {Z.-W.}\ \bibnamefont {Liu}},\ }\bibfield  {title} {\bibinfo {title} {{Designs from local random quantum circuits with SU($d$) symmetry}},\ }\href {https://doi.org/10.1103/prxquantum.5.040349} {\bibfield  {journal} {\bibinfo  {journal} {PRX Quantum}\ }\textbf {\bibinfo {volume} {5}},\ \bibinfo {pages} {040349} (\bibinfo {year} {2024})}\BibitemShut {NoStop}%
\bibitem [{\citenamefont {Hearth}\ \emph {et~al.}(2025)\citenamefont {Hearth}, \citenamefont {Flynn}, \citenamefont {Chandran},\ and\ \citenamefont {Laumann}}]{Hearth2025-design}%
  \BibitemOpen
  \bibfield  {author} {\bibinfo {author} {\bibfnamefont {S.~N.}\ \bibnamefont {Hearth}}, \bibinfo {author} {\bibfnamefont {M.~O.}\ \bibnamefont {Flynn}}, \bibinfo {author} {\bibfnamefont {A.}~\bibnamefont {Chandran}},\ and\ \bibinfo {author} {\bibfnamefont {C.~R.}\ \bibnamefont {Laumann}},\ }\bibfield  {title} {\bibinfo {title} {{Unitary $k$-designs from random number-conserving quantum circuits}},\ }\href {https://doi.org/10.1103/physrevx.15.021022} {\bibfield  {journal} {\bibinfo  {journal} {Physical Review X}\ }\textbf {\bibinfo {volume} {15}},\ \bibinfo {pages} {021022} (\bibinfo {year} {2025})}\BibitemShut {NoStop}%
\bibitem [{\citenamefont {Mitsuhashi}\ \emph {et~al.}(2025)\citenamefont {Mitsuhashi}, \citenamefont {Suzuki}, \citenamefont {Soejima},\ and\ \citenamefont {Yoshioka}}]{Mitsuhashi2025-design}%
  \BibitemOpen
  \bibfield  {author} {\bibinfo {author} {\bibfnamefont {Y.}~\bibnamefont {Mitsuhashi}}, \bibinfo {author} {\bibfnamefont {R.}~\bibnamefont {Suzuki}}, \bibinfo {author} {\bibfnamefont {T.}~\bibnamefont {Soejima}},\ and\ \bibinfo {author} {\bibfnamefont {N.}~\bibnamefont {Yoshioka}},\ }\bibfield  {title} {\bibinfo {title} {{Unitary designs of symmetric local random circuits}},\ }\href {https://doi.org/10.1103/PhysRevLett.134.180404} {\bibfield  {journal} {\bibinfo  {journal} {Physical Review Letters}\ }\textbf {\bibinfo {volume} {134}},\ \bibinfo {pages} {180404} (\bibinfo {year} {2025})}\BibitemShut {NoStop}%
\end{thebibliography}%

\appendix
\section*{End Matter}

\begin{figure}[b]
    \centering
    \includegraphics[width=0.95\linewidth]{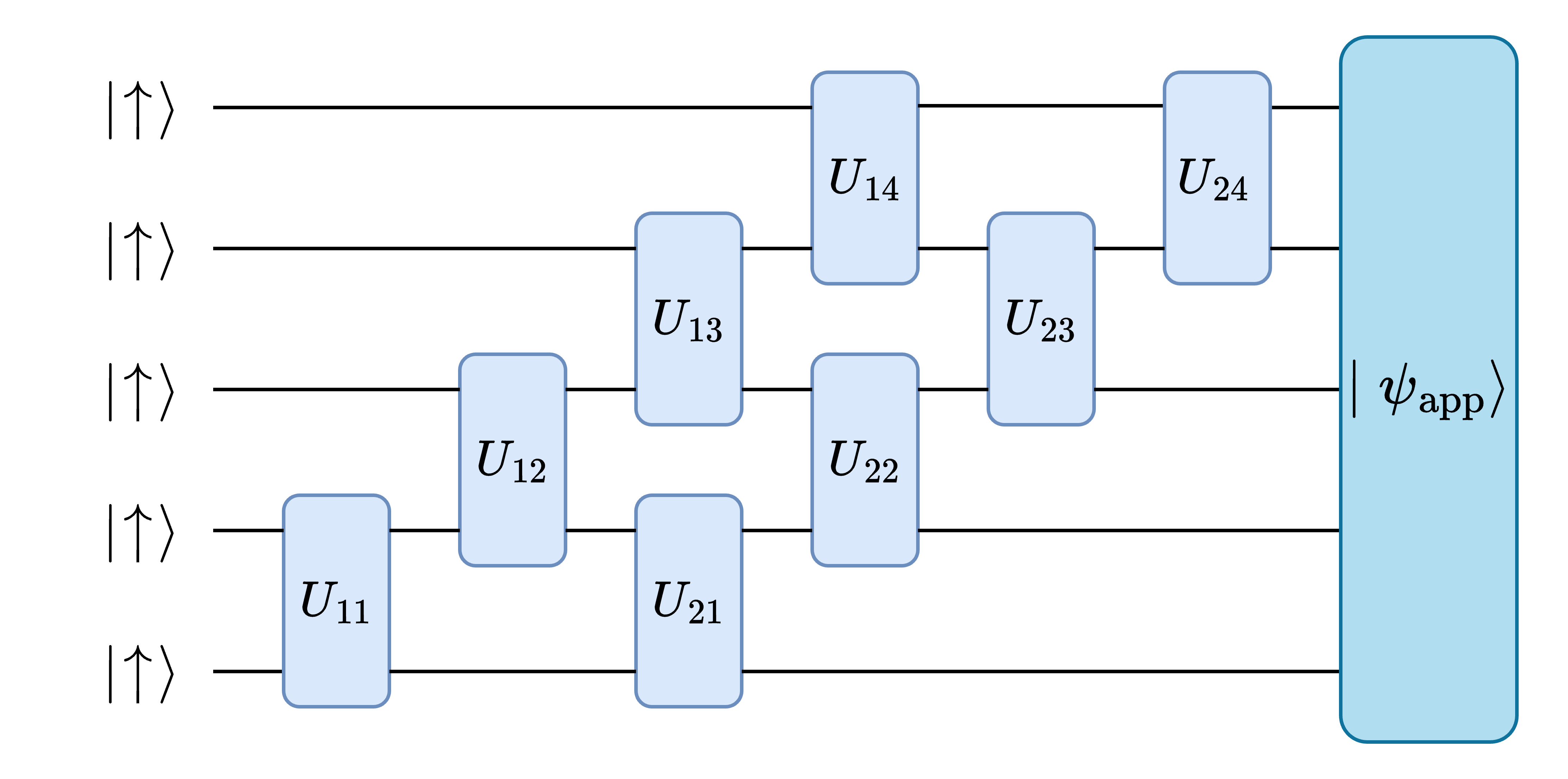}
    \caption{Schematic illustration of a quantum circuit that creates an approximate MPS $\ket{\psi_\mathrm{app}}$ with $L=5$ spins and $K=2$ layers.}
    \label{fig:MPS_QC}
\end{figure}

\begin{figure}[t]
    \centering
    \begin{minipage}[b]{0.95\linewidth}
        \centering
        \includegraphics[width=\linewidth]{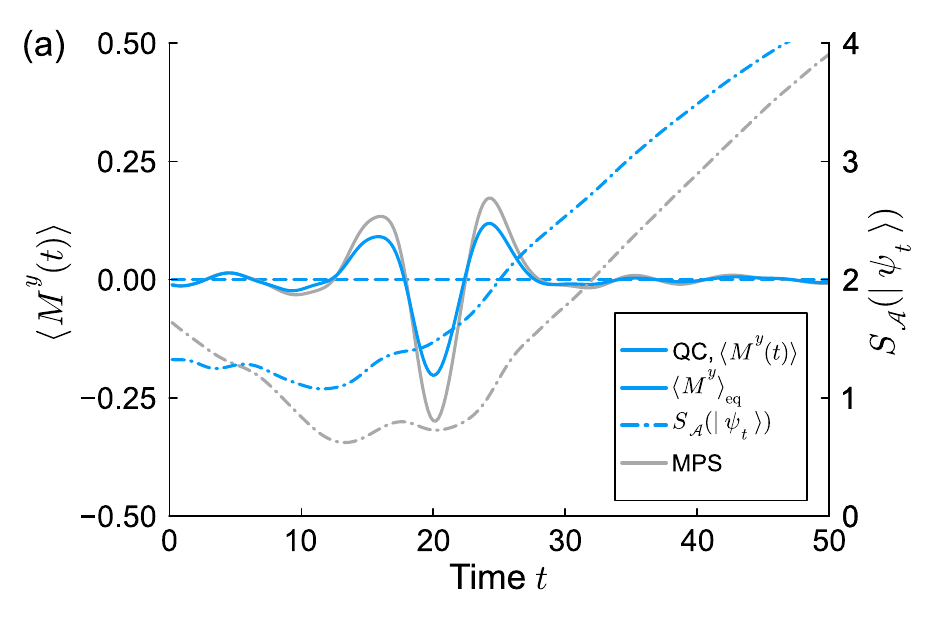}
    \end{minipage}
    \begin{minipage}[b]{0.95\linewidth}
    \centering
        \includegraphics[width=\linewidth]{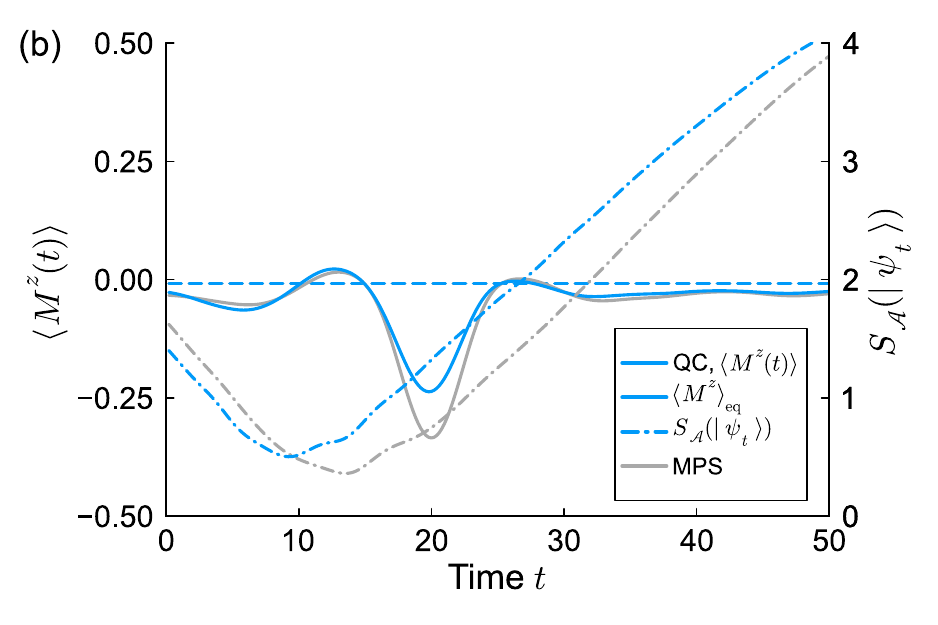}
    \end{minipage}
    \caption{Comparison of the time evolution between the target MPS $\ket{\psi_0}$ (gray) and its quantum circuit (QC) approximation $\ket{\psi_\mathrm{app}}$ (blue) with $L=40$ for the mixed-field Ising chain~\eqref{eq:Ising}. A solid curve and a dash-dotted curve represent the time evolution of the expectation value $\langle O(t)\rangle$ and that of entanglement entropy $S_\mathcal{A}(t)$ calculated for the half system $\mathcal{A}=\{1,2,\ldots,\lfloor L/2 \rfloor\}$, respectively. For gray curves, initial states are obtained by Method 2 with $\tau=20$, $\chi=10$, $\beta=0.1$, and $\lambda_L=72/L^2$. For blue curves, initial states are prepared by staircase quantum circuits of $K=5$ layers that approximate the exact MPSs. (a) Case of $O=M^y$. (b) Case of $O=M^z$.}
    \label{fig:burst_QC}
\end{figure}

\begin{figure}
    \centering
    \begin{minipage}[b]{0.95\linewidth}
        \centering
        \includegraphics[width=\linewidth]{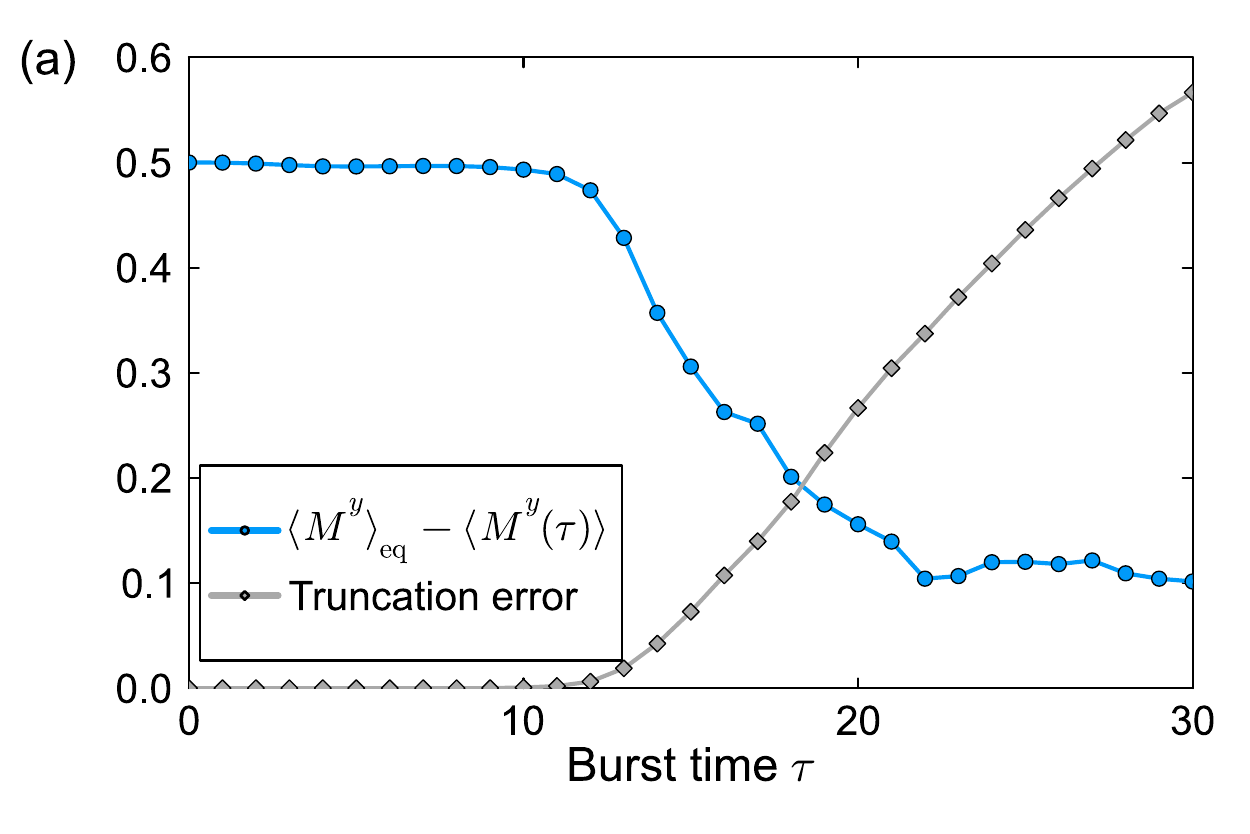}
    \end{minipage}
    \begin{minipage}[b]{0.95\linewidth}
    \centering
        \includegraphics[width=\linewidth]{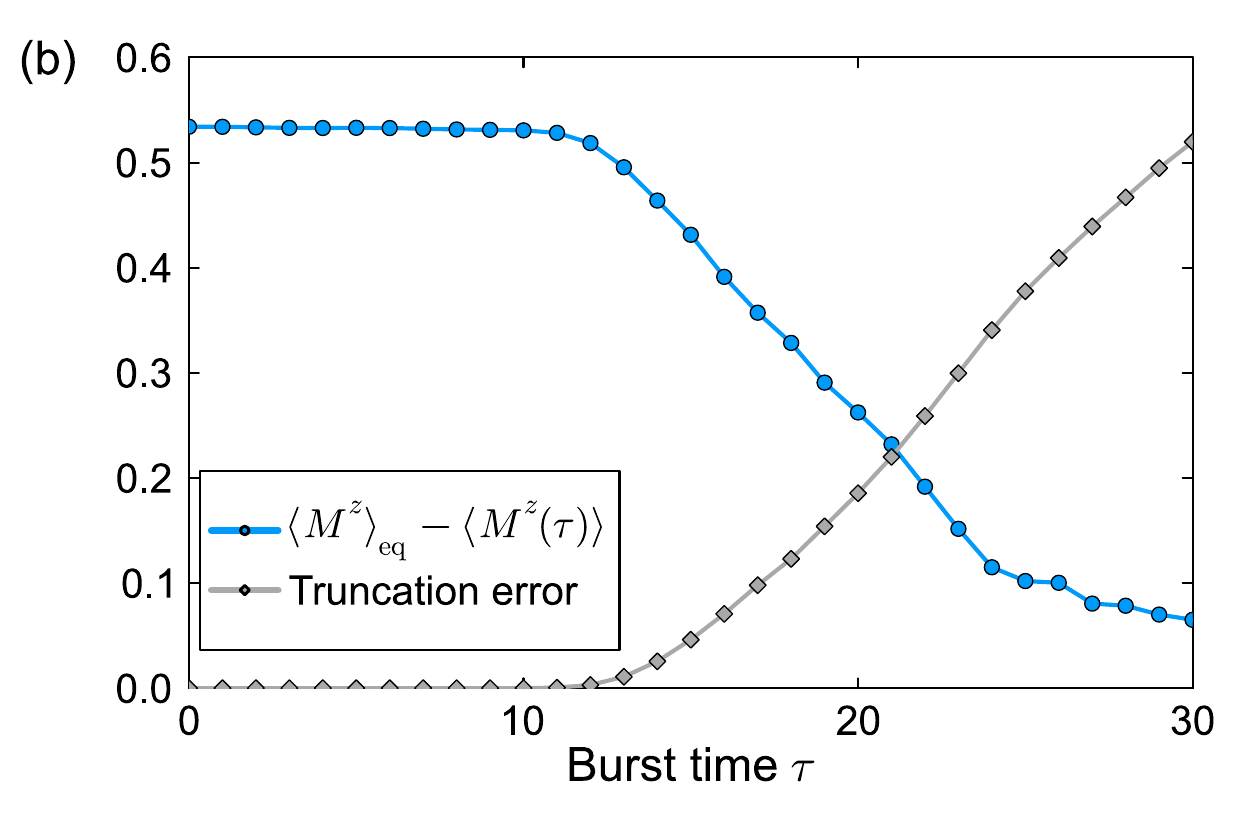}
    \end{minipage}
    \caption{Burst amplitude $\langle O\rangle_\mathrm{eq}-\langle O(\tau) \rangle$ versus the burst time $\tau$ in the thermodynamic limit for the mixed-field Ising chain~\eqref{eq:Ising}. Initial states are obtained by Method 1 with $\chi=10$, and a gray curve shows the average truncation error of the two bonds in an iMPS. (a) Case of $O=M^y$. (b) Case of $O=M^z$.}
    \label{fig:tauburst_Linf}
\end{figure}

\paragraph{Preparing initial states with quantum circuits}
An initial MPS that creates a burst at a given time can be approximately prepared by a shallow quantum circuit consisting only of neighboring two-qubit gates~\cite{Ran2020-mps,Shirakawa2024-mps, Rudolph2024-mps}.
Here we adopt the Iter$[D_i O_{\mathrm{all}}]$ method in Ref.~\cite{Rudolph2024-mps} to construct a staircase quantum circuit with $K$ layers (see Fig.~\ref{fig:MPS_QC}) that gives an approximate state $\ket{\psi_\mathrm{app}}$ of a target MPS $\ket{\psi_0}$.

Figure~\ref{fig:burst_QC} presents the comparison of bursts starting from two initial states: the exact MPS (gray curves) and the approximate state (blue curves).
A burst can still be observed starting from the latter, although the burst amplitude becomes a little smaller.
Entanglement entropy still shows a decrease starting from the approximate state in contrast to a typical linear growth; however, the decrease is less pronounced than in the case of the exact MPS.

\paragraph{Burst in the thermodynamic limit}
We demonstrate that a burst survives in the thermodynamic limit $L\to \infty$ by using the infinite matrix product state (iMPS) formalism.
Here, we follow Method 1, where we obtain an initial iMPS by truncating $e^{iH\tau}\ket{\Psi_\mathrm{GS}}$ with $\ket{\Psi_\mathrm{GS}}$ being the ground state of $O$.

The dependence of a burst amplitude on the burst time $\tau$ is shown in Fig.~\ref{fig:tauburst_Linf}.
For a small $\tau$ ($\lesssim 11$ for both $O=M^y$ and $O=M^z$), the burst amplitude $\langle O\rangle_\mathrm{eq}-\langle O(\tau)\rangle$ is initially large and remains nearly constant.
This plateau is consistent with the observation that the state $e^{iH\tau}\ket{\Psi_\mathrm{GS}}$ is well approximated by an iMPS of bond dimension $\chi=10$ in this time regime.
Then, the burst amplitude decays gradually for $\tau\gtrsim 11$.
This is accompanied by an increase in the truncation error, indicating that $e^{iH\tau}\ket{\Psi_\mathrm{GS}}$ becomes increasingly complex, eventually exceeding the representational capacity of an iMPS.
These behaviors are similar to the result shown in Fig.~\ref{fig:Ltau}.
Our results demonstrate that a burst is initially robust in the thermodynamic limit, but gradually decays as quantum scrambling becomes stronger.

\clearpage
\setcounter{equation}{0}
\setcounter{figure}{0}
\setcounter{section}{0}
\setcounter{table}{0}

\onecolumngrid

\begin{center}
    {\large \bf Supplemental Material}
\end{center}

\vspace{10pt}

In this Supplemental Material, we derive Eqs.~\eqref{eq:LRProb} and \eqref{eq:LRMPS} in the main text for the local dimension $d\ge 2$, the system size $L\ge 2$, and the number of gates $s\ge 1$.
By definition, the bond dimension satisfies $\chi\ge1$.

\section{Concentration of measure for local random circuits: Proof of Eq.~\eqref{eq:LRProb}}
\label{Appendix:ConcLR}

\begin{dfn}[Definition 1 of \cite{Brandao2016-design}]\label{dfn:opnorm}
    Let $D$ be the dimension of the Hilbert space, and $\mathrm{H}$ be the Haar measure on $\mathbb{U}(D)$ (the group of $D\times D$ unitary matrices).
    Then, for a distribution $\nu$ on $\mathbb{U}(D)$ and a positive integer $k\in \mathbb{Z}_{>0}$, define
    \begin{align}
        g(\nu,k)\coloneqq \|\mathbb{E}_{U\sim \nu}U^{\otimes k,k}-\mathbb{E}_{U\sim \mathrm{H}}U^{\otimes k,k}\|_\infty,~U^{\otimes k,k}\coloneqq U^{\otimes k}\otimes (U^*)^{\otimes k}.\label{eq:Appendix2norm}
    \end{align}
\end{dfn}

\begin{lem}[Theorem 5 of \cite{Brandao2016-design}]\label{lem:LRdesign}
    For $\nu_{d,L}^{*s}$ defined in the main text, the following inequality holds:
    \begin{align}
        g(\nu_{d,L}^{*1},k)\le 1- (42500L\lceil \log_d(4k)\rceil^2 d^2 k^{5+3.1/\ln d})^{-1}.
    \end{align}
\end{lem}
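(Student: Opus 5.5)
Since this is Theorem 5 of \cite{Brandao2016-design}, the plan follows their spectral-gap argument rather than anything specific to this paper. The first step is to identify $g(\nu_{d,L}^{*1},k)$ with a spectral quantity. For a single step of the local random circuit,
\[
M_k\coloneqq\mathbb{E}_{U\sim\nu_{d,L}^{*1}}U^{\otimes k,k}=\frac{1}{L-1}\sum_{i=1}^{L-1}P_{i,i+1},
\]
where $P_{i,i+1}=\mathbb{E}_{U\sim\mathrm{H}(d^2)}U_{i,i+1}^{\otimes k,k}$ is the orthogonal projector onto the span of the permutation operators acting on the $k$ copies of qudits $i,i+1$. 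The Haar moment $\mathbb{E}_{U\sim \mathrm{H}}U^{\otimes k,k}$ is the projector onto the common $+1$ eigenspace of all $P_{i,i+1}$, which is spanned by the global permutation states. Hence $g(\nu^{*1}_{d,L},k)=1-\Delta(H_{L,k})/(L-1)$, where $H_{L,k}\coloneqq\sum_i(I-P_{i,i+1})$ is a frustration-free, nearest-neighbour local Hamiltonian and $\Delta$ denotes its spectral gap. The statement therefore reduces to the lower bound
\[
\Delta(H_{L,k})\ge \bigl(42500\lceil\log_d(4k)\rceil^2d^2k^{5+3.1/\ln d}\bigr)^{-1}.
\]

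The second step is to lower bound this gap uniformly in $L$ using Nachtergaele's martingale method, which yields a frustration-free gap in terms of the gap of a block of $\ell$ sites. Its hypothesis is that the ground-state projectors $G_{[1,\ell]}$ and $G_{[\ell',L]}$ of two overlapping blocks nearly commute, in the sense $\|G_{A}G_{B}-G_{A\cup B}\|_\infty\le\epsilon<1/\sqrt{\ell}$. I would verify this hypothesis by showing that the permutation states $\{|\pi\rangle^{\otimes n}\}_{\pi\in S_k}$ on $n$ qudits are nearly orthonormal once $d^n\gg k^2$. Their Gram matrix is $\langle\pi|\sigma\rangle^n=d^{n(\#\text{cycles}(\pi^{-1}\sigma)-k)}$, and it is within $O(k^2/d^n)$ of the identity. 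Taking $\ell=\Theta(\lceil\log_d(4k)\rceil)$ makes the overlap small enough, and this choice is where the factor $\lceil\log_d(4k)\rceil^2$ and a factor of $L$ (a $1/\ell$ loss from the martingale method) come from.

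The third step is to bound the gap of the block Hamiltonian $H_{\ell,k}$ on $\ell=O(\log_d k)$ qudits by a quantity of order $1/(d^2k^{5+3.1/\ln d})$. I expect this to be the main obstacle. First I would try the BHH route: compare the local random circuit on $\ell$ qudits with a Haar-random unitary on all $\ell$ qudits via a path-coupling or Markov-chain comparison argument. This gives a gap polynomial in $k$, with the exponent $5+3.1/\ln d$ arising from $d^{\ell}\sim k^{O(1/\ln d)}$ factors. A fallback is Bourgain--Gamburd spectral-gap results for $\mathrm{SU}(d^\ell)$ with algebraic gate sets; these give a $k$-independent constant gap, but with a worse dependence on $\ell$.

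The final step is to combine the three ingredients and track the numerical constants to reach the constant $42500$, which is routine but tedious.
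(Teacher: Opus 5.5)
This is essentially the route behind the statement: the paper gives no argument of its own and simply imports Theorem~5 of \cite{Brandao2016-design}, whose proof follows the chain you outline. That chain is $g(\nu_{d,L}^{*1},k)=1-\Delta(H_{L,k})/(L-1)$, then Nachtergaele's martingale bound with the near-commutation of block ground-space projectors checked via quasi-orthogonality of permutation states, then Oliveira's path-coupling estimate for the gap on a block of $\ell=\Theta(\log_d k)$ qudits. Some of your factor bookkeeping is off, though: the factor $L$ is already produced by the $1/(L-1)$ normalization in your first step, with the martingale loss $1/(\ell-1)$ supplying one power of $\lceil\log_d(4k)\rceil$; and $d^{\ell}$ with $\ell=\Theta(\log_d k)$ is $k^{\Theta(1)}$, so $k^{3.1/\ln d}$ must come from $e^{O(\ell)}$-type factors rather than from $d^{\ell}$; finally, the Bourgain--Gamburd fallback, with its non-explicit constants, could not deliver the stated $42500$.
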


\begin{lem}\label{lem:LRsufficient}
     For some positive numerical constant $C$,
    \begin{align}
        g(\nu_{d,L}^{*s}, k)\le d^{-4Lk},~k=\left\lfloor \qty(\frac{s}{CL^2 d^2\ln d})^{1/11}\right\rfloor
    \end{align}
    holds if $s\ge CL^2 d^2 \ln d$.
\end{lem}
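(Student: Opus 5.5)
The plan is to use the multiplicativity of $g$ under convolution and then make the single-step gap from Lemma~\ref{lem:LRdesign} large enough by taking $s$ big. For i.i.d. steps, $\mathbb{E}_{U\sim\nu^{*s}}U^{\otimes k,k}=(\mathbb{E}_{U\sim\nu}U^{\otimes k,k})^s$. The Haar moment operator $P_{\mathrm H}=\mathbb{E}_{U\sim \mathrm{H}}U^{\otimes k,k}$ is an orthogonal projector, and it satisfies $M P_{\mathrm H}=P_{\mathrm H}M=P_{\mathrm H}$ for $M=\mathbb{E}_{\nu}U^{\otimes k,k}$, by left/right invariance of the Haar measure. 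Hence $M^s-P_{\mathrm H}=(M-P_{\mathrm H})^s$, and therefore
\[
g(\nu^{*s}_{d,L},k)\le g(\nu^{*1}_{d,L},k)^s .
\]
Combining this with Lemma~\ref{lem:LRdesign} and $1-x\le e^{-x}$ gives
\[
g(\nu^{*s}_{d,L},k)\le \exp\!\Bigl(-\frac{s}{42500\,L\lceil\log_d(4k)\rceil^2 d^2 k^{5+3.1/\ln d}}\Bigr).
\]
It therefore suffices to show that the exponent is at least $4Lk\ln d$, i.e.
\[
s\ \ge\ 4\cdot 42500\, L^2 d^2 \ln d\;\lceil\log_d(4k)\rceil^2\, k^{6+3.1/\ln d}.
\]

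Next I bound the $k$-dependent factor by a pure power of $k$. Since $d\ge2$, $3.1/\ln d\le 3.1/\ln 2<4.5$, so $k^{6+3.1/\ln d}\le k^{10.5}$. Also $\lceil\log_d(4k)\rceil\le \log_2(4k)+1=3+\log_2 k$, so $\lceil\log_d(4k)\rceil^2\le c_0 k^{1/2}$ for all $k\ge1$ with an absolute constant $c_0$, since $(3+\log_2 k)^2/k^{1/2}$ is bounded on $k\ge1$. The right-hand side is thus at most $C L^2 d^2\ln d\, k^{11}$ with $C:=4\cdot 42500\,c_0$. This is exactly why the exponent $1/11$ appears in the definition of $k$.

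Finally, the condition $s\ge CL^2d^2\ln d$ guarantees $k=\lfloor (s/(CL^2d^2\ln d))^{1/11}\rfloor\ge1$, so $k$ is a positive integer and Lemma~\ref{lem:LRdesign} applies. By the definition of the floor, $k^{11}\le s/(CL^2d^2\ln d)$, i.e. $s\ge CL^2d^2\ln d\,k^{11}$, which is the sufficient inequality above. This yields $g\le e^{-4Lk\ln d}=d^{-4Lk}$.

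The main obstacle is the submultiplicativity step. It requires checking that $P_{\mathrm H}$ is the projector onto the common fixed space and that $M$ acts trivially on it, so the cross terms cancel. This is standard and is the argument used in Ref.~\cite{Brandao2016-design}. The remaining care is only in fixing a valid absolute constant $C$, uniformly in $d\ge2$.
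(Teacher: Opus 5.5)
Your proposal is correct and follows essentially the same route as the paper: bound $g(\nu_{d,L}^{*s},k)$ by $g(\nu_{d,L}^{*1},k)^s$, apply Lemma~\ref{lem:LRdesign} together with $(1-x)^s\le e^{-sx}$, and absorb $4\cdot 42500\lceil\log_d(4k)\rceil^2k^{6+3.1/\ln d}$ into $Ck^{11}$ uniformly in $d\ge2$. The paper does the absorption via $C\ge 170000\{\log_2(4k)+1\}^2k^{-5+3.1/\ln 2}$ and quotes the equality $g(\nu^{*s},k)=g(\nu,k)^s$ from Ref.~\cite{Brandao2016-design}, whereas you derive the inequality yourself from $M^s-P_{\mathrm H}=(M-P_{\mathrm H})^s$, which is all that is needed.
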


\begin{proof}
    First, there exists a constant $C$ such that
    \begin{align}
        C\ge 170000\{\log_2(4k)+1\}^2 k^{-5+3.1/\ln 2}
    \end{align}
    holds for all $k\ge 1$. (Note that $C=2.1\times10^{6}$ is sufficient.)
    Then, the following inequality holds for $d\ge 2$:
    \begin{align}
        s&\ge CL^2 d^2\ln d\cdot k^{11}\notag\\
        &\ge 170000L^2d^2\ln d\lceil \log_d(4k)\rceil^2 k^{6+3.1/\ln d}.
    \end{align}
    It follows from Def.~\ref{dfn:opnorm} that $g(\nu_{d,L}^{*s}, k)=g(\nu_{d,L}^{*1}, k)^s$ (which is Eq.~(3) in Ref.~\cite{Brandao2016-design}).
    Using this equation and the inequality $(1-x)^s\le e^{-sx}$ for $0\le x\le 1$ and $s>0$, we obtain
    \begin{align}
        g(\nu_{d,L}^{*s}, k)&\le \exp\qty{-s(42500L\lceil \log_d(4k)\rceil^2 d^2 k^{5+3.1/\ln d})^{-1}}\notag\\
        &\le d^{-4Lk}.
    \end{align}
\end{proof}

\begin{dfn}[Definition 2.2 of \cite{Low2009-design}]
    A monomial in elements of a unitary matrix $U$ is of degree $(k_1,k_2)$ if it contains $k_1$ conjugated elements and $k_2$ unconjugated elements.
    We call it a balanced monomial of degree $k$ if it is of degree $(k,k)$.
    A polynomial is of degree $k$ if it is a sum of balanced monomials of degree at most $k$.
\end{dfn}

\begin{dfn}[Definition 2.6 of \cite{Low2009-design}]
    $\nu$ is an $\epsilon$-approximate unitary $k$-design if, for all balanced monomials $M$ of degree $\le k$,
    \begin{align}
        \abs{\mathbb{E}_\nu M(U)-\mathbb{E}_\mathrm{H} M(U)}\le \frac{\epsilon}{D^k}.
    \end{align}
\end{dfn}

\begin{lem}[Lemma 2.2.14 of \cite{Low2010-design}]\label{lem:2norm-monomial}
    When $g(\nu,k)\le\epsilon$ holds, $\nu$ is a $D^{3k}\epsilon$-approximate $k$-design.
\end{lem}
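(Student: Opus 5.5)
The plan is to pass from the operator-norm quantity $g(\nu,k)$ of Def.~\ref{dfn:opnorm} to individual balanced monomials by writing each monomial as a matrix element of the moment operator. A balanced monomial of degree exactly $k$ has the form
\begin{equation}
M(U)=U_{i_1j_1}\cdots U_{i_kj_k}\,U^*_{i'_1j'_1}\cdots U^*_{i'_kj'_k}.
\end{equation}
This can be written as
\begin{equation}
M(U)=\bra{i_1\ldots i_k\, i'_1\ldots i'_k}\,U^{\otimes k,k}\,\ket{j_1\ldots j_k\, j'_1\ldots j'_k},
\end{equation}
where the bra and ket are computational-basis product vectors on $(\mathbb{C}^D)^{\otimes 2k}$. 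By linearity of expectation,
\begin{equation}
\mathbb{E}_\nu M(U)-\mathbb{E}_{\mathrm{H}} M(U)=\bra{a}\bigl(\mathbb{E}_\nu U^{\otimes k,k}-\mathbb{E}_{\mathrm{H}} U^{\otimes k,k}\bigr)\ket{b}
\end{equation}
for unit vectors $\ket{a},\ket{b}$. Its modulus is therefore at most the operator norm, which is $g(\nu,k)\le\epsilon$.

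This bound is already stronger than what is needed, since $\epsilon\le D^{3k}\epsilon/D^k=D^{2k}\epsilon$. Degree exactly $k$ is thus settled directly from the definition.

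The only genuine step is handling balanced monomials of degree $k'<k$, because Definition 2.6 of \cite{Low2009-design} quantifies over all degrees $\le k$ while $g$ controls only the $k$-th moment. The trick is to pad $M$ using unitarity. Since $\sum_{j}|U_{ij}|^2=1$ for every row $i$, we have
\begin{equation}
M(U)=M(U)\Bigl(\sum_{j}U_{1j}U^*_{1j}\Bigr)^{k-k'},
\end{equation}
and this identity holds for every unitary, so for both $\nu$ and Haar. Expanding the power writes $M$ as a sum of $D^{k-k'}$ balanced monomials of degree exactly $k$. Each term has difference of expectations at most $\epsilon$ by the first step, so the triangle inequality gives
\begin{equation}
\bigl|\mathbb{E}_\nu M-\mathbb{E}_{\mathrm{H}}M\bigr|\le D^{k-k'}\epsilon\le D^{k}\epsilon\le D^{2k}\epsilon=\frac{D^{3k}\epsilon}{D^k}.
\end{equation}

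The case $k'=0$, the constant monomial, is trivial because the difference is zero. The main point requiring care is this padding step; the losses there are easily absorbed by the generous factor $D^{3k}$, so I do not expect a real obstacle. This yields that $\nu$ is a $D^{3k}\epsilon$-approximate $k$-design, which is the conclusion of Lemma~\ref{lem:2norm-monomial}.
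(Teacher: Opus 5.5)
The paper gives no proof of this lemma; it imports it from Low's thesis. There is therefore no internal argument to compare against, and your proposal supplies a self-contained one, which is correct.

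A balanced monomial of degree exactly $k$ is a matrix element of $U^{\otimes k,k}$ between computational-basis unit vectors. Its expectation gap is therefore bounded by the operator norm, $g(\nu,k)\le\epsilon$. For degree $k'<k$, you pad with $\bigl(\sum_j U_{1j}U^*_{1j}\bigr)^{k-k'}=1$, which holds pointwise on $\mathbb{U}(D)$ and so under both measures. This rewrites the monomial as a sum of $D^{k-k'}$ degree-$k$ monomials, giving a gap of at most $D^{k-k'}\epsilon\le D^{k}\epsilon$.

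Your route gives a sharper constant. You actually show that $\nu$ is a $D^{2k}\epsilon$-approximate $k$-design, which implies the stated $D^{3k}\epsilon$ version. Padding instead with the unit vector $D^{-1/2}\sum_j\ket{j}\ket{j}$, which is fixed by $U\otimes U^*$, would remove the $D^{k-k'}$ loss entirely and give $D^{k}\epsilon$.

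In the paper's application, your bound would turn the output of Lem.~\ref{lem:LRsufficient} into a $d^{-2Lk}$-approximate design rather than a $d^{-Lk}$ one. That only strengthens the second term in the proof of Eq.~\eqref{eq:AppendixLR}. The citation with the looser factor buys brevity and is harmless there, because $g\le d^{-4Lk}$ leaves ample room.
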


\begin{lem}[Theorem 1.2 of \cite{Low2009-design}]\label{lem:kdesign}
    Let $f$ be a polynomial of degree $K$.
    Let us introduce $f(U)=\sum_i \alpha_i M_i(U)$, where $M_i(U)$ are monomials and $\alpha(f)=\sum_i |\alpha_i|$.
    Suppose that $f$ has the following probability concentration:
    \begin{align}
        \mathrm{Pr}_{U\sim \mathrm{H}}\qty[|f(U)-\mu|\ge \delta]\le Ae^{-b\delta^2},
    \end{align}
    where $\mathrm{H}$ denotes the Haar measure.
    Let $\nu$ be an $\epsilon$-approximate $k$-design.
    Then we have
    \begin{align}
        \mathrm{Pr}_{U\sim \nu}\qty[|f(U)-\mu|\ge \delta]\le \delta^{-2m} \qty{A\qty(\frac{m}{b})^m+\frac{\epsilon}{D^k}(\alpha(f)+|\mu|)^{2m}}
    \end{align}
    for a positive integer $m$ with $2mK\le k$.
\end{lem}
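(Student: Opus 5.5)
This is a moment-method statement, and the plan is to combine Markov's inequality for the $2m$-th moment with the fact that this moment is itself a low-degree polynomial in $U$. First, for any distribution and any $m\in\mathbb{Z}_{>0}$, Markov's inequality applied to the nonnegative variable $|f(U)-\mu|^{2m}$ gives
\begin{align}
\mathrm{Pr}_{U\sim\nu}\qty[|f(U)-\mu|\ge\delta]\le \delta^{-2m}\,\mathbb{E}_{U\sim\nu}|f(U)-\mu|^{2m}.
\end{align}
So it suffices to show
\begin{align}
\mathbb{E}_\nu|f-\mu|^{2m}\le A(m/b)^m+\frac{\epsilon}{D^k}(\alpha(f)+|\mu|)^{2m}.
\end{align}
We do this by writing $\mathbb{E}_\nu=\mathbb{E}_\mathrm{H}+(\mathbb{E}_\nu-\mathbb{E}_\mathrm{H})$ and bounding the two pieces separately.

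\emph{Haar piece.} By the layer-cake formula and the assumed concentration,
\begin{align}
\mathbb{E}_\mathrm{H}|f-\mu|^{2m}=\int_0^\infty \mathrm{Pr}_\mathrm{H}\qty[|f-\mu|\ge t^{1/(2m)}]\,dt\le \int_0^\infty A e^{-b t^{1/m}}dt .
\end{align}
Substituting $t=u^m$ gives $A\,m\,\Gamma(m)/b^m=A\,m!/b^m\le A(m/b)^m$, since $m!\le m^m$.

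\emph{Design piece.} We use that $|f-\mu|^{2m}=(f-\mu)^m\,\overline{(f-\mu)}^{\,m}$ is a polynomial in the entries of $U$.
\begin{itemize}
\item Each monomial $M_i$ of $f$ is balanced of degree $\le K$, and so is the constant $\mu$ (degree $0$). The complex conjugate of a balanced monomial is again balanced of the same degree.
\item Hence expanding the product yields a sum of products of $2m$ balanced monomials, each of degree $\le K$. Every such product is a balanced monomial of degree $\le 2mK\le k$.
\item The absolute values of the coefficients in this expansion sum to at most $(\alpha(f)+|\mu|)^{2m}$. This holds because the coefficient $\ell^1$-norm is submultiplicative under products and subadditive under sums, so it is bounded by the product of the $\ell^1$-norms of the $2m$ factors, each at most $\alpha(f)+|\mu|$.
\item Applying the definition of an $\epsilon$-approximate $k$-design monomial by monomial and using the triangle inequality gives
\begin{align}
\abs{\mathbb{E}_\nu|f-\mu|^{2m}-\mathbb{E}_\mathrm{H}|f-\mu|^{2m}}\le \frac{\epsilon}{D^k}(\alpha(f)+|\mu|)^{2m}.
\end{align}
\end{itemize}
Combining the two pieces with Markov's inequality yields the claim.

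The main point requiring care is the bookkeeping in the design piece. One must check that every monomial produced by the expansion is genuinely balanced of degree at most $k$, so that the design condition applies to it; the hypothesis $2mK\le k$ is exactly what guarantees this. One must also check that the coefficient sum is controlled by $(\alpha(f)+|\mu|)^{2m}$ even though different terms of the expansion may coincide as monomials. Merging coinciding terms can only decrease the $\ell^1$-norm of the coefficients, so the bound survives, and the rest of the argument is routine.
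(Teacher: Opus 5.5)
Your proof is correct and is essentially the standard argument of the cited source (Low 2009), which the paper imports without proof. That argument applies Markov's inequality to the $2m$-th moment, bounds the Haar moment by $A\,m!/b^m\le A(m/b)^m$ via the layer-cake formula, and controls the design error by expanding $|f-\mu|^{2m}=(f-\mu)^m\overline{(f-\mu)}^{\,m}$ into balanced monomials of degree at most $2mK\le k$ whose coefficients have total absolute value at most $(\alpha(f)+|\mu|)^{2m}$; the only implicit point is that $\alpha(f)$ refers to the representation of $f$ by balanced monomials of degree at most $K$, which is how you used it.
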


\begin{prop}[Eq.~\eqref{eq:LRProb} in the main text]
    \begin{align}
        \mathrm{Pr}_{U\sim \nu^{*s}_{d,L}}\qty[\abs{\mel{\psi}{U^\dagger O U}{\psi}-\frac{\Tr[O]}{d^L}}\ge \Delta_O a]\le 2\qty(\frac{m}{d^L a^2})^m\label{eq:AppendixLR}
    \end{align}
    holds for an integer $m\in\mathbb{Z}_{\ge0}$ that satisfies
    \begin{align}
        m\le \frac{k}{2},~k=\left\lfloor \qty(\frac{s}{CL^2 d^2\ln d})^{1/11}\right\rfloor,
    \end{align}
    where $C$ is the constant defined in Lem.~\ref{lem:LRsufficient}.
\end{prop}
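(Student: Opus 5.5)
The plan is to apply Lemma~\ref{lem:kdesign} to the degree-one polynomial
\[
f(U)=\mel{\psi}{U^\dagger O U}{\psi},\qquad \mu=\frac{\Tr[O]}{d^L},\qquad \delta=\Delta_O a .
\]
The ingredients are Haar concentration for $f$, the approximate-design property of $\nu^{*s}_{d,L}$ from Lemmas~\ref{lem:LRsufficient} and \ref{lem:2norm-monomial}, and a check that the error term is negligible.

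\textbf{Step 0 (trivial cases).} The case $m=0$ is trivial, since the right-hand side is $2\ge 1$. For $m\ge1$ we have $k\ge 2m\ge 2$, so $s\ge CL^2d^2\ln d$ and Lemma~\ref{lem:LRsufficient} applies.

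\textbf{Step 1 (Haar concentration).} Note that $f$ is a sum of balanced monomials of degree $K=1$ in the entries of $U$. Assume without loss of generality that $O$ is shifted so that its spectrum lies in an interval of length $\Delta_O$; the shift cancels in $f-\mu$.

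Under the Haar measure, $U\ket{\psi}$ is uniform on the unit sphere of $\mathbb{C}^{D}\cong\mathbb{R}^{2D}$, with $D=d^L$. The function $\phi\mapsto\mel{\phi}{O}{\phi}$ has mean $\mu$ and Lipschitz constant at most $\Delta_O$ after centering $O$ at the midpoint of its spectrum. Levy's lemma with the optimal constants of Ref.~\cite{Aubrun2024-design} then gives
\[
\mathrm{Pr}_{U\sim\mathrm H}\bigl[|f-\mu|\ge\delta\bigr]\le 2e^{-D\delta^2/\Delta_O^2}.
\]
That is, $A=2$ and $b=D/\Delta_O^2$. This gives
\[
\delta^{-2m}A\Bigl(\frac{m}{b}\Bigr)^m = 2\Bigl(\frac{m}{d^L a^2}\Bigr)^m,
\]
which is exactly the target bound. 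The main obstacle is getting precisely these constants: sphere concentration must be stated for $S^{2D-1}$ with constant $D$ in the exponent and prefactor $2$. If the available form is weaker, I would absorb the discrepancy into the numerical constant. However, the statement as written needs the sharp form from Ref.~\cite{Aubrun2024-design}.

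\textbf{Step 2 (design error term).} By Lemma~\ref{lem:LRsufficient}, $g(\nu^{*s}_{d,L},k)\le d^{-4Lk}$. Lemma~\ref{lem:2norm-monomial} then makes $\nu^{*s}_{d,L}$ an $\epsilon$-approximate $k$-design with
\[
\epsilon=D^{3k}d^{-4Lk}=d^{-Lk}.
\]
Since $2mK=2m\le k$, Lemma~\ref{lem:kdesign} applies. Its error term is
\[
\delta^{-2m}\,\frac{\epsilon}{D^k}\,\bigl(\alpha(f)+|\mu|\bigr)^{2m}.
\]

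To bound this, write $f=\sum_{i,j,l,n}\bar\psi_i\bar U_{ji}O_{jl}U_{ln}\psi_n$. Then
\[
\alpha(f)\le \|\psi\|_1^2\sum_{j,l}|O_{jl}|\le D\cdot D^{2}\|O\|_\infty ,
\]
using $\|\psi\|_1^2\le D$. So $\alpha(f)+|\mu|\le 2D^3\|O\|_\infty$, which is polynomial in $D$ of fixed degree. The error term is therefore at most
\[
\Bigl(\frac{2D^3\|O\|_\infty}{\Delta_O a}\Bigr)^{2m}D^{-2k}\le\Bigl(\frac{2D^3\|O\|_\infty}{\Delta_O a}\Bigr)^{2m}D^{-4m}.
\]
This is where the slack is needed. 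Choosing $O$ centered gives $\|O\|_\infty=\Delta_O/2$. I would then try to show the term is dominated by, or absorbable into, $2(m/(Da^2))^m$.

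If the crude $\alpha$ bound leaves too large a power of $D$, I would instead use a better bound $\alpha(f)\lesssim D\,\|O\|$, obtained by working in the eigenbasis of $O$ and the basis in which $\psi$ is a basis vector; this is legitimate by unitary invariance of both measures. Alternatively, one could strengthen the exponent $4Lk$ in Lemma~\ref{lem:LRsufficient} via $C$. Either way, the error term becomes of order $(D^{-1}/a)^{2m}$ up to constants, and combining the two terms gives Eq.~\eqref{eq:AppendixLR}, possibly after adjusting $C$.
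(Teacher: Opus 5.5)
There is a genuine gap in how you allocate the factor $2$. In Step~1 you take the Haar concentration with prefactor $A=2$, so the main term of Lemma~\ref{lem:kdesign}, $\delta^{-2m}A(m/b)^m$, already equals the entire right-hand side $2(m/(d^La^2))^m$. Lemma~\ref{lem:kdesign} then adds a strictly positive design-error term. You can therefore only conclude $2(m/(d^La^2))^m+(\text{error})$, which exceeds the claimed bound.

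Adjusting $C$ cannot repair this. A larger $C$ shrinks the error term but never makes it vanish, and the Haar term does not depend on $C$ at all.

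The paper instead uses the two-sided Levy bound \emph{without} the factor $2$: $\mathrm{Pr}_{U\sim\mathrm{H}}[|f_O(U)|\ge a]\le e^{-d^La^2}$ for traceless $O$ with $\Delta_O=1$. This prefactor-one form is what the optimal-constant reference is invoked for. With $A=1$ the Haar term is $(m/(d^La^2))^m$. The second copy of $(m/(d^La^2))^m$ comes from the error term: with $\alpha(f_O)\le d^{3L/2}$ and $\epsilon=d^{-Lk}$, the error term is $a^{-2m}(d^L)^{3m-2k}$. Since $k\ge 2m$ and $m\ge1$, this is at most $a^{-2m}(d^L)^{-m}\le (m/(d^La^2))^m$.

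Your main line in Step~2 also does not close. With $\alpha(f)\lesssim D^3$ the error term is of order $a^{-2m}D^{6m-2k}$, and $k\ge 2m$ only reduces this to $a^{-2m}D^{2m}$, which grows with $L$.

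Your fallback is the right fix and matches what the paper does implicitly. Take $\ket{\psi}$ to be a basis vector, so that $f_O(U)=\sum_{i,j}\bar U_{i1}O_{ij}U_{j1}$ and $\alpha(f_O)=\sum_{i,j}|O_{ij}|\le d^L\|O\|_F\le d^{3L/2}$.

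However, the justification is not "unitary invariance of both measures." The local random circuit measure $\nu^{*s}_{d,L}$ is not invariant under global unitaries. What is invariant is the quantity $g(\nu,k)$ of Definition~\ref{dfn:opnorm}: for fixed unitaries $V,W$, $\mathbb{E}(VUW)^{\otimes k,k}=V^{\otimes k,k}\,\mathbb{E}[U^{\otimes k,k}]\,W^{\otimes k,k}$ holds under both $\nu$ and Haar, and the operator norm of the difference is unchanged. The rotated measure is therefore still a $d^{-Lk}$-approximate $k$-design by Lemmas~\ref{lem:LRsufficient} and~\ref{lem:2norm-monomial}, and Lemma~\ref{lem:kdesign} can be applied in the rotated basis.
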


\begin{proof}
    
    Without loss of generality, we assume that $O$ is traceless and $\Delta_O=1$.
    By Levy's lemma~\cite{Aubrun2024-design}, we have
    \begin{align}
        \mathrm{Pr}_{U\sim \mathrm{H}}\qty[\abs{\mel{\psi}{U^\dagger O U}{\psi}}\ge a]\le e^{-d^L a^2}.
    \end{align}
    We introduce $f_O(U)=\mel{\psi}{U^\dagger OU}{\psi}$ and apply Lem.~\ref{lem:kdesign} with $D=d^L$, $K=1$, and $\mu=0$.
    Also, $\alpha(f_O)= \sum_{i,j}|O_{i,j}|\le d^L \|O\|_F\le d^{3L/2}$ holds from the Cauchy-Schwarz inequality, where $\|X\|_F=\sqrt{\Tr[X^\dagger X]}$ is the Frobenius norm.
    It follows from Lems.~\ref{lem:LRsufficient} and~\ref{lem:2norm-monomial} that $\nu_{d,L}^{*s}$ is a $d^{-Lk}$-approximate $k$-design.
    Therefore, for $1\le m\le k/2$, we obtain
    \begin{align}
        \mathrm{Pr}_{U\sim \nu_{d,L}^{*s}}\qty[\abs{\mel{\psi}{U^\dagger O U}{\psi}}\ge a]&\le a^{-2m}\qty{\qty(\frac{m}{d^L})^m+(d^L)^{3m-2k}}\notag\\
        &\le 2\qty(\frac{m}{d^La^2})^m.
    \end{align}
    The inequality holds trivially for $m=0$ if we define $0^0=1$.

\end{proof}

\begin{rem}
    The right-hand side of Eq.~\eqref{eq:AppendixLR} takes its minimal value $\simeq 2 e^{-d^La^2/e}$ at $m=m_0\simeq d^L a^2/e$.
    Therefore, when $k\ge 2m_0$, the upper bound becomes double-exponentially suppressed with respect to $L$.
\end{rem}

\section{Covering number of $\epsilon$-net for matrix product states}
\label{Appendix:MPSnet}

We consider a one-dimensional chain of $L$ sites, each with a local dimension $d$.
Let $A=(A_i^{\sigma_i})_{i,\sigma_i}$ be a tensor, where $A_i^{\sigma_i}$ is a $D_i\times D_{i+1}$ complex matrix with $D_i=\min\{d^{i-1}, d^{L-i+1}, \chi\}$.
We define $\mathcal{T}_\chi$ as a set of such tensors and $\mathcal{S}_\chi$ as a set of $A\in \mathcal{T}_\chi$ in the following canonical form:
\begin{align}
    \mathcal{S}_\chi\coloneqq\qty{A\in \mathcal{T}_\chi \mid \forall i\in \{1,\ldots, L\}~\sum_{\sigma=1}^d A_i^{\sigma}A_i^{\sigma \dagger}=I_{D_i}}.
\end{align}
Let $\mathcal{M}_\chi$ be a set of normalized (OBC-)MPSs with a bond dimension less than or equal to $\chi$.
We define a map $\ket{\psi_\bullet}$ from a tensor to a state:
\begin{align}
    \ket{\psi_\bullet}:~\mathcal{S}_\chi\to\mathcal{M}_\chi,~A\mapsto \sum_{\{\sigma_i\}}A_1^{\sigma_1}\cdots A_L^{\sigma_L}\ket{\sigma_1\cdots\sigma_L}.
\end{align}
Note that an MPS in the canonical form is already normalized.
The map $\ket{\psi_\bullet}$ is surjective since any quantum state $\ket{\Psi}\in\mathcal{M}_\chi$ can be represented in the canonical form~\cite{Perez-Garcia2007-mps}.

For a tensor $A\in \mathcal{T}_\chi$, we define its norm as follows:
\begin{align}
    \|A\|\coloneqq\sqrt{\sum_{i=1}^L \sum_{\sigma_i=1}^d \|A_i^{\sigma_i}\|_F^2}.
\end{align}
Then, any tensor $(A_i^{\sigma})_{\sigma=1}^d$ in the canonical form satisfies the following property:
\begin{align}
    \sum_{\sigma=1}^d \norm{XA_i^{\sigma}}_F^2=\norm{X}_F^2.\label{eq:can-inv}
\end{align}
Substituting $X=I_{D_i}$ in Eq.~\eqref{eq:can-inv}, we have
\begin{align}
    A\in \mathcal{S}_\chi\Rightarrow \|A\|\le R_\chi\coloneqq \sqrt{L\chi}.\label{eq:Rchi}
\end{align}

Now, we show the following lemma.
\begin{lem}\label{lem:Lipschitz}
    The map $\ket{\psi_\bullet}$ is $\sqrt{L}$-Lipschitz, that is,
    \begin{align}
        \forall A,B\in\mathcal{S}_\chi,~\|\ket{\psi_A}-\ket{\psi_B}\|\le \sqrt{L}\|A-B\|.
    \end{align}
\end{lem}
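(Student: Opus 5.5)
The plan is to use a telescoping (hybrid) decomposition combined with the isometry property in Eq.~\eqref{eq:can-inv}. For $A,B\in\mathcal{S}_\chi$, write
\begin{align}
    \ket{\psi_A}-\ket{\psi_B}=\sum_{j=1}^L \ket{\phi_j},\qquad \ket{\phi_j}\coloneqq\sum_{\{\sigma_i\}} B_1^{\sigma_1}\cdots B_{j-1}^{\sigma_{j-1}}(A_j^{\sigma_j}-B_j^{\sigma_j})A_{j+1}^{\sigma_{j+1}}\cdots A_L^{\sigma_L}\ket{\sigma_1\cdots\sigma_L}.
\end{align}
By the triangle inequality and Cauchy--Schwarz, $\|\ket{\psi_A}-\ket{\psi_B}\|\le\sum_j\|\ket{\phi_j}\|\le\sqrt{L}\,\bigl(\sum_j\|\ket{\phi_j}\|^2\bigr)^{1/2}$. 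It therefore suffices to show $\|\ket{\phi_j}\|^2\le\sum_{\sigma}\|A_j^{\sigma}-B_j^{\sigma}\|_F^2$, because summing over $j$ then gives $\|A-B\|^2$.

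To bound $\|\ket{\phi_j}\|^2$, I would fix the physical indices $\sigma_1,\dots,\sigma_j$ and set $X\coloneqq B_1^{\sigma_1}\cdots B_{j-1}^{\sigma_{j-1}}(A_j^{\sigma_j}-B_j^{\sigma_j})$, which is a $1\times D_{j+1}$ row vector. Summing $\|XA_{j+1}^{\sigma_{j+1}}\cdots A_L^{\sigma_L}\|_F^2$ over $\sigma_{j+1},\dots,\sigma_L$ and applying Eq.~\eqref{eq:can-inv} repeatedly from site $j+1$ up to site $L$ shows that the right factors preserve the Frobenius norm. This leaves $\|\ket{\phi_j}\|^2=\sum_{\sigma_1,\dots,\sigma_j}\|X\|_F^2$.

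For the left factors, which are row vectors multiplied from the left, I would use $\|YM\|_F\le\|Y\|_F\|M\|_\infty$. The canonical condition gives $\|A_i^\sigma\|_\infty\le1$, but a bound on each $\sigma$ separately is not enough, because the sum over $\sigma_1,\dots,\sigma_{j-1}$ would then produce a factor $d^{j-1}$.

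This left-side contraction is the main obstacle. The condition $\sum_\sigma A_i^\sigma A_i^{\sigma\dagger}=I$ makes the map $Y\mapsto(YA^\sigma)_\sigma$ an isometry, so it controls multiplication on the right, not on the left. What is actually needed is $\sum_\sigma\|Y_\sigma B_i^\sigma\|_F^2$-type control with $B$ acting on the right of a sum over a left environment. I would recast it this way. Set $Y_{\sigma_1\cdots\sigma_{j-1}}=B_1^{\sigma_1}\cdots B_{j-1}^{\sigma_{j-1}}$ and, for each $\sigma_j$, let $E=A_j^{\sigma_j}-B_j^{\sigma_j}$. Then
\begin{align}
    \sum_{\sigma_{<j}}\|Y_{\sigma_{<j}}E\|_F^2=\Tr[E^\dagger\rho_{j}E],\qquad \rho_j\coloneqq\sum_{\sigma_{<j}}Y_{\sigma_{<j}}^\dagger Y_{\sigma_{<j}}.
\end{align}
So it suffices to show $\|\rho_j\|_\infty\le1$. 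In the gauge that makes $\ket{\psi_B}$ normalized with right-isometric tensors, $\rho_j$ is the left environment. Its trace-type normalization, $\Tr[\rho_j\cdot(\text{right environment}=I)]$, is bounded by the norm of a state, and $\rho_j\ge0$. The argument I would try first is to write $\rho_{j+1}=\sum_\sigma B_j^{\sigma\dagger}\rho_jB_j^\sigma$ and bound its largest eigenvalue by induction via $\langle v|\rho_{j+1}|v\rangle\le\|\rho_j\|_\infty\sum_\sigma\|B_j^\sigma v\|^2$. Closing this induction requires $\sum_\sigma B_j^{\sigma\dagger}B_j^\sigma\le I$. That inequality is not automatic from right-canonicity, so I expect this to be where the real care is needed: either use the boundary dimension $D_1=1$ together with normalization to show that $\rho_j$ is a reduced density matrix of the normalized state, hence has eigenvalues $\le1$, or adjust the argument accordingly.

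Once $\|\rho_j\|_\infty\le1$, summing over $\sigma_j$ gives $\|\ket{\phi_j}\|^2\le\sum_{\sigma_j}\|A_j^{\sigma_j}-B_j^{\sigma_j}\|_F^2$, and the Cauchy--Schwarz step completes the $\sqrt{L}$-Lipschitz bound.
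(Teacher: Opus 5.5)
Your architecture matches the paper's. You use a telescoping sum (with $B$ on the left and $A$ on the right, the reverse of the paper, which is immaterial since both lie in $\mathcal{S}_\chi$). You remove the right factors by Eq.~\eqref{eq:can-inv}, peeling from site $L$ inward. You then reduce to $\Tr[E^\dagger\rho_jE]$, where $\rho_j$ is the left environment, i.e.\ the paper's $P_i$ in Eq.~\eqref{eq:P_i}.

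The gap is the step you flag yourself: you never prove $\rho_j\preceq I_{D_j}$, and the argument you would try first cannot be completed. The inequality it needs, $\sum_\sigma B_j^{\sigma\dagger}B_j^\sigma\preceq I_{D_{j+1}}$, is false in general. This $D_{j+1}\times D_{j+1}$ matrix has trace $\Tr\sum_\sigma B_j^\sigma B_j^{\sigma\dagger}=D_j$. So whenever $D_{j+1}<D_j$, which happens toward the right end of the chain, some eigenvalue exceeds $1$ and the operator-norm induction breaks.

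The missing idea is that you need no operator-level control, only the trace. Since $D_1=1$, the trace $\Tr\rho_j=\sum_{\sigma_1,\ldots,\sigma_{j-1}}\|B_1^{\sigma_1}\cdots B_{j-1}^{\sigma_{j-1}}\|_F^2$ is itself a right-multiplication quantity. Applying Eq.~\eqref{eq:can-inv} successively at sites $j-1,j-2,\ldots,1$ gives $\Tr\rho_j=\|I_{D_1}\|_F^2=1$. A positive semidefinite matrix with unit trace has all eigenvalues in $[0,1]$. Hence $\rho_j\preceq I_{D_j}$ and $\Tr[E^\dagger\rho_jE]\le\|E\|_F^2$, which closes your argument exactly as in the paper.

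Your remark that the canonical condition ``controls multiplication on the right, not on the left'' is what pushed you toward the failing induction. In fact the trace of the left environment is computed by right multiplication, seeded at the $1\times1$ left boundary. Your fallback, identifying $\rho_j$ with a reduced density matrix of the normalized state $\ket{\psi_B}$, is correct and states the same fact in different language. As written, though, the proposal leaves the crucial inequality unproved.
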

\begin{proof}
    By employing the telescoping sum, we have
    \begin{align}
        &~\norm{\ket{\psi_A}-\ket{\psi_B}}\notag\\
        =&~\norm{\sum_{\{\sigma_i\}} (A_1^{\sigma_1}\cdots A_L^{\sigma_L}- B_1^{\sigma_1}\cdots B_L^{\sigma_L})\ket{\sigma_1\cdots\sigma_L}}\notag\\
        =&~\norm{\sum_{i=1}^L\sum_{\{\sigma_i\}} A_1^{\sigma_1}\cdots A_{i-1}^{\sigma_{i-1}}(A_i^{\sigma_i}-B_i^{\sigma_i})B_{i+1}^{\sigma_{i+1}}\cdots B_L^{\sigma_L}\ket{\sigma_1\cdots\sigma_L}}\notag\\
        \le&~ \sum_{i=1}^L \norm{\sum_{\{\sigma_i\}} A_1^{\sigma_1}\cdots A_{i-1}^{\sigma_{i-1}}(A_i^{\sigma_i}-B_i^{\sigma_i})B_{i+1}^{\sigma_{i+1}}\cdots B_L^{\sigma_L}\ket{\sigma_1\cdots\sigma_L}}.
    \end{align}
    Let us define
    \begin{align}
        \Delta_i^{\sigma_i}&\coloneqq A_i^{\sigma_i}-B_i^{\sigma_i},\\
        \ket{\delta_i \psi}&\coloneqq \sum_{\{\sigma_i\}}A_1^{\sigma_1}\cdots A_{i-1}^{\sigma_{i-1}}\Delta_i^{\sigma_i}B_{i+1}^{\sigma_{i+1}}\cdots B_L^{\sigma_L}\ket{\sigma_1\cdots\sigma_L}.
    \end{align}
    Then it suffices to evaluate $\sum_{i=1}^L \norm{\ket{\delta_i \psi}}$.
    By repeatedly using Eq.~\eqref{eq:can-inv}, we obtain
    \begin{align}
        \norm{\ket{\delta_i\psi}}^2&=\sum_{\{\sigma_i\}}\norm{A_1^{\sigma_1}\cdots A_{i-1}^{\sigma_{i-1}}\Delta_i^{\sigma_i}B_{i+1}^{\sigma_{i+1}}\cdots B_L^{\sigma_L}}^2_F\notag\\
        &= \sum_{\sigma_1,\ldots,\sigma_i}\norm{A_1^{\sigma_1}\cdots A_{i-1}^{\sigma_{i-1}}\Delta_i^{\sigma_i}}^2_F.
    \end{align}
    Then, we define
    \begin{align}
        P_i\coloneqq\sum_{\sigma_1,\ldots,\sigma_{i-1}}(A_1^{\sigma_1}\cdots A_{i-1}^{\sigma_{i-1}})^\dagger (A_1^{\sigma_1}\cdots A_{i-1}^{\sigma_{i-1}}).\label{eq:P_i}
    \end{align}
    By construction, $P_i$ is positive semidefinite, and from Eq.~\eqref{eq:can-inv} we have $\Tr[P_i]=1$.
    Therefore, $O_{D_i}\preceq P_i \preceq I_{D_i}$ holds, where $X\preceq Y$ means that $Y-X$ is positive semidefinite.
    It follows that
    \begin{align}
        \norm{\ket{\delta_i\psi}}^2&=\sum_{\sigma_i} \Tr\qty[\Delta_i^{\sigma_i\dagger}P_i\Delta_i^{\sigma_i}]\notag\\
        &\le \sum_{\sigma_i} \Tr[\Delta_i^{\sigma_i\dagger}\Delta_i^{\sigma_i}]\notag\\
        &=\sum_{\sigma_i}\norm{\Delta_i^{\sigma_i}}_F^2.
    \end{align}
    Finally, by using the Cauchy-Schwarz inequality, we obtain
    \begin{align}
        \norm{\ket{\psi_A}-\ket{\psi_B}}&\le\sqrt{L}\sqrt{\sum_{i=1}^L \norm{\ket{\delta_i \psi}}^2}\notag\\
        &\le\sqrt{L}\sqrt{\sum_{i=1}^L\sum_{\sigma_i}\norm{\Delta_i^{\sigma_i}}_F^2}\notag\\
        &=\sqrt{L}\|A-B\|.\label{eq:Lipschitz}
    \end{align}

\end{proof}

We fix $\epsilon\in(0,2]$ and define $\mathcal{N}_{\chi,\epsilon}$ to be one of the maximal $\epsilon/\sqrt{L}$-separated subsets of $\mathcal{S}_\chi$, that is, for any $P,Q\in\mathcal{N}_{\chi,\epsilon}$ with $P\neq Q$, we have $\|P-Q\|>\epsilon/\sqrt{L}$.
Now, for any $\ket{\Psi}\in\mathcal{M}_\chi$ there exists $A\in \mathcal{S}_\chi$ such that $\ket{\psi_A}=\ket{\Psi}$ due to the surjectivity of $\ket{\psi_\bullet}$.
Then, there exists $A'\in\mathcal{N}_{\chi,\epsilon}$ such that $\|A-A'\|\le \epsilon/\sqrt{L}$ because otherwise $\mathcal{N}_{\chi,\epsilon}\cup\{A\}$ would become a larger $\epsilon/\sqrt{L}$-separated set of $\mathcal{S}_\chi$.
It follows from Lem.~\ref{lem:Lipschitz} that $\|\ket{\Psi}-\ket{\psi_{A'}}\|=\|\ket{\psi_A}-\ket{\psi_{A'}}\|\le \epsilon$.
Therefore, $\ket{\psi_{\mathcal{N}_{\chi,\epsilon}}}\coloneqq \{\ket{\psi_A}\mid A\in \mathcal{N}_{\chi,\epsilon}\}$ is an $\epsilon$-net of $\mathcal{M}_\chi$.

A tensor $A=(A_i^{\sigma_i})_{i,\sigma_i}\in\mathcal{T}_\chi$ has $d\sum_{i=1}^L D_i D_{i+1}$ complex matrix elements.
For each $A\in\mathcal{T}_\chi$, we define an $f_\chi\coloneqq 2d\sum_{i=1}^L D_i D_{i+1}$-dimensional real vector $\bm{v}_A$ whose elements consist of the real and imaginary parts of the matrix elements of $A$.
Let $B(\bm{v},r)$ be an $f_\chi$-dimensional ball of radius $r$ which is centered at $\bm{v}$.
Then, it follows from $\|\bm{v}_A\|=\|A\|$ and Eq.~\eqref{eq:Rchi} that
\begin{align}
    A\in \mathcal{S}_\chi\implies \bm{v}_A\in B(\bm{0},R_\chi).
\end{align}
Since balls in the set $\{B(\bm{v}_A,\epsilon/2\sqrt{L})\mid A\in \mathcal{N}_{\chi,\epsilon}\}$ are disjoint and contained in the ball $B(\bm{0},R_\chi+\epsilon/2\sqrt{L})$, we have by a volume argument
\begin{align}
    |\mathcal{N}_{\chi,\epsilon}|&\le \frac{(R_\chi+\epsilon/2\sqrt{L})^{f_\chi}}{(\epsilon/2\sqrt{L})^{f_\chi}}.
\end{align}
It follows from $\abs{\ket{\psi_{\mathcal{N}_{\chi,\epsilon}}}}\le |\mathcal{N}_{\chi,\epsilon}|$ and $f_\chi\le 2dL\chi^2$ that 
\begin{align}
    \abs{\ket{\psi_{\mathcal{N}_{\chi,\epsilon}}}}\le \qty(\frac{3L\sqrt{\chi}}{\epsilon})^{2dL\chi^2}.\label{eq:AppendixMPSnet}
\end{align}

\section{Derivation of Eq.~\eqref{eq:LRMPS}}

We denote
\begin{align}
    P_{\chi,a}\coloneqq \mathrm{Pr}_{U\sim \nu_{d,L}^{*s}}\sup_{\ket{\psi}\in\mathcal{M}_\chi}\qty[\abs{\mel{\psi}{U^\dagger O U}{\psi}-\frac{\Tr[O]}{d^L}}\ge \Delta_O a]
\end{align}
as the probability of finding a burst that is larger than or equal to $\Delta_Oa$ starting from one of the pure states in $\mathcal{M}_\chi$.
Using $|\mel{\psi}{O}{\psi}-\mel{\psi'}{O}{\psi'}|\le \Delta_O \|\ket{\psi}-\ket{\psi'}\|$ and applying the union bound to Eqs.~\eqref{eq:AppendixLR} and \eqref{eq:AppendixMPSnet}, we have
\begin{align}
    P_{\chi,a}&\le \mathrm{Pr}_{U\sim \nu^{*s}_{d,L}}\qty[\exists \ket{\psi'}\in\ket{\psi_{\mathcal{N}_{\chi,\epsilon}}}~\mathrm{s.t.}~ \abs{\mel{\psi'}{U^\dagger O U}{\psi'}-\frac{\Tr[O]}{d^L}}\ge \Delta_O (a-\epsilon)]\notag\\
    &\le\abs{\ket{\psi_{\mathcal{N}_{\chi,\epsilon}}}}~\mathrm{Pr}_{U\sim \nu^{*s}_{d,L}}\qty[\abs{\mel{\psi'}{U^\dagger O U}{\psi'}-\frac{\Tr[O]}{d^L}}\ge \Delta_O (a-\epsilon)]\notag\\
    &\le 2\qty(\frac{3L\sqrt{\chi}}{\epsilon})^{2dL\chi^2}\qty{\frac{m}{d^L(a-\epsilon)^2}}^m
\label{eq:Unionbound}
\end{align}
for all $0<\epsilon<a$.
By minimizing the right-hand side with respect to $\epsilon$, we obtain
\begin{align}
    P_{\chi,a}&\le 2\qty(\frac{3}{d\chi^{3/2}})^{2dL\chi^2}\qty(\frac{1}{md^L})^m\qty(\frac{dL\chi^2+m}{a})^{2(dL\chi^2+m)}.
\end{align}
Since $d$ and $\chi$ do not depend on the system size, the logarithm of $P_{\chi,a}$ is approximately bounded as
\begin{align}
    \ln P_{\chi,a}\lesssim -m\ln(md^L)+2(dL\chi^2+m)\ln\frac{dL\chi^2+m}{a}.
\end{align}
Equation~\eqref{eq:LRMPS} in the main text is thus proved.

\end{document}